\newtheorem{theorem}{Theorem}[section]
\newtheorem{lemma}[theorem]{Lemma}
\newtheorem{claim}[theorem]{Claim}
\newtheorem{corollary}[theorem]{Corollary}
\newtheorem{proposition}[theorem]{Proposition}
\newtheorem{remark}[theorem]{Remark}
\newtheorem{example}[theorem]{Example}
\newtheorem{definition}[theorem]{Definition}
\newcommand{\qed}{\null\hfill$\square$}
\newenvironment{proof}{\textit{Proof:}\ }{}
\newcommand*{\transpose}{%
  {^{\mathpalette\@transpose{}}}%
}
\newcommand*{\@transpose}[2]{%
  \raisebox{\depth}{$\m@th#1\intercal$}%
}
 \DeclareMathOperator{\lr}{{hlr}}
\newcommand{\marked}{\mu}
\newcommand{\unmarked}{\bar\mu}
 \DeclareMathOperator{\real}{real}
\newcommand{\bN}{{{\mathbb N}}}
\newcommand{\bZ}{{{\mathbb Z}}}
\newcommand{\cA}{{\ensuremath{\mathcal{A}}}}
\newcommand{\cF}{{\ensuremath{\mathcal{F}}}}
\newcommand{\cT}{{\ensuremath{\mathcal{T}}}}
\newcommand{\cB}{{\ensuremath{\mathcal{B}}}}
 \DeclareMathOperator{\dist}{dist}
\newcommand{\sem}[1]{\llbracket #1 \rrbracket}
 \newcommand{\quants}[1]{\ensuremath{\mathbf{#1}}}
 \newcommand{\Ps}{\quants{P}}
 \newcommand{\Es}{\quants{E}}
 \newcommand{\Us}{\quants{U}}
 \DeclareMathOperator{\sph}{sph}
 \newcommand{\ov}[1]{\ensuremath{\overline{#1}}}
 \newcommand*{\size}[1]{\ensuremath{|\!|#1|\!|}}
 \newcommand{\bigO}{\mathcal{O}}
 \newcommand{\True}{\ensuremath{\textit{true}}}
 \newcommand{\False}{\ensuremath{\textit{false}}}
 \DeclareMathOperator{\ar}{ar}
 \DeclareMathOperator{\br}{br} 
 \DeclareMathOperator{\bw}{bw} 
  \DeclareMathOperator{\nqr}{nqr} 
 \DeclareMathOperator{\free}{free}
 \DeclareMathOperator{\poly}{poly}
\newcommand{\nc}[1]{\newcommand{#1}}
\newcommand{\rnc}[1]{\renewcommand{#1}}
\nc{\true}{\True}
\nc{\false}{\False}
\nc{\deff}{:=}
\rnc{\leq}{\leqslant}
\rnc{\geq}{\geqslant}
\rnc{\le}{\leqslant}
\rnc{\ge}{\geqslant}
\rnc{\phi}{\varphi}
\nc{\NN}{\bN}
\nc{\NNpos}{\ensuremath{\NN_{\geq 1}}}
\nc{\ZZ}{\bZ}
\nc{\Structure}[1]{\ensuremath{\mathcal{#1}}}
\nc{\A}{\cA}
\nc{\B}{\cB}
\nc{\F}{\cF}
\nc{\T}{\cT}
\nc{\I}{\Structure{I}}
\nc{\isom}{\ensuremath{\cong}}
\nc{\isomorph}{\isom}
\nc{\equivd}{\ensuremath{\equiv_d}}
\nc{\bigoh}{\mathcal{O}}
\nc{\bigOh}{\bigoh}
\nc{\littleoh}{o}
\nc{\littleOh}{\littleoh}
\nc{\set}[1]{\ensuremath{\{ #1 \}}}
\nc{\setc}[2]{\set{#1 : #2}}
\nc{\bigset}[1]{\ensuremath{\big\{ #1 \big\}}}
\nc{\bigsetc}[2]{\bigset{#1 : #2}}
\nc{\setsize}[1]{\ensuremath{|#1|}}
\nc{\bigsetsize}[1]{\ensuremath{\big|#1\big|}}
\nc{\Setsize}[1]{\bigsetsize{#1}}
\nc{\sphere}[2]{\ensuremath{\sph_{#1}(#2)}}
\nc{\und}{\ensuremath{\wedge}}
\nc{\Und}{\ensuremath{\bigwedge}}
\nc{\oder}{\ensuremath{\vee}}
\nc{\Oder}{\ensuremath{\bigvee}}
\nc{\nicht}{\ensuremath{\neg}}
\nc{\impl}{\ensuremath{\to}}
\nc{\gdw}{\ensuremath{\leftrightarrow}}
\nc{\FO}{\ensuremath{\textup{FO}}}
\nc{\FOC}{\ensuremath{\textup{FOC}}}
\nc{\FOCN}{\ensuremath{\textup{FOCN}}}
\nc{\VARS}{\ensuremath{\textsf{\upshape vars}}}
\nc{\NVARS}{\ensuremath{\textsf{\upshape nvars}}}
\nc{\Count}[2]{\ensuremath{\# {#1}.{#2}}}
\nc{\quant}[1]{\ensuremath{\textsf{\upshape #1}}}
\rnc{\P}{\quant{P}}
\rnc{\S}{\ensuremath{\mathcal{S}}} 
\nc{\neighb}[3]{\ensuremath{N_{#1}^{#2}(#3)}} 
\nc{\nbset}[2]{\ensuremath{N_{#1}^{#2}}}
\nc{\nrA}[1]{\ensuremath{\neighb{r}{\A}{#1}}} 
\nc{\nrAStrich}[1]{\ensuremath{\neighb{r}{\B}{#1}}} 
\nc{\nrT}[1]{\ensuremath{\neighb{r}{\T}{#1}}} 
\nc{\nRT}[1]{\ensuremath{\neighb{R}{\T}{#1}}} 
\nc{\Neighb}[3]{\ensuremath{\mathcal{N}_{#1}^{#2}(#3)}} 
\nc{\NrA}[1]{\ensuremath{\Neighb{r}{\A}{#1}}} 
\nc{\NRA}[1]{\ensuremath{\Neighb{R}{\A}{#1}}} 
\nc{\NrB}[1]{\ensuremath{\Neighb{r}{\B}{#1}}} 
\nc{\NRB}[1]{\ensuremath{\Neighb{R}{\B}{#1}}} 
\nc{\NrDStrich}[1]{\ensuremath{\Neighb{r}{\B}{#1}}} 
\nc{\NrT}[1]{\ensuremath{\Neighb{r}{\T}{#1}}} 
\nc{\Types}[3]{\ensuremath{\mathcal{T}_{#1}^{\sigma,#2}(#3)}} %
\nc{\Typesrdk}{\Types{r}{d}{k}}
\nc{\Typesrd}[1]{\Types{r}{d}{#1}}
\nc{\Typesd}[2]{\Types{#1}{d}{#2}}
\nc{\isotypes}[2]{\ensuremath{\mathfrak{T}_{#1}^{#2}}}
\nc{\Typeslist}[3]{\ensuremath{\mathcal{L}_{#1}^{\sigma,#2}(#3)}}
\nc{\Typeslistrdk}{\Typeslist{r}{d}{k}}
\nc{\Typeslistrd}[1]{\Typeslist{r}{d}{#1}}
\nc{\Typeslistd}[2]{\Typeslist{#1}{d}{#2}}
\nc{\TypeslistRdk}{\Typeslist{R}{d}{k}}
\nc{\TypeslistRhatdk}{\Typeslist{\hat{R}}{d}{k}}
\nc{\TypeslistRd}[1]{\Typeslist{R}{d}{#1}}
\nc{\TypeslistRhatd}[1]{\Typeslist{\hat{R}}{d}{#1}}
\nc{\TypeslistRStrichdk}{\Typeslist{R'}{d}{k}}
\nc{\TypeslistRStrichd}[1]{\Typeslist{R'}{d}{#1}}
\nc{\type}{\ensuremath{\tau}}
\nc{\inducedSubStr}[2]{\ensuremath{#1[#2]}} 
\nc{\mytime}{\ensuremath{\textit{time}^{\sigma,d}_{n,k,r}}}
\begin{document}

\title{%
 First-Order Logic with Counting:\\
 At Least, \emph{Weak} Hanf Normal Forms Always Exist\\ and Can Be Computed!%
}

\author[1]{Dietrich Kuske}
\author[2]{Nicole Schweikardt}
\affil[1]{TU Ilmenau}
\affil[2]{Humboldt-Universit\"at zu Berlin}

\maketitle

\begin{abstract}
  We introduce the logic $\FOCN(\Ps)$ which extends first-order logic
  by counting and by numerical predicates from a set $\Ps$, and which can be
  viewed as a natural generalisation of various counting logics that
  have been studied in the literature.
 
  We obtain a locality result showing that every $\FOCN(\Ps)$-formula
  can be transformed into a formula in Hanf normal form that is
  equivalent on all finite structures of degree at most $d$. A formula is in
  Hanf normal form if it is a Boolean combination of formulas
  describing the neighbourhood around its tuple of free variables and
  arithmetic sentences with predicates from $\Ps$ over atomic
  statements describing the number of realisations of a type with a
  single centre. The transformation into Hanf normal form can be
  achieved in time elementary in $d$ and the size of the input
  formula.
  From this locality result, we infer the following applications:
  \begin{enumerate}[$\bullet$]
  \item The Hanf-locality rank of first-order formulas of bounded
    quantifier alternation depth only grows polynomially with the
    formula size.
  \item The model checking problem for the fragment $\FOC(\Ps)$ of
    $\FOCN(\Ps)$ on structures of bounded degree is fixed-parameter
    tractable (with elementary parameter
    dependence).
  \item The query evaluation problem for fixed queries from
    $\FOC(\Ps)$ over fully dynamic databases of degree at most $d$ can
    be solved efficiently: there is a dynamic algorithm that can
    enumerate the tuples in the query result with constant delay, and
    that allows to compute the size of the query result and to test if
    a given tuple belongs to the query result within constant time
    after every database update.
  \end{enumerate}
\end{abstract}

\section{Introduction}\label{section:introduction}

The counting ability of first-order logic is very limited: it can only
make statements of the form ``there are at least $k$ witnesses $x$ for
$\varphi(x)$'' for a constant $k\in\bN$. To overcome this problem, one
can add number variables $\kappa$ to first-order logic and means to
express that $\kappa$ equals the number of witnesses for the formula
$\varphi(x)$. In order to make use of these number variables, one also
adds numerical functions like addition and numerical predicates like
$\kappa\le\kappa'$ or ``$\kappa$ is a prime'' to the logic. These and
similar ideas led to the extensions of first-order logic by the
Rescher quantifier, the H\"artig quantifier, or arbitrary unary counting quantifiers
\cite{Res62,Hae65,Vae97}, to logics like $\FO(\quant{D}_p)$
from~\cite{Nur00}, $\FO(\mathrm{Cnt})$ from~\cite{Lib04}, and
$\FO{+}\mathsf{C}$ from~\cite{Gro13}. In this paper, we introduce an
extension $\FOCN(\Ps)$ of first-order logic by counting, number
variables, and numerical predicates from a set $\Ps$. By choosing
$\Ps$ appropriately, we use this extension as a general framework for
counting extensions of first-order logic (it subsumes all the logics
mentioned above).

Clearly, two isomorphic graphs cannot be distinguished by logical
sentences. Even more: suppose there is a bijection between two
undirected graphs $\cA$ and $\cB$ such that, for every node of $\cA$,
the neighbourhood of radius $2^{\bigOh(q)}$ of that node is isomorphic to
the neighbourhood of its image. Then, first-order logic cannot
distinguish the two graphs by first-order sentences of quantifier rank
at most $q$ (this goes back to \cite{Han65}, the actual bound
$2^{q-1}-1$ was obtained in~\cite{Lib00}).  Consequently, to determine
whether a sentence $\varphi$ of quantifier rank $q$ holds in an
undirected graph $\cA$, it suffices to count how often each
neighbourhood type of radius $2^{\bigO(q)}$ is realised in $\cA$.

It actually suffices to count these realisations up to a certain
threshold (that depends on $q$ and the degree $d$ of the graph $\cA$)
\cite{FagSV95}. Bounding the degree of $\cA$ by $d$, there are only
finitely many neighbourhood types of radius $2^{\bigO(q)}$ that can be
realised. Consequently, this condition can be expressed as a
first-order sentence; i.e., as a first-order sentence in \emph{Hanf
  normal form}.

A similar story can be told, e.g., for the extension
$\FO(\quant{D}_2)$ of first-order logic by the ability to express that
the number of witnesses for $\varphi(x)$ is even. To determine whether
such a sentence holds in an undirected graph $\cA$, one has to count
the number of realisations up to a certain threshold and one has to
determine the parity of this number~\cite{Nur00}. Again, this leads to
a sentence in Hanf normal form that expresses the said condition in
the graph~$\cA$.

We say that a logic can only express local properties if validity of a
sentence in a structure can be determined by solely counting the
number of realisations of neighbourhood types. This property has
traditionally been proven by suitable notions of games. Often, the
\emph{existence} of a Hanf normal form follows from this directly. But
there is no obvious way to extract an \emph{algorithm} for the
construction of it.  On the other hand, these Hanf normal forms have
also found various applications in algorithms and complexity (cf.,
e.g., \cite{See96,Lib04,FriG01,%
  DurandGrandjean,Kreutzer-AMT-Survey,KazanaSegoufin-boundedDegree,%
  DBLP:conf/stacs/Segoufin14,BolK12,HeiKS16,BKS_ICDT17}).  In
particular, there are very general algorithmic meta-theorems stating
that model checking is fixed-parameter tractable for various classes
of structures, and that the results of queries against various classes
of databases can be enumerated with constant delay after a linear-time
preprocessing phase.  In this context, questions about the efficiency
of the normal forms have recently attracted interest (cf.\ e.g.,
\cite{DGKS07,Lin08,BolK12,HeiKS16}).

The main result of this paper is the effective construction of a Hanf
normal form from an arbitrary formula from our logic
$\FOCN(\Ps)$. This construction extends the constructions from
\cite{BolK12,HeiKS16} and can be carried out in 5-fold exponential
time.
We also provide a 4-fold exponential lower bound. 
From the existence and the computability of Hanf normal forms, we
infer four applications:
\begin{itemize}
\item The model checking problem for the (large) fragment 
  $\FOC(\Ps)$ of the logic $\FOCN(\Ps)$ on
  structures of bounded degree is fixed-parameter tractable (with
  elementary parameter dependence) where we assume an oracle for the
  numerical predicates from $\Ps$. 
\item The Hanf-locality rank of first-order formulas of bounded
  quantifier alternation depth only grows polynomially with the
  formula size. This complements Libkin's bound $2^{q-1}-1$ for $q$
  the quantifier rank of the formula \cite{Lib00} and (partly) proves
  a conjecture from \cite{KusL11}.
\item From a sentence $\varphi$ in $\FOCN(\Ps)$, we can compute a
  first-order description of the numerical condition that is
  equivalent to validity of $\varphi$. This first-order description is
  expressed in an extension of integer arithmetic with the predicates
  from~$\Ps$.
\item The query evaluation problem for fixed queries from $\FOC(\Ps)$
  over fully dynamic databases of degree $\leq d$ can be solved
  efficiently: there is a dynamic algorithm that can enumerate the
  tuples in the query result with constant delay, and that allows to
  compute the size of the query result and to test if a given tuple
  belongs to the query result within constant time after every
  database update.
\end{itemize}

Above, we said that the existence of a Hanf normal form follows
``often''. A counterexample to this is the fragment $\FO(\Ps)$ of
$\FOCN(\Ps)$ that we consider in \cite{HeiKS16}. The problem there is
that, in general, $\FO(\Ps)$ does not allow to formulate the necessary
numerical condition. In Corollary~\ref{cor:whnf} we present a
weakening of the notion of a Hanf normal form that also works in this
case.

The rest of the paper is structured as follows. 
Sections~\ref{section:FOCN} and \ref{section:GeneralsedHNF} introduce
the logic $\FOCN(\Ps)$ and the according notion of Hanf normal form.
Theorem~\ref{thm:main} summarises the paper's technical main result,
the proof of which is given in Sections~\ref{sec:upperbound} and
\ref{sec:lowerbound}.
Section~\ref{sec:applications} describes the mentioned applications.

\smallskip

\textbf{Acknowledgements.}
The second author would like to acknowledge the financial support by the
Deutsche Forschungsgemeinschaft (DFG, German Research Foundation) under grant SCHW~837/5-1.

\section{First-order logic with counting and numerical predicates}\label{section:FOCN}\label{section:BasicNotation}

We write 
$\bZ$, $\bN$, and $\bN_{\ge1}$ for the sets of integers, non-negative integers, and
positive integers, resp.
For all $m,n\in\bN$, we write $[m,n]$ for the set
$\setc{k\in\bN}{m\le k\le n}$ and $[m]=[1,m]$.
For a $k$-tuple $\ov{x}=(x_1,\ldots,x_k)$ we write $|\ov{x}|$ to
denote its \emph{arity} $k$.
The exponential functions $\exp_k\colon\bN\to\bN$ are defined by
induction on $k$ via $\exp_0(n)=n$ and $\exp_{k+1}(n)=2^{\exp_k(n)}$ for
all $k\in\bN$. We write $\poly(n)$ for the set of functions
$\bigcup_{k\in\bN}\bigO(n^k)$.

A \emph{signature} $\sigma$ is a \emph{finite} set of relation and
constant symbols.
Associated with every relation symbol $R\in\sigma$ is a
positive integer $\ar(R)$ called the \emph{arity} of $R$.  The
\emph{size} $\size{\sigma}$ of a signature $\sigma$ is the number of
its constant symbols plus the sum of the arities of its relation
symbols.  We call a signature \emph{relational} if it does not contain
any constant symbol.
A \emph{$\sigma$-structure} $\cA$ consists of a \emph{finite}
non-empty
set $A$ called the \emph{universe} of $\cA$, a relation
$R^{\cA} \subseteq A^{\ar(R)}$ for each relation symbol $R \in \sigma$,
and an element $c^{\cA}\in A$ for each constant symbol $c \in \sigma$.
Note that according to these definitions, all signatures and all
structures considered in this paper are \emph{finite}. To indicate
that two $\sigma$-structures $\cA$ and $\cB$ are isomorphic, we write
$\cA\cong\cB$.

We define the extension $\FOCN(\Ps)$ of first-order logic $\FO$ by
counting and by numerical predicates from a set $\Ps$.  Our notation extends
standard notation concerning first-order logic,
cf.~\cite{EbbF95,Lib04}.

Let $\VARS$ and $\NVARS$ be  fixed disjoint countably infinite sets
of \emph{structure} and \emph{number variables}, respectively. In our logic,
structure variables from $\VARS$ will always denote elements of the
structure, and number variables from $\NVARS$ will denote integers.
Typical structure variables are $x$ and $y$, typical
number variables are $\lambda$ and $\kappa$. Often, we use $z$ as an
arbitrary variable from $\VARS\cup\NVARS$. 

A \emph{$\sigma$-interpretation} $\I=(\A,\beta)$ consists of a
$\sigma$-structure $\A$ and an \emph{assignment $\beta$ in $\A$}, i.e.,
$\beta\colon\VARS\cup\NVARS\to A\cup\ZZ$ with
$\beta(x)\in A$ for $x\in\VARS$ and $\beta(\kappa)\in\bZ$ for
$\kappa\in\NVARS$. For $k,\ell\in\NN$, for $a_1,\ldots,a_k\in A$,
$n_1,\ldots,n_\ell\in\ZZ$, and for pairwise distinct
$y_1,\ldots,y_k\in\VARS$ and $\kappa_1,\dots,\kappa_\ell\in\NVARS$,
we write
$\beta\frac{a_1,\ldots,a_k}{y_1,\ldots,y_k}\frac{n_1,\ldots,n_\ell}{\kappa_1,\ldots,\kappa_\ell}$
for the assignment $\beta'$ in $\A$ with $\beta'(y_j)=a_j$ for all
$j\in [k]$, $\beta(\kappa_j)=n_j$ for all $j\in[\ell]$, and
$\beta'(z)=\beta(z)$ for all
$z\in(\VARS\cup\NVARS)\setminus\set{y_1,\ldots,y_k,\kappa_1,\ldots,\kappa_\ell}$.
For $\I=(\A,\beta)$ we let
$\I\frac{a_1,\ldots,a_k}{y_1,\ldots,y_k}\frac{n_1,\ldots,n_\ell}{\kappa_1,\ldots,\kappa_\ell}
\ = \
\big(\A,\beta\frac{a_1,\ldots,a_k}{y_1,\ldots,y_k}\frac{n_1,\ldots,n_\ell}{\kappa_1,\ldots,\kappa_\ell}\big)$.

\begin{definition}[\mbox{$\FO[\sigma]$}]\label{def:FO} \ \\
Let $\sigma$ be a signature.  The set of
\emph{$\FO[\sigma]$-formulas} is built according to the following
rules:
  \begin{enumerate}[(1)]
  \item\label{item:atomic} $x_1{=}x_2$ and $R(x_1,\ldots,x_{\ar(R)})$
    are \emph{formulas}, where $R\in\sigma$ and
    $x_1,x_2,\ldots,x_{\ar(R)}$ are structure variables or constant
    symbols in $\sigma$
  \item\label{item:bool} if $\varphi$ and $\psi$ are \emph{formulas},
    then so are $\lnot\varphi$ and $(\varphi\lor\psi)$
  \item\label{item:exists} if $\varphi$ is a \emph{formula} and
    $y\in\VARS$, then $\exists y\,\varphi$ is a \emph{formula}
  \end{enumerate}

  The semantics $\sem{\phi}^{\I}\in\set{0,1}$ for a
  $\sigma$-interpretation $\I=(\cA,\beta)$ and a formula
  $\phi$ is defined 
  as usual:

  \begin{enumerate}[(1)]

  \item $\sem{x_1{=}x_2}^{\I}=1$ if $a_1=a_2$ and $\sem{x_1{=}x_2}^{\I}=0$
    otherwise,
    \\
    $\sem{R(x_1,\ldots,x_{\ar(R)})}^{\I}=1$ if
    $(a_1,\ldots,a_{\ar(R)})\in R^\A$, and
    $\sem{R(x_1,\ldots,x_{\ar(R))}}^{\I}=0$ otherwise,
    \\
    where for $j\in\set{1,\ldots,\max\set{2,\ar(R)}}$ we let
    $a_j=\beta(x_j)$ if $x_j\in\VARS$ and $a_j=x_j^{\A}$ if $x_j$ is a
    constant symbol in $\sigma$

  \item $\sem{\nicht\varphi}^{\I}=1-\sem{\phi}^{\I}$ and
    $\sem{(\phi\oder\psi)}^{\I}=
    \max\set{\sem{\phi}^{\I},\sem{\psi}^{\I}}$

  \item
    $\sem{\exists
      y\,\varphi}^{\I}=\max\setc{\sem{\phi}^{\I\frac{a}{y}}}{a\in A}$
  \end{enumerate}
\end{definition}

In a first step, we extend first-order logic such that numerical
statements on the number of witnesses for a formula are
possible. These numerical statements are based on numerical predicates
that we define first.

\begin{definition}[Numerical predicate collection]\label{def:Qs} \ \\
A \emph{numerical predicate collection} is a triple
$(\Ps,\ar,\sem{.})$ where $\Ps$ is some countable set of
\emph{predicate names}, $\ar\colon\Ps\to\bN_{\ge1}$ assigns the
\emph{arity} to every predicate name, and
$\sem{\P}\subseteq\ZZ^{\ar(\P)}$ is the \emph{semantics} of the
predicate name $\P\in\Ps$.  
\end{definition}

Basic examples of numerical predicates are
$\P_+$, $\P_{\cdot}$, $\P_=$, $\P_{\leq}$, $\quant{Prime}$ with
$\sem{\P_+}=\setc{(m,n,m+n)}{m,n\in\bZ}$, 
$\sem{\P_\cdot}=\setc{(m,n,m\cdot n)}{m,n\in\bZ}$, 
$\sem{\P_=}=\setc{(m,m)}{ m\in\bZ}$, 
$\sem{\P_\leq}=\setc{(m,n)\in\bZ^2}{ m\leq n}$, and
$\sem{\quant{Prime}} = \setc{n\in\NN}{n\text{ is a prime number}}$.
Also, $\quant{D}_p$ with $\sem{\quant{D}_p} = p\bZ$ (for each fixed $p\in\NNpos$) and the halting problem
(i.e., the set of indices of Turing machines that halt with empty
input) are possible numerical predicates.

\begin{definition}[\mbox{$\FO(\Ps)[\sigma]$}]
  \label{def:FO(Ps)} 
  \ \\
  Let $\sigma$ be a signature and $(\Ps,\ar,\sem{.})$
  be a numerical predicate collection. 
  The sets of \emph{formulas} and \emph{counting terms for
    $\FO(\Ps)[\sigma]$} are built according to the rules
  \eqref{item:atomic}--\eqref{item:exists} and the following rules:
  \begin{enumerate}[(1)]
    \addtocounter{enumi}{3}
  \item\label{item:Q} if $\P\in\Ps$, $m= \ar(\P)$, and
    $t_1,\ldots,t_m$ are \emph{counting terms}, then
    $\P(t_1,\ldots,t_m)$ is a \emph{formula}
  \item[(5')] if $\varphi$ is a formula, $y\in\VARS$, and $k\in\bN$, then
    $\Count{(y)}{\varphi}-k$ is a \emph{counting term}.
  \end{enumerate}

  Let $\I=(\cA,\beta)$ be a $\sigma$-interpretation. For every
  formula $\phi$ and every
  counting term $t$ from $\FO(\Ps)[\sigma]$, the semantics
  $\sem{\phi}^{\I}\in\set{0,1}$ of $\phi$ in $\I$ and the semantics
  $\sem{t}^{\I}\in\ZZ$ of $t$ in $\I$ extend the definition for
  $\FO[\sigma]$-formulas as follows:

  \begin{enumerate}
  \item[(4)] $\sem{\P(t_1,\ldots,t_m)}^{\I}=1$ if
    $\big(\sem{t_1}^{\I},\ldots,\sem{t_m}^{\I}\big)\in\sem{\P}$, and
    $\sem{\P(t_1,\ldots,t_m)}^{\I}=0$ otherwise

  \item[(5')]
    $\sem{\Count{(y)}{\varphi}-k}^{\I}= \big| \bigsetc{
      a\in A \;}{\;
      \sem{\phi}^{\I\frac{a}{y}}=1 } \big|-k$
  \end{enumerate}
\end{definition} 
 
We will write $\Count{(y)}{\varphi}$ as a shorthand for the counting term
$\Count{(y)}{\varphi}-0$.

\begin{remark}
  For the logic $\FO(\Ps)$ and the following logics $\FOC(\Ps)$ and $\FOCN(\Ps)$, an
  \emph{expression} is a formula or a counting term.

  As usual, for a \emph{formula} $\phi$ and a $\sigma$-interpretation $\I$ we will often write
  $\I\models\phi$ to indicate that $\sem{\phi}^\I =1$. Accordingly, $\I\not\models\phi$ indicates that $\sem{\phi}^{\I}=0$.
\end{remark}

For structure variables $y\in\VARS$, the  quantifier $\exists y$ 
can be replaced by using a suitable numerical predicate:

\begin{example}
  Let $\P_{\exists}$ be the numerical predicate with
  $\ar(\P_{\exists})=1$ and $\sem{\P_{\exists}}=\NNpos$.  
  Consider an arbitrary $\sigma$-interpretation $\I=(\cA,\beta)$.
  Since $\cA$ is finite, we have
  \begin{align*}
    \I\models\P_{\exists}(\Count{(y)}{\varphi})
      &\iff |\setc{a\in A\;}{\;\textstyle\I\frac{a}{y}\models\varphi}  |\ \ \in \ \ \sem{\P_{\exists}} \ = \ \bN_{\ge1}\\
      &\iff
        \text{ there is some $a\in A$ with }\textstyle\I\frac{a}{y}\models\varphi
       &&\text{ (since $A$ is finite)}\\
      &\iff \I\models\exists y\,\varphi\,.
  \end{align*}
 Thus, we have
 \[
   \I\models\P_{\exists}(\Count{(y)}{\phi})
   \ \iff \
   \I\models\exists y\,\phi\,.
 \]
\end{example}

The following examples provide choices of $\Ps$ for which the logic
$\FO(\Ps)$ has been studied in the literature.

\begin{example}\label{E-list} \ \hfill
  \begin{enumerate}[(a)]
  \item Let $\Es=\setc{\exists^{\ge k}}{k\in\NNpos}$ with
    $\ar(\exists^{\geq k})=1$ and
    $\sem{\exists^{\ge k}}=\set{k,k{+}1,\dots}$ for every $k\geq 1$.
    The logic $\FO(\Es)$ is equivalent to the logic $\FO(\textup{C})$
    of \cite{EbbF95}.
  \item\label{E-divisibility} The logic $\FO(\set{\quant{D}_p})$ is
    equivalent to the extension of first-order logic by the
    divisibility quantifier $\quant{D}_p$, considered by Nurmonen
    in~\cite{Nur00}.
  \item Let $(\Ps,\ar,\sem{.})$ be a numerical predicate collection
    with $\ar(\P)=1$ for all $\P\in\Ps$. Then, $\FO(\Ps)$ is equivalent
    to the logic considered in \cite{HeiKS16}.
  \item For all formulas $\phi$ and $\psi$ and for every $\sigma$-interpretation $\I=(\A,\beta)$ we have
  \[
    \I\ \models \ \P_\le\big(\Count{(y)}{\phi},\,\Count{(y)}{\psi}\big)
    \ \ \iff \ \
   |\setc{a\in A\ }{\ \textstyle\I\frac{a}{y}\models\phi}|
   \ \ \leq \ \
   |\setc{a\in A\ }{\ \textstyle\I\frac{a}{y}\models\psi}|.
  \]
  Analogously, we have 
  \[
    \I\ \models \ \P_=\big(\Count{(y)}{\phi},\,\Count{(y)}{\psi}\big)
    \ \ \iff \ \
   |\setc{a\in A\ }{\ \textstyle\I\frac{a}{y}\models\phi}|
   \ \ = \ \
   |\setc{a\in A\ }{\ \textstyle\I\frac{a}{y}\models\psi}|.
  \]
  Thus,
  the logics $\FO(\set{\P_\le})$ and $\FO(\set{\P_=})$ are
    equivalent to the extension of first-order logic by the
    \emph{Rescher quantifier} and the \emph{H\"artig quantifier}, resp.~\cite{Res62,Hae65}.
  \item\label{E-ultimately-periodic} Let
    $\Us=\set{0,1}^*\$\set{0,1}^+$ and, for $u\$v\in\Us$, let
    $\sem{u\$v}\subseteq\bN$ be the set with characteristic sequence
    $uv^\omega$, i.e., $i\in\sem{u\$v}$ if, and only if, the
    $\omega$-word $uv^\omega$ carries a $1$ at position $i\in\bN$
    (here, we follow the convention that the leftmost position of an
    $\omega$-word is position 0).
    Note that a set $X\subseteq\bN$ is ultimately periodic (or,
    semilinear) if, and only if, there is some $w\in\Us$ with
    $\sem{w}=X$.
    The logic $\FO(\Us)$ is equivalent to the extension of first-order
    logic by ultimately periodic unary counting quantifiers,
    considered in \cite{HeiKS16}.
  \item \label{example:FOunary} In \cite[Sect.~8.1]{Lib04}, Libkin
    considers the extension $\FO(\text{unary})$ of first-order logic
    by the class of all unary generalised quantifiers.  
    It is not
    difficult to see that every $\FO(\text{unary})$-formula is equivalent to an $\FO(\Ps)$-formula, for a suitable
    numerical predicate collection
    $(\Ps,\ar,\sem{.})$:

  For the definition of $\FO(\text{unary})$, let
  $\nu_n=\set{R_1,\ldots,R_n}$ be the relational signature that consists of
  $n$ unary relation symbols. Formulas of $\FO(\text{unary})$ are
  built from the rules (1)--(3) and the following additional rule:
  \begin{enumerate}[(U)]
  \item[(U)] if $n\in\bN$, $\mathcal K$ is a class of
    $\nu_n$-structures that is closed under isomorphism, $y\in\VARS$, and
    $\varphi_i$ for $i\in[n]$ are formulas,  then
    $\quant{Q}_{\mathcal K}y\,(\varphi_1,\varphi_2,\dots,\varphi_n)$ is a
    formula.
  \end{enumerate}
  For the semantics of the logic $\FO(\text{unary})$, we only need to
  explain the meaning of this last formula: For a $\sigma$-interpretation $\I=(\A,\beta)$
  we have
  $\I\models\quant{Q}_{\mathcal K}y\,(\varphi_1,\dots,\varphi_n)$
  iff the structure
  $(A,\varphi_1^{\I},\varphi_2^{\I},\dots,\varphi_n^{\I})$
  belongs to $\mathcal K$ where
  \[
     \varphi_i^{\I} \ \ = \ \ \setc{\,a\in A \ }{ \ \textstyle\I\frac{a}{y}\models\varphi_i\,}\,.
  \]
  To construct a numerical predicate collection $(\Ps,\ar,\sem{.})$ for which our logic $\FO(\Ps)$ 
  is at least as expressive as 
  $\FO(\text{unary})$, we proceed as follows: Let $n\in\bN$ and let
  $\cB$ be a (finite) $\nu_n$-structure. The \emph{characteristic sequence of $\cB$} is
  the tuple $\chi(\cB)=(b_S)_{S\subseteq[n]}\in\bN^{2^n}$ that for
  all $S\subseteq[n]$ gives the number of elements in
  $\bigcup_{i\in S}R^\cB_i$, i.e.,
  \[
     b_S \ \ = \ \ \left|\,\bigcup_{i\in S}R_i^\cB\,\right|\,.
  \]
  Since $\nu_n$ contains only unary predicates, we get that
  $\cB_1\cong\cB_2$ iff $\chi(\cB_1)=\chi(\cB_2)$, for all
  $\nu_n$-structures $\cB_1$ and $\cB_2$.

  Now let $\mathcal K$ be a class of $\nu_n$-structures. We define a
  numerical predicate $\P^{\mathcal K}$ of arity $2^n$ with
  \[
    \sem{\P^{\mathcal K}} \ = \ \setc{\,\chi(\cB)\,}{\,\cB\in\mathcal K\,}\,.
  \]
  Then, for every $\sigma$-interpretation $\I=(\A,\beta)$ we have
  \[
    \I\ \models \ \quant{Q}_{\mathcal K}y\,(\varphi_1,\dots,\varphi_n)
    \quad \iff \quad
    \I \ \models \ \P^{\mathcal K}
    \Big( 
      \big( \Count{(y)}{\bigvee_{i\in S}\varphi_i} \big)_{S\subseteq[n]}
    \Big)\,.
  \]
  Thus, the $\FO(\text{unary})$-formula
  $\quant{Q}_{\mathcal K}y\,(\varphi_1,\dots,\varphi_n)$ 
  is equivalent to the $\FO(\set{\P^{\mathcal K}})$-formula
  \[
    \P^{\mathcal K}
    \Big( 
      \Big( \Count{(y)}{\bigvee_{i\in S}\varphi_i} \Big)_{S\subseteq[n]}
    \Big)\,.
  \] 
  \end{enumerate}
\end{example}

Our next logic $\FOC(\Ps)$ allows not only numerical statements on
numbers given by counting terms of the form
$\Count{(y)}{\varphi}-k$, but on polynomials over such
terms. In addition, the logic $\FOC(\Ps)$ allows to count tuples.

\begin{definition}[\mbox{$\FOC(\Ps)[\sigma]$}]
  \label{def:FOC} \ \\
  Let $\sigma$ be a signature and let $(\Ps,\ar,\sem{.})$ be
  a numerical predicate collection.  The set of \emph{expressions for
    $\FOC(\Ps)[\sigma]$} is built according to the rules
  \eqref{item:atomic}--\eqref{item:Q} and the following rules:
  \begin{enumerate}[(1)]
  \addtocounter{enumi}{4}
  \item\label{item:countterm} if $\varphi$ is a \emph{formula},
    $k\in\NNpos$, and $\ov{y}=(y_1,\ldots,y_k)$ is a tuple of pairwise
    distinct structure variables (i.e., variables in $\VARS$), then $\Count{\ov{y}}{\varphi}$ is a
    \emph{counting term}
  \item\label{item:constterm} every integer $i\in\ZZ$ is a
    \emph{counting term}
  \item\label{item:plustimesterm} if $t_1$ and $t_2$ are
    \emph{counting terms}, then so are $(t_1+t_2)$ and
    $(t_1\cdot t_2)$
  \end{enumerate}
  Let $\I=(\cA,\beta)$ be a $\sigma$-interpretation. For every
  expression $\xi$ of $\FOC(\Ps)[\sigma]$, the semantics
  $\sem{\xi}^\I$ is given by the semantics for the rules (1)--(4) and the following:
  \begin{enumerate}[(1)]
    \addtocounter{enumi}{4}
  \item
    $\sem{\Count{\ov{y}}{\varphi}}^{\I}= \big| \bigsetc{
      (a_1,\ldots,a_k)\in A^k \;}{\;
      \sem{\phi}^{\I\frac{a_1,\ldots,a_k}{y_1,\ldots,y_k}}=1 } \big|$,
    \ where $\ov{y}=(y_1,\ldots,y_k)$ 
  \item $\sem{i}^{\I}=i$
  \item $\sem{(t_1 + t_2)}^{\I}= \sem{t_1}^{\I} + \sem{t_2}^{\I}$, \
    and \
    $\sem{(t_1 \cdot t_2)}^{\I}= \sem{t_1}^{\I} \cdot \sem{t_2}^{\I}$
  \end{enumerate}
\end{definition}

If $s$ and $t$ are counting terms, then we write $s-t$ for the
counting term $(s+((-1)\cdot t))$. With this convention, we can understand
$\FO(\Ps)$ as a fragment of $\FOC(\Ps)$. Note that counting terms of
$\FOC(\Ps)$ are polynomials while counting terms of $\FO(\Ps)$ are
special linear polynomials. In addition, counting terms of $\FOC(\Ps)$
can count \emph{tuples} of elements of the universe while counting
terms of $\FO(\Ps)$ only count single \emph{elements} of the universe.

\begin{example}
  The following $\FOC(\set{\quant{Prime}})$-formula (expressing that
  the sum of the numbers of nodes and edges of a graph is a prime) is
  not an $\FO(\set{\quant{Prime}})$-formula:
  \[
     \quant{Prime}\,\big(\,(\,\Count{(x)}{x{=}x}\ + \ \Count{(x,y)}{E(x,y)}\,)\,\big)\,.
  \]
\end{example}

Our final extension of the logic allows, besides structure variables
also number variables, and it allows to quantify over ``small''
numbers, i.e., over numbers in $\set{0,1,\ldots,|A|}$, when evaluated in a $\sigma$-structure $\A$:

\begin{definition}[\mbox{$\FOCN(\Ps)[\sigma]$}]
\label{def:FOCN}
\ \\
  Let $\sigma$ be a signature and let $(\Ps,\ar,\sem{.})$ be
  a numerical predicate collection.  The set of \emph{expressions for
    $\FOCN(\Ps)[\sigma]$} is built according to the rules
  \eqref{item:atomic}--\eqref{item:plustimesterm} and the following
  rules:
  \begin{enumerate}[(1)]
    \addtocounter{enumi}{7}
  \item\label{item:varterm} every variable from $\NVARS$ is a
    \emph{counting term}
  \item\label{item:exists:number} if $\varphi$ is a \emph{formula} and
    $\kappa\in\NVARS$, then $\exists \kappa\,\varphi$ is a \emph{formula}
  \end{enumerate}

  Let $\I=(\cA,\beta)$ be a $\sigma$-interpretation. For every
  expression $\xi$ of $\FOCN(\Ps)[\sigma]$, the semantics
  $\sem{\xi}^\I$ is given by the semantics for the rules (1)--(7) and the following:
  \begin{enumerate}[(1)]
    \addtocounter{enumi}{7}
  \item $\sem{\kappa}^{\I}=\beta(\kappa)$ for $\kappa\in\NVARS$
  \item  $\sem{\exists
      \kappa\,\varphi}^{\I}=\max\setc{\sem{\phi}^{\I\frac{k}{\kappa}}}{k\in\set{0,1,\ldots,|A|}}$
  \end{enumerate}
\end{definition}

By $\FOCN(\Ps)$, we denote the union of all $\FOCN(\Ps)[\sigma]$ for
arbitrary signatures~$\sigma$, and similarly for $\FOC(\Ps)$,
$\FO(\Ps)$, and $\FO$.

\begin{example}
  Let $\Ps=\set{\quant{Prime},\P_{=}}$ and
  consider the formula
  \[
   \exists\kappa\ \quant{Prime}
    \Big(\Count{(y)}{\P_=\big(\kappa,\;\Count{(z)}{E(y,z)}\big)}\Big)\,.
  \]
  The counting term $\Count{(z)}{E(y,z)}$ denotes the out-degree of
  $y$, hence the formula $\P_=\big(\kappa,\,\Count{(z)}{E(y,z)}\big)$ expresses
  that $\kappa$ is the out-degree of $y$. Consequently, the whole
  formula says that there is some degree $\kappa$ such that the number
  of nodes of out-degree $\kappa$ is a prime. Since $0$ is not a prime,
  this $\FOCN(\Ps)$ formula is equivalent to the following
  $\FO(\Ps)$-formula
  \[
    \exists x\ \quant{Prime}
    \biggl(\Count{(y)}{\P_=\bigl(\,\Count{(z)}{E(x,z)},\;\Count{(z)}{E(y,z)}\,\bigr)}\biggr)\,.
  \]
\end{example}

\begin{remark}
  The logics $\FO(\mathrm{Cnt})$ from \cite{Lib04} and
  $\FO{+}\mathsf{C}$ from \cite{Gro13} can be viewed as fragments of
  $\FOCN(\Ps)$ where $\Ps$ contains the predicates $\P_+$, $\P_\cdot$,
  $\P_=$ and $\P_\le$. But these two logics have no mechanism for
  counting \emph{tuples}.  E.g., it is not clear how to express in
  $\FO(\mathrm{Cnt})$ or $\FO{+}\mathsf{C}$ that the number of edges
  of a graph is a square number, while this is
  $\FOCN(\Ps)$-expressible by
  $\exists\kappa\
  \P_=\big(\,\Count{(x,y)}{E(x,y)}\,,\;(\kappa\cdot\kappa)\,\big)$.
\end{remark}

Note that we restrict the quantification over numbers to the size of
the universe of the structure~$\cA$.  This is analogous to the
semantics of the logics $\FO(\mathrm{Cnt})$ and $\FO{+}\mathsf{C}$
from \cite{Lib04,Gro13}. As a consequence, the logic
$\FOCN(\Ps)[\sigma]$ does not have the full power of integer
arithmetic.  Let us mention that our main result
Theorem~\ref{thm:main} also holds for the variant of $\FOCN(\Ps)$
where quantifications of number variables range over arbitrary
integers (rather than just numbers in $\set{0,1,\ldots,|A|}$); the
model-checking algorithm described in
Section~\ref{sec:applications}, however, does not carry over to this
variant of $\FOCN(\Ps)$.

The construct $\exists z$ binds the variable $z\in\VARS\cup\NVARS$, and
the construct $\#\ov{y}$ in a counting term binds the (structure)
variables from the tuple~$\ov{y}$; all other occurrences of variables
are free. We denote the set of free variables of the expression $\xi$
by $\free(\xi)$. I.e., 
the free variables $\free(\xi)$ of $\FOCN(\Ps)$-expressions
$\xi$ are inductively defined as follows:

\begin{enumerate}[(1)]
 \item
  $\free(x_1{=}x_2)=\set{x_1,x_2}\cap\VARS$
  \ and \ 
  $\free(R(x_1,\ldots,x_{\ar(R)}))=\set{x_1,\ldots,x_{\ar(R)}}\cap\VARS$
 \item
  $\free(\nicht\phi)=\free(\phi)$ \ and \ $\free((\phi\oder\psi))=\free(\phi)\cup\free(\psi)$
 \item
  $\free(\exists y\,\phi)=\free(\phi)\setminus\set{y}$
 \item
  $\free(\P(t_1,\ldots,t_m))=\free(t_1)\cup\cdots\cup\free(t_m)$
 \item
  $\free(\Count{(y_1,\dots,y_k)}{\phi}) = \free(\phi)\setminus\set{y_1,\ldots,y_k}$
 \item
  $\free(i)=\emptyset$ for $i\in\bZ$
 \item
  $\free((t_1+t_2)) \ = \ \free((t_1\cdot t_2)) \ = \ \free(t_1)\cup\free(t_2)$
 \item 
  $\free(\kappa)=\{\kappa\}$ for $\kappa\in\NVARS$
 \item
  $\free(\exists \kappa\,\phi) = \free(\phi)\setminus\set{\kappa}$
\end{enumerate}

We will often write $\xi(\ov{z})$, for
$\ov{z} = (z_1,\ldots,z_n)$ with $n\ge 0$, to indicate that at most
the variables from $\set{z_1,\ldots,z_n}$ are free in the expression~$\xi$.

A \emph{sentence} is a formula without free variables, a
\emph{ground term} is a counting term without free variables.
Furthermore, a \emph{number formula} is a formula whose free variables
all belong to $\NVARS$. For instance,
$\P(\kappa,\Count{(y)}{\varphi(y,\kappa)})$ is a number formula, but
not a sentence since $\kappa$ is free in this formula.

Note that the semantics $\sem{\xi}^\I$ for an expression
$\xi(\ov{x},\ov{\kappa})$ and a $\sigma$-interpretation $\I=(\A,\beta)$ only
depends on $\cA$ and $\beta(z)$ for the variables $z$ in
$\ov{x},\ov{\kappa}$.

Let us consider an $\FOCN(\Ps)[\sigma]$-counting term
$t(\ov{x},\ov{\kappa})$, for $\ov{x}=(x_1,\ldots,x_m)$ and
$\ov{\kappa}=(\kappa_1,\ldots,\kappa_n)$.  If $\cA$ is a
$\sigma$-structure, $\ov{a} = (a_1,\ldots,a_m)\in A^m$ and
$\ov{k} = (k_1,\ldots,k_n)\in \set{0,\ldots,|A|}^n$, we write
$t^{(\A,\ov{a},\ov{k})}$ or $t^\A[\ov{a},\ov{k}]$ for the integer
$\sem{t}^{(\A,\beta)}$, where $\beta$ is an assignment in $\A$ with
$\beta(x_j)=a_j$ for all $j\in [m]$ and $\beta(\kappa_j)=k_j$ for all
$j\in[n]$. Furthermore, for an $\FOCN(\Ps)[\sigma]$-formula
$\phi(\ov{x},\ov{\kappa})$ we write
$(\cA,\ov{a},\ov{k}) \models \varphi$ or
$\cA \models \varphi[\ov{a},\ov{k}]$ to indicate that
$\sem{\phi}^{(\A,\beta)}=1$, i.e., the formula
$\varphi(\ov{x},\ov{\kappa})$ is satisfied in $\cA$ when interpreting
the free occurrences of the structure variables $x_1,\ldots,x_m$ with
$a_1,\ldots,a_m$ and the free occurrences of the number variables
$\kappa_1,\ldots,\kappa_n$ with $k_1,\dots,k_n$.  In case that
$m=n=0$ (i.e., $\phi$ is a sentence and $t$ is a ground term), we
simply write $t^{\A}$ instead of $t^{\A}[\ov{a},\ov{k}]$, and we write
$\A\models \phi$ instead of $\A\models\phi[\ov{a},\ov{k}]$.

Let $\ov{x}\in\VARS^m$, $\ov{y}\in\VARS^j$,
$\ov{\kappa}\in\NVARS^n$, $\cA$ a $\sigma$-structure, $\ov{a}\in A^m$,
and $\ov{k}\in\ZZ^n$.  Note that if
$\phi(\ov{x},\ov{\kappa},\ov{y})$ is a formula, then
$t(\ov{x},\ov{\kappa}):=\Count{\ov{y}}{\phi}$ is a counting term, such
that $t^{\A}[\ov{a},\ov{k}]$ is the number of tuples $\ov{b}\in A^j$
for which $\A\models\phi[\ov{a},\ov{k},\ov{b}]$.
Furthermore, a formula $\psi(\ov{x},\ov{\kappa})$ of the form $\P(t_1,\ldots,t_\ell)$ is
satisfied by a $\sigma$-structure $\A$ and tuples $\ov{a}\in A^m$ and
$\ov{k}\in\ZZ^n$ iff the tuple of integers
$(i_1,\ldots,i_\ell)$ belongs to the relation $\sem{\P}$, where
$i_j=t_j^{\A}[\ov{a},\ov{k}]$ for every $j\in [\ell]$.

Two formulas or two counting terms $\xi$ and $\xi'$ are \emph{equivalent} (for short,
$\xi\equiv\xi'$), if $\sem{\xi}^{\I}=\sem{\xi'}^{\I}$ for every
$\sigma$-interpretation~$\I$.

The size $\size{\xi}$ of an expression is its
length when viewed as a word over the alphabet
$\sigma \cup \VARS \cup \NVARS \cup \Ps \cup \set{,} \cup
\set{=,\exists,\neg,\lor, (, )}\cup\set{\#,.}$.

The \emph{number quantifier rank} $\nqr(\xi)$ of an
$\FOCN(\Ps)$-expression $\xi$ is the maximal nesting depth of
quantifiers of the form $\exists\kappa$ with $\kappa\in\NVARS$. The
\emph{binding rank} $\br(\xi)$ of $\xi$ is  the maximal
nesting depth of constructs of the form $\exists y$ with $y\in\VARS$
and $\#\ov{y}$ with $\ov{y}$ a tuple in $\VARS$.  The \emph{binding
  width} $\bw(\xi)$ is  the maximal arity $|\ov{y}|$ of a
term of the form $\Count{\ov{y}}{\psi}$ occurring in $\xi$; if $\xi$
contains no such term, then $\bw(\xi)=1$ if $\xi$ contains an
existential quantifier $\exists y$ with $y\in\VARS$, and $\bw(\xi)=0$
otherwise. Note that quantification over number variables does not
influence the binding rank or the binding width and, conversely,
quantification over structure variables does not influence the number
quantifier rank. 
Precisely, the notions are defined as follows.
\begin{enumerate}[(1)]
\item
  $\nqr(\phi)=\br(\phi)=\bw(\phi)= 0$, if $\phi$ is of the form $x_{1}{=}x_2$ or
  $R(x_1,\ldots,x_{\ar(R)})$
\item
  for each $f\in\set{\nqr,\br,\bw}$ we let
  $f(\nicht\phi)=f(\phi)$
  and 
  $f((\phi\oder\psi))=\max\set{f(\phi),f(\psi)}$
\item
  for all structure variables $y\in\VARS$ we let 
  $\nqr(\exists y\,\phi) = \nqr(\phi)$,
  $\br(\exists y\,\phi) = \br(\phi)+1$ \ and \ 
  $\bw(\exists y\,\phi) = \max\set{\bw(\phi),1}$
\item
  for each $f\in\set{\nqr,\br,\bw}$ we let
  $f(\P(t_1,\ldots,t_m)) = \max\set{f(t_1),\ldots,f(t_m)}$,
\item
  for all tuples $\ov{y}$ of structure variables we let \
  $\nqr(\Count{\ov{y}}{\phi}) = \nqr(\phi)$,
  $\br(\Count{\ov{y}}{\phi}) = \br(\phi)+1$ \ and \ 
  $\bw(\Count{\ov{y}}{\phi}) = \max\set{|\ov{y}|,\bw(\phi)}$
\item
  $\nqr(i)=\br(i)=\bw(i)=0$  for $i\in\bZ$
\item
  for all $f\in\set{\nqr,\br,\bw}$ we let
  $f((t_1+t_2)) = f((t_1\cdot t_2)) =\max\set{f(t_1),f(t_2)}$
\item $\nqr(\kappa)=\br(\kappa)=\bw(\kappa)=0$ for $\kappa\in\NVARS$
\item 
  for all number variables $\kappa\in\NVARS$ we let
  $\nqr(\exists \kappa\,\phi) = \nqr(\phi)+1$,
  $\br(\exists \kappa\,\phi) = \br(\phi)$ \ and \
  $\bw(\exists \kappa\,\phi) = \bw(\phi)$.
\end{enumerate}

\begin{example}
  The sentence
  $\exists x\,\quant{Prime} \big(\Count{(y)}{E(x,y)\big)}$ has number
  quantifier rank $0$, binding rank $2$, binding width $1$, and size
  $16$.  When evaluated in a directed graph $\A=(A,E^{\A})$, the
  sentence states that $\A$ contains a node whose out-degree is a
  prime number.
\end{example}

\section{Hanf Normal Form}\label{section:GeneralsedHNF}

\subsubsection*{Gaifman graph and bounded structures}

Let $\sigma$ be a signature.  The \emph{Gaifman graph} $G_\cA$ of a
$\sigma$-structure $\A$ is the undirected graph with vertex set $A$
and an edge between two distinct vertices $a,b\in A$ iff there exists
$R\in\sigma$ and a tuple $(a_1,\ldots,a_{\ar(R)})\in R^{\cA}$ such
that $a,b\in\set{a_1,\ldots,a_{\ar(R)}}$.  The structure $\A$ is
called \emph{connected} if its Gaifman graph $G_{\A}$ is connected;
the \emph{connected components} of $\A$ are the connected components
of $G_{\A}$.  The \emph{degree} of $\A$ is the degree of its Gaifman
graph, i.e., the maximum number of neighbours of a node of $G_\A$.
For $d\in \NN$, a $\sigma$-structure $\cA$ is \emph{$d$-bounded} if
its degree is at most $d$.

Two formulas or two counting terms
$\xi$ and $\xi'$ over a signature $\sigma$ are
\emph{$d$-equivalent} (for short, $\xi\equiv_d\xi'$), if
$\sem{\xi}^{\I} = \sem{\xi'}^{\I}$ for every
$\sigma$-interpretation $\I=(\cA,\beta)$ with $\cA$ 
\emph{$d$-bounded}.

Let $\cA$ be some $\sigma$-structure, $\ov{a}\in A^n$ for some
$n\ge1$, and $b\in A$.  The \emph{distance} $\dist^\cA(\ov{a},b)$
between $\ov{a}$ and $b$ is the minimal number of edges of a path from
some element of the tuple $\ov{a}$ to $b$ in $G_\cA$ (if no such path
exists, we let $\dist^\cA(\ov{a},b)=\infty$).  For every $r \ge 0$,
the \emph{$r$-neighbourhood of $\ov{a}$ in $\cA$} is the set
\ $ N_r^\cA(\ov{a}) \ = \ \setc{b\in A\,}{\,\dist^\cA(\ov{a},b)\le r}$.

\subsubsection*{Types, spheres, and sphere-formulas}

Let $\sigma$ be a relational signature and let $c_1,c_2,\ldots$ be a
sequence of pairwise distinct constant symbols. For every $r\ge 0$ and
$n\ge 1$, a \emph{type with $n$ centres and radius~$r$} (for short:
\emph{$r$-type with $n$ centres}) is a structure
$\tau = (\cA, a_1,\ldots,a_n)$ over the signature
$\sigma \cup \set{c_1,\ldots,c_n}$, where $\cA$ is a $\sigma$-structure
and $(a_1,\ldots,a_n)\in A^n$ with $A=N_r^\cA(a_1,\ldots,a_n)$, i.e.,
each element of $\cA$ is ``close'' to some element from
$\set{a_1,\dots,a_n}$.  The elements $a_1,\ldots,a_n$ are the
\emph{centres} of~$\tau$.

Let $\cA$ be a $\sigma$-structure. For every non-empty set
$B\subseteq A$, we write $\inducedSubStr{\cA}{B}$ to denote the
restriction of the structure $\cA$ to the universe $B\subseteq A$,
i.e., the $\sigma$-structure with universe $B$, where
$R^{\inducedSubStr{\cA}{B}} = R^\cA \cap B^{\ar(R)}$ for each 
symbol $R\in\sigma$.

For each tuple $\ov{a} = (a_1,\ldots,a_n) \in A^n$, the
\emph{$r$-sphere of $\ov{a}$ in $\cA$} is defined as the $r$-type with
$n$ centres
\begin{gather*}
  \Neighb{r}{\cA}{\ov{a}} \ = \ \big(\inducedSubStr{\cA}{\neighb{r}{\cA}{\ov{a}}},\, \ov{a}\big)
\end{gather*}
over the signature $\sigma \cup \set{c_1,\ldots,c_n}$. 
We say that \emph{$\ov{a}$ is of (or, realises the) type $\tau$ in $\cA$} iff
$\Neighb{r}{\cA}{\ov{a}} \cong \tau$.

For any $d$-bounded structure $\cA$, any node $a\in A$, and any
$r\in\bN$, we have
\[
   |N_r^{\cA}(a)| \quad\le \quad \nu_d(r) \quad:= \quad 1 \ + \ d\cdot\sum_{0\le i<r}(d-1)^i\,.
\]
Observe that for all $r\ge 0$ we have $\nu_0(r) = 1$,
$\nu_1(r) \le 2$, $\nu_2(r) = 2r{+}1$, and
$(d-1)^r \le \nu_d(r) \le d^{r+1}$ for $d\geq 3$, i.e.,
$\nu_d$ grows linearly for $d\le 2$ and exponentially for $d\geq 3$.

For every $d,r\ge 0$ and $n\ge 1$, the universe of every
$d$-bounded $r$-type $\tau$ with $n$ centres contains at most
$n\cdot\nu_d(r)$ elements.  Thus, given $\tau$ and $r$, one can construct a
\emph{sphere-formula $\sph_{\tau}(\ov{x})$} (depending on $\tau$ and $r$), i.e., an
$\FO[\sigma]$-formula such that for every $\sigma$-structure $\cA$ and
every tuple $\ov{a}\in A^n$ we have
\[
  \cA \models \sph_{\tau}[\ov{a}] 
  \ \ \iff \ \
  \Neighb{r}{\cA}{\ov{a}} \ \isom \ \tau\,.
\]
The formula $\sph_{\tau}(\ov{x})$ can be constructed in time
$\bigO(\size{\sigma})$ if $n\cdot\nu_d(r) = 1$, and otherwise in time
$(n \cdot \nu_d(r))^{\bigO(\size{\sigma})}$.

\subsection{Formulas in Hanf normal form for $\FO(\Ps)$}

In this subsection, we fix a relational signature $\sigma$ and a
\emph{unary} numerical predicate collection, i.e., a numerical
predicate collection $(\Ps,\ar,\sem{.})$ with $\ar(\P)=1$ for
all $\P\in\Ps$.  We recall the notion of formulas in Hanf normal form
for the logic $\FO(\Ps)$ from \cite{HeiKS16} (it extends the classical
notion for first-order logic $\FO$, see, e.g., \cite{BolK12}).

A \emph{numerical condition on occurrences of types with one
  centre} (or \emph{numerical oc-type condition})
for $\FO(\Ps)[\sigma]$ is a sentence of the
form
\begin{equation*}
  \P\big(\Count{(y)}{\sph_{\tau}(y)}-k\big)\, ,  
\end{equation*}
where $\P\in\Ps \cup \set{\P_{\exists}}$, $k\in\bN$, and $\tau$ is an
$r$-type with one centre, for some $r\in\NN$ (in \cite{HeiKS16}, such
sentences are called \emph{Hanf-sentences}).  We call $r$ the
\emph{locality radius} of the numerical oc-type condition. The
condition expresses that the number of interpretations for $y$ such
that the $r$-sphere around $y$ is isomorphic to $\tau$ belongs to the
set $\sem{\P}+k$.

A formula $\varphi(\ov{x})$ is in \emph{Hanf normal form for
  $\FO(\Ps)[\sigma]$} if it is a Boolean combination of numerical
oc-type conditions for $\FO(\Ps)[\sigma]$ and sphere-formulas from
$\FO[\sigma]$; in particular, this means that
$\phi\in\FO(\Ps\cup\set{\P_{\exists}})[\sigma]$. Accordingly, a
\emph{sentence} is in Hanf normal form if it is a Boolean combination
of numerical oc-type conditions.  
We will speak of \emph{hnf-formulas}
(for $\FO(\Ps)[\sigma]$) when we mean
``formulas in Hanf normal form'' 
(for $\FO(\Ps)[\sigma]$), and similarly for \emph{hnf-sentences}.
The \emph{locality radius} of
an hnf-formula is the maximum of the locality radii of its numerical
oc-type conditions and its sphere-formulas.

The following theorem summarises the main results of \cite{HeiKS16} and was the
starting point of the work to be reported in the present paper.

\begin{theorem}[\cite{HeiKS16,BolK12}]\ 
\label{thm:main-HeiKS16}
  \begin{enumerate}[(a)]
  \item\label{item1:thm:main-HeiKS16} Let $(\Ps,\ar,\sem{.})$ be a
    numerical predicate collection with $\ar(\P)=1$ for all
    $\P\in\Ps$.
    The following are equivalent:
    \begin{enumerate}[$\bullet$]
    \item For any relational signature $\sigma$, any degree bound
      $d\in\bN$, and any formula $\varphi\in\FO(\Ps)[\sigma]$, there
      exists a $d$-equivalent hnf-formula for $\FO(\Ps)[\sigma]$.
    \item 
      For all $\P\in\Ps$, the set $\sem{\P}$ is ultimately periodic.
    \end{enumerate}
  \item\label{item2:thm:main-HeiKS16} Let $(\Us,\ar,\sem{.})$ be the numerical predicate collection
    from Example~\ref{E-list}\eqref{E-ultimately-periodic}.
    There is an algorithm which receives as input a degree bound
    $d\geq 2$, a relational signature $\sigma$, and a formula
    $\varphi\in\FO(\Us)[\sigma]$, and constructs a $d$-equivalent
    hnf-formula $\psi$ for $\FO(\Us)[\sigma]$.
    The algorithm's running time is in
    \[
      \exp_3\bigl({\bigO(\size{\varphi}+\size{\sigma})+\log\log(d)}\bigr)\,.
    \]
  \item\label{item3:thm:main-HeiKS16} There exists a relational signature~$\sigma$ and a sequence of
    $\FO[\sigma]$-sentences $\varphi_n$ of size $\bigO(n)$ such that
    every 3-equivalent hnf-sentence
    $\psi_n\in\FO(\set{\P_{\exists}})[\sigma]$ has at least
    $\exp_3(n)$ 
    subformulas.
  \end{enumerate}
\end{theorem}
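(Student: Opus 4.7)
The plan is to prove the three parts separately, treating (a) as a characterisation theorem with an inductive constructive direction and a combinatorial non-expressibility direction, (b) as a careful complexity analysis of the construction of (a), and (c) as a lower-bound argument via a family of hard structures.

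For part (a), the implication from ultimate periodicity to existence of hnf-formulas I would prove by induction on the construction of $\varphi\in\FO(\Ps)[\sigma]$. Atoms $x_1{=}x_2$ and $R(\ov{x})$ are 0-sphere formulas after unifying variables, and Boolean combinations trivially preserve the hnf shape. For $\exists y\,\psi(\ov{x},y)$ one applies the classical Hanf locality argument: assuming $\psi$ is already a Boolean combination of sphere formulas of some radius $r$ with at most $|\ov{x}|+1$ centres, one rewrites the quantifier via the numerical oc-type condition $\P_{\exists}(\Count{(y)}{\sph_{\tau}(\ov{x},y)}-0)$ for the appropriate types $\tau$, using a standard splitting of $(r+1)$-neighbourhoods around $\ov{x}$ and a fresh element $y$. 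The key case is $\P(\Count{(y)}{\psi(y)}-k)$, where by induction $\psi(y)\equiv\bigvee_{i}\sph_{\tau_i}(y)$ for pairwise distinct 1-centred types $\tau_i$; then $\Count{(y)}{\psi(y)}=\sum_i\Count{(y)}{\sph_{\tau_i}(y)}$, and I would use ultimate periodicity of $\sem{\P}$ to decompose the condition ``$\sum_i n_i\in \sem{\P}+k$'' into a finite Boolean combination of conditions on the individual summands $n_i$ being in threshold-and-periodic sets, each of which is itself ultimately periodic and hence (after possibly enlarging the predicate collection by derived ultimately periodic predicates that are already in $\Us$ in case (b)) expressible as a numerical oc-type condition.

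For the converse direction of (a), I would argue contrapositively: assume $\sem{\P_0}$ is not ultimately periodic for some $\P_0\in\Ps$ and show that some $\FO(\Ps)$-formula has no $d$-equivalent hnf-formula. The main obstacle is that hnf-formulas are allowed to use $\P_0$ directly, so the obstruction must come from the impossibility of expressing a \emph{sum-in-$\sem{\P_0}$} condition as a Boolean combination of per-type conditions. Concretely, I would work with disjoint unions of local structures realising two or more distinct 1-centred types $\tau_1,\tau_2$ with type counts $(n_1,n_2)$ freely tunable, consider the sentence $\P_0\big(\Count{(y)}{\sph_{\tau_1}(y)\lor\sph_{\tau_2}(y)}\big)$, and show that the hnf-truth function of such a sentence is a finite Boolean combination of threshold-and-coset conditions on $(n_1,n_2)$ (because all the $\P\in\Ps$ that the hypothetical hnf could use may as well be taken ultimately periodic without loss of generality, by considering only finitely many structures), which is an ultimately periodic set of $(n_1+n_2)$-values--contradicting non-periodicity of $\sem{\P_0}$.

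For part (b), I would refine the inductive construction of (a) into an explicit algorithm and track sizes carefully. Let $r_i$ be the radius needed after processing the $i$-th quantifier; the Hanf splitting used at $\exists y$ requires $r_{i+1}=2r_i+1$, so $r_q\in 2^{\bigOh(q)}$ for a formula of quantifier depth $q$. The number of $d$-bounded 1-centred $r$-types is bounded by $\exp_2(\bigOh(\nu_d(r)\cdot\size{\sigma}))$ which, composed with $r=2^{\bigOh(q)}$ and $q\le\size{\varphi}$, yields an $\exp_3$ bound on the number of types and hence on the size of the resulting hnf-formula. The numerical case $\P(\Count{(y)}{\psi}-k)$ with $\P\in\Us$ additionally introduces the threshold and period of $\P$ (read off from its name in $\Us=\set{0,1}^*\$\set{0,1}^+$) as multiplicative factors into the size bound; these are only singly exponential in $\size{\varphi}$ and are absorbed by the outer $\exp_3$. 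The $\log\log(d)$ contribution reflects the size needed to name nodes inside a $d$-bounded $r$-neighbourhood.

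For part (c), I would construct a family $\varphi_n\in\FO[\sigma]$ of sentences of linear size whose $q(n)=\bigOh(n)$ quantifier rank effectively probes neighbourhoods of radius $2^{\bigOh(n)}$. A standard choice is to encode a path- or tree-counting property so that $\varphi_n$ distinguishes two 3-bounded structures iff they differ on a specific sphere type of radius $\asymp 2^n$. Any hnf-sentence $\psi_n$ that is 3-equivalent to $\varphi_n$ must therefore contain, as subformulas, sphere formulas for essentially all such distinguishing types; a counting argument shows that the number of pairwise non-isomorphic 3-bounded $r$-types with one centre is in $\exp_2(\Omega(r))$, so at $r=2^{\Omega(n)}$ this gives $\exp_3(n)$ required subformulas. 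The main obstacle here is to design $\varphi_n$ so that these exponentially many types are genuinely needed--i.e., not collapsible via shared subformulas or Boolean structure--for which I would follow the Adler--Weyer style diagonalisation over type signatures used in the lower-bound literature.
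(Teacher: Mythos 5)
This is a survey theorem: the paper attributes it to \cite{HeiKS16,BolK12} and offers no proof of its own, so there is no in-paper argument to compare against. Judged on its own merits, your sketch of the constructive direction of~(a), of~(b), and of~(c) is plausible but thin; the serious flaw is in the converse direction of~(a). You assert that the $\P\in\Ps$ used by a hypothetical $d$-equivalent hnf-sentence ``may as well be taken ultimately periodic without loss of generality, by considering only finitely many structures.'' This is not a valid reduction: $d$-equivalence quantifies over \emph{all} $d$-bounded structures, and the hnf-sentence is allowed to use $\P_0$ itself, so there is no way to replace $\sem{\P_0}$ by an ultimately periodic surrogate while preserving $d$-equivalence; as written the argument is close to circular. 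No such reduction is needed. Since the hnf-sentence contains only finitely many numerical oc-type conditions, each probing a \emph{single} type-count through a shifted predicate, the truth values of all conditions factor as $(g_1(n_1),g_2(n_2))$ with $g_1,g_2$ of finite range; hence the restricted maps $n_1\mapsto[\text{hnf truth}]$ take only finitely many distinct values as $n_2$ ranges over $\bN$. But $d$-equivalence to $\P_0\bigl(\Count{(y)}{\sph_{\tau_1}(y)\lor\sph_{\tau_2}(y)}\bigr)$ would force these restrictions to equal $\setc{n_1}{n_1+n_2\in\sem{\P_0}}$, and there are infinitely many distinct such sets precisely when $\sem{\P_0}$ is not ultimately periodic. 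Replace your reduction by this direct Myhill--Nerode style counting.

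Two further gaps in the forward direction of~(a). First, when processing $\P(\Count{(y)}{\psi(y)}-k)$ you assume the inductively obtained hnf of $\psi$ is a disjunction $\bigvee_i\sph_{\tau_i}(y)$; in general it is a Boolean combination of sphere-formulas \emph{and} numerical oc-type conditions (sentences), so one must first case-split over the truth values of those sentences before the count decomposes as a sum over type-counts --- this is exactly the role of the paper's Lemma~\ref{lemma:dynamicHanf}. Second, you suggest ``enlarging the predicate collection by derived ultimately periodic predicates,'' but that is not permitted in~(a): the hnf-formula must stay in $\FO(\Ps\cup\set{\P_\exists})[\sigma]$. Instead one has to argue directly that, writing $p$ and $t$ for the minimal period and threshold of $\sem{\P}$, the residue $n\bmod p$ of a sufficiently large $n$ is determined by the vector $\bigl(\P(n),\P(n-1),\dots,\P(n-p+1)\bigr)$ of shifted $\P$-tests, so every per-summand coset or threshold condition needed for the decomposition of $\sum_i n_i\in\sem{\P}+k$ is already a Boolean combination of atomic conditions of the form $\P(n_i-k')$ and $\P_\exists(n_i-k'')$.
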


Claim~\eqref{item1:thm:main-HeiKS16} above implies in particular the existence of
$d$-equivalent hnf-formulas for first-order logic
(cf.~e.g.~\cite{BolK12}). For $\Ps=\set{\quant{D}_p}$
(cf.~Example~\ref{E-list}\eqref{E-divisibility}), the existence of
$d$-equivalent hnf-formulas for $\FO(\Ps)$ also follows from
Nurmonen's work~\cite{Nur00}, and claim~\eqref{item2:thm:main-HeiKS16} provides an
algorithmic version of Nurmonen's theorem.  Claim~\eqref{item2:thm:main-HeiKS16} also
implies the main result from~\cite{BolK12}.  Finally, claim~\eqref{item3:thm:main-HeiKS16}
was already shown in \cite{BolK12}.

\subsection{Hanf normal form for $\FOCN(\Ps)$}
\label{section:main-results}
\label{sec:mainresults}

To also allow some kind of ``Hanf normal form'' for numerical
predicates that are not ultimately periodic, we introduce the notion
of a formula in ``Hanf normal form'', where the ``numerical
oc-type conditions for $\FO(\Ps)$'' are replaced by more general
``numerical oc-type conditions for $\FOCN(\Ps)$'' and, in addition to
Boolean combinations, we also allow quantification over number
variables (but not over structure variables). Recall that the
numerical oc-type condition for $\FO(\Ps)$ is of the form
$\P(\Count{(y)}{\sph_\tau}(y)-k)$ and expresses that the number of
realisations of the type $\tau$ with a single centre, decremented by
$k$, belongs to the set $\sem{\P}$. In numerical oc-type conditions for
$\FOCN(\Ps)$, the ``difference between the number of realisations of
$\tau$ and $k$'' is replaced by an arbitrary multivariate integer polynomial
whose variables are the number of realisations of one-centred
types $\tau_1,\dots,\tau_n$ and number variables from $\NVARS$. In
addition, we give up the restriction to unary numerical predicate
collections. The precise definition is as follows.

A \emph{basic counting term} for $\FOCN(\Ps)[\sigma]$ is a counting term
$t$ of the form
\[
  \Count{(y)}{\sph_\tau(y)}
\]
where $y\in\VARS$, $r\in\NN$ and $\tau$ is an $r$-type with one centre
(over $\sigma$).  The number $r$ is called the \emph{locality radius}
of the basic counting term $t$.  In a $\sigma$-structure $\A$, the
basic counting term $t$ specifies the number $t^\A$ of elements $a\in A$ with
$\Neighb{r}{\A}{a}\isom \tau$.

A \emph{numerical oc-type condition for $\FOCN(\Ps)[\sigma]$} is a
formula that is built from basic counting terms, number variables, and integers,
by addition, multiplication, numerical predicates from
$\Ps\cup\set{\P_{\exists}}$,
Boolean combinations, and quantification of number variables. Its locality
radius is the maximal locality radius of the involved basic counting
terms.
More precisely, the numerical oc-type condition for $\FOCN(\Ps)[\sigma]$ are defined as follows.
\begin{itemize}
\item
The \emph{simple counting terms} for $\FOCN(\Ps)[\sigma]$ are built
from the basic counting terms, number variables from $\NVARS$, and the
rules \eqref{item:constterm} and \eqref{item:plustimesterm} of
Definition~\ref{def:FOC}, i.e., are polynomials over basic counting
terms and number variables with integer coefficients. The locality
radius of a simple counting term $t$ is the maximum of the locality
radii of the basic counting terms that occur in $t$.
\item
An \emph{atomic numerical oc-type condition for $\FOCN(\Ps)[\sigma]$}
is a formula $\chi$ of the form
\[
   \P(t_1,\ldots,t_m),
\]
where $\P\in\Ps\cup\set{\P_{\exists}}$, $m=\ar(\P)$, and $t_j$ is a
simple counting term for each $j\in [m]$. Since simple counting terms
do not have free structure variables from $\VARS$ (but possibly free
number variables from $\NVARS$), the formula $\chi$ is actually a
number formula.  The maximum locality radius of the $t_j$ is called
the \emph{locality radius} of $\chi$.
\item
\emph{Numerical oc-type conditions for $\FOCN(\Ps)[\sigma]$} are built
from atomic numerical oc-type conditions by Boolean combinations and
quantification over number variables, i.e., by applying the rules 
\eqref{item:bool} and \eqref{item:exists:number} of Definition~\ref{def:FO} and Definition~\ref{def:FOCN}.  
The locality radius of a numerical oc-type condition is
the maximal locality radius of the involved atomic numerical oc-type
conditions.
\end{itemize}

A formula $\phi(\ov{x},\ov{\kappa})$ is in
\emph{Hanf normal form for $\FOCN(\Ps)[\sigma]$} if it is a Boolean
combination of numerical oc-type conditions for $\FOCN(\Ps)[\sigma]$ and sphere-formulas from
$\FO[\sigma]$; in particular, this means that 
$\phi\in\FOCN(\Ps\cup\set{\P_{\exists}})[\sigma]$.  
The maximal locality radius of the involved conditions
and formulas is the \emph{locality radius} of the formula in Hanf
normal form.

We abbreviate ``formula in Hanf normal form (for $\FOCN(\Ps)[\sigma]$)''
by \emph{hnf-formula (for \allowbreak $\FOCN(\Ps)[\sigma]$)}. 
Accordingly, a
\emph{hnf-sentence} (for $\FOCN(\Ps)[\sigma]$) is a sentence 
in Hanf normal form, i.e., a Boolean combination of numerical
oc-type conditions without free number variables.

When speaking of the \emph{number of distinct numerical oc-type
  conditions} in an hnf-formula $\psi$ we mean the minimal number $s$
of numerical oc-type conditions $\chi_1,\ldots,\chi_s$ such that
each $\chi_i$ is either an atomic numerical oc-type condition or starts with a number quantifier (i.e., is of the 
form $\exists \kappa\,\chi_i'$ with $\kappa\in\NVARS$), and
$\psi$ is a Boolean combination of $\chi_1,\ldots,\chi_s$ and of
sphere-formulas from $\FO[\sigma]$.

In analogy to the first two statements in
Theorem~\ref{thm:main-HeiKS16}, the following is our main result
regarding the existence and computability of hnf-formulas for $\FOCN(\Ps)$.
\newcounter{saved-theorem-number}
\setcounter{saved-theorem-number}{\arabic{theorem}}
\begin{theorem}\label{thm:main}\ 
  Let $(\Ps,\ar,\sem{.})$ be a numerical predicate collection.
  \begin{enumerate}[(a)]
  \item\label{thm:main:ghnf} For any relational signature $\sigma$,
    any degree bound $d\in\NN$, and any $\FOCN(\Ps)[\sigma]$-formula
    $\phi$, there exists a $d$-equivalent hnf-formula $\psi$ for
    $\FOCN(\Ps)[\sigma]$ of locality radius less than
    $(2\bw(\varphi)+1)^{\br(\varphi)}$.  Moreover,
    $\free(\psi)=\free(\phi)$, $\nqr(\psi)\leq\nqr(\phi)$, and the
    number of distinct numerical oc-type conditions in $\psi$ is at
    most
    \[
      \exp_4\bigl(\poly(\size{\phi}+\size{\sigma})+\log\log(d)\bigr)\,.
    \]
  \item\label{thm:main:algo} There is an algorithm which receives as
    input a degree bound~$d\geq 2$, a relational signature~$\sigma$,
    and an $\FOCN(\Ps)[\sigma]$-formula $\phi$, and constructs such a
 hnf-formula $\psi$ in time
    \[
      \exp_5\bigl(\poly(\size{\varphi}+\size{\sigma})+\log\log(d)\bigr)\,.
    \]
  \end{enumerate}
\end{theorem}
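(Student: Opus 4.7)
I would prove part (a) by structural induction on the expression, simultaneously maintaining the stronger invariant that every counting term $t$ appearing in $\varphi$ is replaced by a $d$-equivalent \emph{simple} counting term, that is, a polynomial over basic counting terms and number variables from $\NVARS$. Part (b) will then follow by observing that every step of this induction is effective, with one additional level of exponentiation accounting for the enumeration of the relevant $r$-types.

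The routine cases are the following. Atomic first-order formulas are already sphere-formulas of locality radius $0$; integer constants and number variables are already simple counting terms. Boolean combinations preserve hnf-form trivially, and if all $t_j$ inside $\P(t_1,\ldots,t_m)$ are simple counting terms, the resulting formula is by definition an atomic numerical oc-type condition. Addition and multiplication of simple counting terms yield simple counting terms, and $\exists\kappa\,\varphi$ with $\kappa\in\NVARS$ is absorbed into the grammar of numerical oc-type conditions.

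The two genuinely local rules are $\exists y\,\varphi$ with $y\in\VARS$ and $\Count{(y_1,\ldots,y_k)}{\varphi}$. For $\exists y$ I would follow the route of \cite{BolK12,HeiKS16}: by the induction hypothesis $\varphi(y,\ov{x},\ov\kappa)$ is already an hnf-formula of some locality radius $r$, so its truth depends only on the isomorphism types of $r$-neighbourhoods; splitting on whether $y$ lies inside or outside the $(2r{+}1)$-neighbourhood of $\ov x$ captures the first case by a Boolean combination of sphere-formulas of radius at most $2r{+}1$ around $\ov x$, and the second case by an atomic numerical oc-type condition using $\P_\exists$ applied to basic counting terms of radius~$r$. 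The case $\Count{(y_1,\ldots,y_k)}{\varphi}$ is handled analogously but stratifying the $k$-tuples $(a_1,\ldots,a_k)$ by their \emph{distance pattern}, i.e.\ by which coordinates are mutually within distance $2r$ and which are within distance $2r$ of some $x_j$. Each such pattern contributes a product of (a)~a Boolean sphere condition on the mixed clusters that touch some $x_j$ and (b)~independent single-centre type counts for each cluster of $y_i$'s disjoint from $\ov x$. Inclusion--exclusion (as in the $k=1$ case of \cite{HeiKS16}) turns each product into a simple counting term, summing over patterns yields a simple counting term for $\Count{\ov y}{\varphi}$, and the locality radius in this step grows by a factor of at most $2\bw(\varphi)+1$, so that after $\br(\varphi)$ nested bindings it stays below $(2\bw(\varphi)+1)^{\br(\varphi)}$ as claimed.

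The main obstacle is the quantitative bookkeeping behind the $\exp_4$-bound on the number of distinct numerical oc-type conditions. After at most $\br(\varphi)\le\size{\varphi}$ nested bindings the locality radius $r$ can reach $(2\bw(\varphi)+1)^{\br(\varphi)}\le\exp_1(\poly(\size{\varphi}))$; the number of $d$-bounded $r$-types with a single centre is doubly exponential in $\nu_d(r)\cdot\size{\sigma}$, hence triply exponential in $\size{\varphi}+\size{\sigma}+\log\log d$; forming all polynomial combinations needed in the tuple-counting case picks up one further exponential, yielding the stated $\exp_4$-bound. Part (b) is then obtained by making every inductive step constructive: enumerate the relevant $r$-types, compute the associated sphere-formulas via the construction preceding the theorem, and list the resulting polynomials. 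The dominant cost is the type enumeration together with equality testing among types, which contributes one more level of exponentiation and leads to the advertised $\exp_5(\poly(\size{\varphi}+\size{\sigma})+\log\log d)$ running time.
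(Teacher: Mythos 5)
Your high-level plan — structural induction, decomposing $\Count{\ov y}{\cdot}$ via neighbourhood types, and tracking the locality radius as $r\mapsto r+k(2r{+}1)$ through each binding — matches the paper's strategy, and your distance-pattern decomposition corresponds to the paper's decomposition of a type $\tau$ into ``components w.r.t.\ $\ov x$'' in the proof of Lemma~\ref{lem:upperbound:terms}. The difference that you keep $\exists y$ as a separate case, while the paper first rewrites $\exists y\,\phi'$ uniformly as $\P_{\exists}(\Count{(y)}{\phi'})$, is merely stylistic.

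The genuine gap is the invariant you propose to maintain. You claim that each counting term $t$ occurring in $\varphi$ can be replaced by a $d$-equivalent \emph{simple} counting term. But a simple counting term is by definition a polynomial over basic counting terms $\Count{(y)}{\sph_\tau(y)}$ and number variables from $\NVARS$, and therefore has \emph{no} free structure variables; a term $t(\ov x,\ov\kappa)=\Count{\ov y}{\theta(\ov x,\ov y,\ov\kappa)}$ in general does depend on $\ov x$, so no single simple counting term can be $d$-equivalent to it. Your own sketch of the key case shows the tension: you describe the result as a ``product'' involving a ``Boolean sphere condition'' on $\ov x$, yet then call the outcome a simple counting term, which cannot carry such a condition. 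A second dependence is also missing: once the inductive hypothesis has turned $\theta$ into an hnf-formula containing numerical oc-type conditions $\chi_1,\ldots,\chi_s$, the value $t^{\A}[\ov a,\ov k]$ depends on which $\chi_j$ hold in $(\A,\ov k)$, and this is not determined by $\ov a$ alone. The paper's actual invariant (Lemmas~\ref{lem:upperbound:terms} and~\ref{lem:upperbound:terms2}) is therefore \emph{parameterised}: for each $d$-bounded type $\rho$ of a suitably enlarged radius and each $J\subseteq[s]$, there is a simple counting term $\hat t_{J,\rho}$ equal to $t^{\A}[\ov a,\ov k]$ whenever $\A\models\sph_\rho[\ov a]$ and $(\A,\ov k)\models\chi_J$; the hnf-formula for $\P(t_1,\ldots,t_m)$ is then the disjunction over all $\rho$ and $J$ of $\sph_\rho(\ov x)\land\chi_J\land\P(t_{1,J,\rho},\ldots,t_{m,J,\rho})$. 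Without setting up this two-fold conditioning the step for rule~\eqref{item:Q} does not close. Once the invariant is corrected, your quantitative accounting is essentially the paper's, though note that the extra exponential in the runtime of part~\eqref{thm:main:algo} comes mainly from iterating over the $2^{s}$ subsets $J\subseteq[s]$ rather than from type enumeration alone.
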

The proofs of these two statements can be found in
Section~\ref{sec:upperbound}. 
Concerning the numerical predicates, our proofs are purely syntactical
and do not rely on the particular semantics $\sem{\P}$ of the
numerical predicates $\P\in\Ps$.

From statement \eqref{thm:main:ghnf}, we
infer in Section~\ref{sec:applications} a polynomial bound for the
locality rank of first-order formulas of bounded quantifier
alternation depth as well as a connection between our logic and
bounded arithmetic; algorithmic applications of statement
\eqref{thm:main:algo} for model checking and query evaluation are also
discussed in Section~\ref{sec:applications}.

Note that Theorem~\ref{thm:main}\eqref{thm:main:ghnf} 
implies that if $\phi$ is in $\FOC(\Ps)$ (i.e., contains no number
variables), then the hnf-formula $\psi$ is in $\FOC(\Ps\cup\set{\P_{\exists}})$. For
$\phi$ in $\FO(\Ps)$, however, Theorem~\ref{thm:main-HeiKS16}\eqref{item1:thm:main-HeiKS16}
ensures that $\psi$ is not always in $\FO(\Ps\cup\set{\P_\exists})$.

Suppose $\ar(\P)=1$ for all $\P\in\Ps$ (i.e., $\Ps$ is \emph{unary})
and let $\varphi\in\FO(\Ps)$.  
Since $\FO(\Ps)\subseteq\FOC(\Ps)$, there is a $d$-equivalent
hnf-formula $\psi$ for $\FOCN(\Ps)$, and we even know that
$\psi\in\FOC(\Ps\cup\set{\P_\exists})$. An analysis of the proof of 
Theorem~\ref{thm:main}\eqref{thm:main:ghnf} yields
that all counting terms that appear in $\psi$ have the form
$i$ with $i\in\bN$ or
\[
  \Big(\sum_{\tau\in T}\Count{(y)}{\sph_\tau(y)}\Big)\ -\ k
\]
for some set $T$ of types of radius~$r$ and some $k\in\bN$. Since all
predicates from $\Ps$ are unary, we can eliminate the constant
counting terms $i\in\bN$ by replacing $\P(i)$ with $\true$ or $\false$
depending on whether $i\in\sem{\P}$ or not.  
Clearly,
\[
   \sum_{\tau\in T}\Count{(y)}{\sph_\tau(y)}
   \quad \equiv \quad
   \Count{(y)}{\bigvee_{\tau\in T}\sph_\tau(y)}
\]
since the types from $T$ all have the same radius and no element can satisfy 
the sphere-formulas for two different types from $T$.  
Hence, $\psi$ can
be transformed into a Boolean combination of sphere-formulas and of
sentences of the form
\begin{equation}
  \label{eq:weak-numerical-oc-type-conditions}
  \P\big(\Count{(y)}{\bigvee_{\tau\in T}\sph_\tau(y)} \ - \ k\big)
\end{equation}
with $\P\in\Ps\cup\set{\P_{\exists}}$.
We call such a Boolean combination a \emph{formula in weak Hanf normal
  form for $\FO(\Ps)$} or \emph{whnf-formula} since it weakens the
condition on numerical oc-type conditions in hnf-formulas for
$\FO(\Ps)$ (that requires $\setsize{T}=1$). As a result, we obtain

\begin{corollary}\label{cor:whnf}
  Let $(\Ps,\ar,\sem{.})$ be a unary numerical predicate
  collection. For any relational signature $\sigma$, any degree bound
  $d\in\bN$, and any formula $\varphi\in\FO(\Ps)[\sigma]$, there exists a
  $d$-equivalent formula $\psi$ in weak Hanf normal form for $\FO(\Ps)[\sigma]$.
\end{corollary}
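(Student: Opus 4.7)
The plan is essentially the argument sketched in the paragraph preceding the corollary: lift $\varphi$ into the ambient logic $\FOCN(\Ps)$, apply Theorem~\ref{thm:main}\eqref{thm:main:ghnf} to produce a $d$-equivalent hnf-formula $\psi$, and then rewrite $\psi$ syntactically into the shape demanded by weak Hanf normal form.

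First I would observe that $\FO(\Ps)\subseteq\FOC(\Ps)\subseteq\FOCN(\Ps)$, so Theorem~\ref{thm:main}\eqref{thm:main:ghnf} immediately yields a $d$-equivalent hnf-formula $\psi$ for $\FOCN(\Ps)[\sigma]$ with $\free(\psi)=\free(\varphi)$. Since $\varphi$ contains no number variables at all, we have $\nqr(\psi)\le\nqr(\varphi)=0$, so $\psi$ in fact lives in $\FOC(\Ps\cup\set{\P_\exists})[\sigma]$ and contains no quantifier of the form $\exists\kappa$. The key structural fact I would then extract from a careful reading of the construction behind Theorem~\ref{thm:main}\eqref{thm:main:ghnf} is that every counting term appearing in $\psi$ has one of just two shapes: either an integer constant $i\in\bN$, or a ``type-sum''
\[
  \Bigl(\sum_{\tau\in T}\Count{(y)}{\sph_\tau(y)}\Bigr)\ -\ k
\]
for some set $T$ of $r$-types with one centre (all of the same radius) and some $k\in\bN$.

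With this in hand, the rewriting is purely syntactic and uses the unarity of $\Ps$ in an essential way. Every atomic numerical oc-type condition in $\psi$ has the form $\P(t)$ with $\P\in\Ps\cup\set{\P_\exists}$ and $t$ a simple counting term, because $\ar(\P)=1$. If $t$ is an integer constant $i$, then $\P(i)$ has a fixed truth value, so I replace it with $\true$ if $i\in\sem{\P}$ and with $\false$ otherwise; this eliminates all constant counting terms. For the remaining atoms I use the fact that the sphere-formulas associated with distinct $r$-types of one centre are pairwise mutually exclusive on single elements, so
\[
  \sum_{\tau\in T}\Count{(y)}{\sph_\tau(y)} \ \equiv \ \Count{(y)}{\bigvee_{\tau\in T}\sph_\tau(y)}\,.
\]
Substituting this identity transforms every surviving atomic numerical oc-type condition in $\psi$ into one of the shape~\eqref{eq:weak-numerical-oc-type-conditions}, so $\psi$ becomes a Boolean combination of sphere-formulas and of formulas of that shape, which is precisely a whnf-formula.

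The expected obstacle is not a deep combinatorial one but a bookkeeping verification: confirming that the construction implicit in Theorem~\ref{thm:main}\eqref{thm:main:ghnf}, applied to an input from $\FO(\Ps)$, really produces only the two shapes of counting terms listed above (and in particular no genuine multiplications or arbitrary integer polynomials over basic counting terms). Once that structural invariant is nailed down, the elimination of constants and the merging of type-sums via disjunction are routine, and yield the desired whnf-formula $\psi$.
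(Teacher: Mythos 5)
Your proposal is correct and follows essentially the same route as the paper: the paragraph immediately preceding Corollary~\ref{cor:whnf} proves it by exactly the chain you describe (apply Theorem~\ref{thm:main}\eqref{thm:main:ghnf}, note the hnf-formula lands in $\FOC(\Ps\cup\set{\P_\exists})$ since $\nqr=0$, extract from the construction that the counting terms are either constants or $(\sum_{\tau\in T}\Count{(y)}{\sph_\tau(y)})-k$, eliminate constant atoms by unarity, and merge the sum into a single counting term via the disjunction identity). The paper likewise defers the structural-invariant verification to an ``analysis of the proof of Theorem~\ref{thm:main}\eqref{thm:main:ghnf}'', so your flagging that step as the real bookkeeping burden is accurate.
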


As can be seen from the proof above, the formula $\psi$ can be
constructed effectively and the bounds from Theorem~\ref{thm:main}
apply here as well; in particular, the number of subformulas of the
form~\eqref{eq:weak-numerical-oc-type-conditions} is 4-fold
exponential in the size of $\varphi$. For this setting, we get a matching
lower bound in analogy to
Theorem~\ref{thm:main-HeiKS16}\eqref{item3:thm:main-HeiKS16};
the proof can be found in Section~\ref{sec:lowerbound}:

\setcounter{theorem}{\arabic{saved-theorem-number}}
\begin{theorem}[continued]\ 
  \begin{enumerate}[(a)]
    \addtocounter{enumi}{2}
  \item\label{thm:main:lowerbound} There exists a unary numerical
    predicate collection $(\Ps,\ar,\sem{.})$, a relational
    signature~$\sigma$, and a sequence of $\FO(\Ps)[\sigma]$-sentences
    $\varphi_n$ of size $\bigO(n)$ such that for every 3-equivalent
    weak hnf-sentence $\psi_n$ for $\FO(\Ps)[\sigma]$, the number of
    distinct subformulas of 
    the form~\eqref{eq:weak-numerical-oc-type-conditions} in $\psi_n$
    is at least $\exp_4(n)$, for every $n\geq 1$.
  \end{enumerate}
\end{theorem}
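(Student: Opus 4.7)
The plan is to lift the $\exp_3(n)$ lower bound of Theorem~\ref{thm:main-HeiKS16}\eqref{item3:thm:main-HeiKS16} to an $\exp_4(n)$ bound by exploiting the richer parameter space of weak hnf atomic subformulas. Each such subformula $\P(\Count{(y)}{\bigvee_{\tau\in T}\sph_\tau(y)}-k)$ is parameterised by a subset $T$ of the realised $1$-centred types, so from $N:=\exp_3(n)$ realised types alone one obtains a priori $2^N=\exp_4(n)$ distinct subformulas (a bound that also underlies Corollary~\ref{cor:whnf} and Theorem~\ref{thm:main}\eqref{thm:main:ghnf}). The goal is to exhibit a family $\varphi_n$ of size $O(n)$ for which this bound is essentially tight.

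I would start from (a suitable variant of) the family of $3$-bounded structures used in the proof of Theorem~\ref{thm:main-HeiKS16}\eqref{item3:thm:main-HeiKS16}: a relational signature $\sigma_0$ and a collection of structures realising $N=\exp_3(n)$ distinct $1$-centred types $\tau_1,\dots,\tau_N$ of some radius polynomial in $n$, with realisation counts varying over a large portion of $\bN^N$. I then introduce a unary numerical predicate collection $\Ps=\set{\P_n:n\ge 1}$ whose semantics $\sem{\P_n}\subseteq\bN$ is crafted so that the sentence $\varphi_n:=\P_n(\Count{(y)}{\psi_n(y)})$, for a suitable $\psi_n\in\FO[\sigma_0]$ of size $O(n)$, computes a specific Boolean function $f_n$ of the count vector $(c_1,\dots,c_N)$, where $c_i$ is the number of realisations of $\tau_i$. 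The function $f_n$ is chosen by diagonalisation: I enumerate the at most $\exp_4(n)$ possible atomic weak hnf-subformulas (up to $3$-equivalence), note that a weak hnf-sentence using strictly fewer than $\exp_4(n)$ of them computes a function in a strictly smaller class, and pick $f_n$ outside that class.

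The principal obstacle is the concrete realisation of this scheme as a short formula. Each atomic weak hnf-subformula induces a function of the count vector that factors through the single integer $\sum_{i\in I_T} c_i$ for some $I_T\subseteq[N]$; hence I must choose the structures, $\sem{\P_n}$ and $\varphi_n$ so that (i) $|\varphi_n|\in O(n)$ while (ii) $f_n$ cannot be written as a Boolean combination of fewer than $2^N$ such subset-sum-threshold functions. For~(ii) I would argue by a separation argument in the spirit of~\cite{BolK12}: for each $T\subseteq[N]$ I would exhibit two structures in the family whose count vectors agree on every sum $\sum_{i\in I_{T'}} c_i$ with $T'\ne T$ but differ on $\sum_{i\in I_T} c_i$, and on which $\varphi_n$ disagrees; any equivalent weak hnf-sentence must then contain an atomic subformula whose ``$T$-slot'' is~$T$. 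This amounts to a one-level-higher analogue of the [BolK12] lower bound, and calibrating the predicates and structures so that all $2^N$ such separating pairs can coexist within a single $3$-bounded family is the core technical difficulty.
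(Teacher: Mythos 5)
Your high-level intuition is the right one: the $T$-slot in a weak hnf atomic subformula gives a $2^N$-dimensional parameter space over $N=\exp_3(n)$ types, and the game is to design $\varphi_n$ so that essentially all of these parameters are needed. And you correctly identify the crux: building a single $3$-bounded family on which all the separating pairs coexist. But the separation criterion you propose is mathematically impossible, and the diagonalisation step has a circularity; both need the key ideas that the paper introduces.

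The fatal problem with your separation plan is this: you want, for each $T\subseteq[N]$, two structures whose count vectors $\ov{c},\ov{c}'$ agree on every subset-sum $\sum_{i\in T'}c_i$ with $T'\neq T$ but differ on $\sum_{i\in T}c_i$. This can never happen once $N\geq 2$, because the $2^N$ subset-sum functionals are linearly dependent: e.g.\ $\sum_{i\in T}c_i = \sum_{i\in T}(\sum_{\{i\}}c_j)$, or $\sum_{T}c_i=\sum_{[N]}c_i - \sum_{[N]\setminus T}c_i$, so knowing all subset sums except $\sum_T$ determines $\sum_T$. The separation has to be done one level up, at the level of the truth values $\P(\text{sum}-k)$ rather than the raw sums. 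That is exactly what the paper's notion of a \emph{rich} set $R$ (Section~\ref{sec:lowerbound}) is engineered for: given $\ov{a_0}\in\set{0,1}^s\setminus\set{\ov 0}$ and any finite list $(\ov{a_i},c_i)$ with $(\ov{a_0},0)\neq(\ov{a_i},c_i)$, one can find $\ov x,\ov y$ flipping $\ov{a_0}^\transpose\ov x\in R$ vs.\ $\ov{a_0}^\transpose\ov y\notin R$ while keeping $\ov{a_i}^\transpose\ov x-c_i\in R \iff \ov{a_i}^\transpose\ov y-c_i\in R$ for all $i$. Without this, the paper's Lemma~\ref{lem:lower-bound-linear} (the analogue of your per-$T$ separation) does not get off the ground.

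Your diagonalisation is also under-specified and likely circular: the class of Boolean functions expressible with $<\exp_4(n)$ atomic weak hnf-conditions depends on the semantics of the very predicates $\P_n$ you propose to choose afterwards, and the atomic conditions range over infinitely many $k\in\bN$, so a naive finite enumeration does not apply. The paper sidesteps this by fixing a single rich predicate $\quant R$ once and for all (Propositions~\ref{P-random-rich}, \ref{P-large-gaps-rich} show such sets exist, e.g.\ $\setc{n!}{n\in\bN}$), then proving the lower bound by the injective assignment $B\mapsto i$ in the Claim inside Lemma~\ref{lem:lower-bound-linear}, with no diagonalisation at all.

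Finally, you leave the construction of $\psi_n$ entirely abstract, but this is where the remaining substance lies. The count $\Count{(y)}{\psi_n(y)}$ must \emph{not} itself be a linear function of the one-centred type counts: if it were, $\varphi_n$ would already be a single atomic condition. The paper uses forests of complete labeled binary trees of height $2^n$ over $\sigma_\tree=\set{E_0,E_1,X}$, and $\chi_n(x)$ says ``$x$ is an unmarked root with a marked root having the same subtree''; then $\Count{(x)}{\chi_n(x)}$ multiplies a count by an indicator (whether the marked partner occurs), which is the non-linearity that forces $\exp_4(n)$ distinct conditions. The $\bigOh(n)$ formula size for $\chi_n$ comes from a Frick–Grohe gadget for comparing paths of length $2^n$. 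Your ``suitable $\psi_n\in\FO[\sigma_0]$'' needs to do something of exactly this kind, and it is not a routine adaptation of the [BolK12] family.
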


\section{Construction of  hnf-formulas for $\FOCN(\Ps)$}\label{sec:upperbound}

The following lemma summarises easy facts concerning types (cf., \cite{HeiKS16,BKS_ICDT17}).

\begin{lemma}\label{lem:basic_facts}
  Let $d\geq 2$ and
  let $\A$ be a $d$-bounded $\sigma$-structure.
  Let $r\geq 0$, $k\geq 1$, and $\ov{a}=a_1,\ldots,a_k\in A$.
  \begin{enumerate}[(a)]
  \item\label{eq:Nsize:lem:basic_facts}
    $\Setsize{N_{r}^{\A}(\ov{a})} \ \leq \ k\cdot\nu_d(r) \ \leq \ k d^{r+1}$.
    For $d=2$ we even have $\Setsize{N_{r}^{\A}(\ov{a})} \ \leq \ k(2r{+}1)$. 
  \item\label{eq:Ncompute:lem:basic_facts}
    Given $\A$ and $\ov{a}$, the $r$-sphere $\NrA{\ov{a}}$ can
    be computed in time 
    $\bigl(k \nu_d(r)\bigr)^{\bigoh(\size{\sigma})} \leq \bigl(k
    d^{r+1}\bigr)^{\bigoh(\size{\sigma})}$.
    For $d=2$ it can even be computed in time 
    $\bigl(k (2r{+}1)\bigr)^{\bigoh(\size{\sigma})}$.
  \item\label{eq:Nconn2:lem:basic_facts}
    $\NrA{a_1,a_2}$ is connected if and only if
    $\dist^\A(a_1,a_2)\leq 2r+1$.
  \item\label{eq:Nconnk:lem:basic_facts}
    If $\NrA{\ov{a}}$ is connected, then
    $\nrA{\ov{a}}\subseteq \neighb{r+(k-1)(2r+1)}{\A}{a_i}$,
    for all $i\in[k]$.
  \item\label{eq:Nisom:lem:basic_facts}
    Let $\B$ be a $d$-bounded $\sigma$-structure and let
    $\ov{b}=b_1,\ldots,b_k\in B$. 

    It can be tested in time
    \ $(k\nu_d(r))^{\bigOh(\size{\sigma}+ k\nu_d(r))} \ \leq \ 
       2^{\bigOh(\size{\sigma}k^2 \nu_d(r)^2)} \ \leq \ 
       2^{\bigOh(\size{\sigma}k^2 d^{2r+2})}$ \
    whether 
    $\NrA{\ov{a}} \isomorph \NrB{\ov{b}}$.
    For $d=2$ the test can even be performed in time 
    $2^{\bigOh(\size{\sigma}k^2 r^2)}$.
  \end{enumerate}
\end{lemma}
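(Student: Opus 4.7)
All five claims are elementary. I would handle the combinatorial parts (a)--(d) by standard breadth-first-search arguments on the Gaifman graph, and the computational part (e) by brute-force isomorphism checking.

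For (a), fix a single centre $a$ and show by induction on $i$ that the $i$-th BFS shell around $a$ has size at most $d(d{-}1)^{i-1}$ for $i \geq 1$ (each vertex in shell $i{-}1$ contributes at most $d{-}1$ new neighbours in shell $i$, while the root contributes at most $d$). Summing over $i = 0, \ldots, r$ yields $\nu_d(r)$, and a union bound over the $k$ centres gives the first inequality; the cruder $k d^{r+1}$ bound follows from $\nu_d(r) \leq d^{r+1}$. The $d = 2$ refinement uses that a degree-$2$ Gaifman graph is a disjoint union of paths and cycles, so $|N_r^\A(a)| \leq 2r+1$. For (b), run BFS from each $a_i$ up to depth $r$ to discover $N_r^\A(\ov{a})$. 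To extract the induced substructure, iterate over each $R \in \sigma$ and over the at most $|N_r^\A(\ov{a})|^{\ar(R)}$ candidate tuples, retaining those actually in $R^\A$ by a single lookup. Since $\ar(R) \leq \size{\sigma}$, the total time is $\bigl(k\nu_d(r)\bigr)^{\bigoh(\size{\sigma})}$.

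For (c), the ``if'' direction takes a shortest $\A$-path $a_1 = u_0, \ldots, u_m = a_2$ with $m \leq 2r+1$: for each $i$ either $i \leq r$ (so $u_i \in N_r^\A(a_1)$) or $m - i \leq r$ (so $u_i \in N_r^\A(a_2)$), hence the path lies inside $N_r^\A(a_1, a_2)$ and witnesses connectedness. The ``only if'' direction uses that any Gaifman-path in the sphere is also an $\A$-path, and any such path from $a_1$ to $a_2$ through $N_r^\A(a_1) \cup N_r^\A(a_2)$ can be shortened to length at most $2r+1$ by cutting at a vertex reachable within radius $r$ from both centres. For (d), I would iterate (c): connectedness of the sphere yields a Gaifman-path from $a_i$ to any other centre $a_j$; associating each vertex of this path with some nearby centre and observing that two adjacent vertices on the path whose associated centres $c \neq c'$ differ satisfy $\dist^\A(c, c') \leq 2r+1$, a telescoping argument with at most $k - 1$ centre changes gives $\dist^\A(a_i, a_j) \leq (k-1)(2r+1)$. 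Since every $b \in N_r^\A(\ov{a})$ lies within distance $r$ of some centre, the triangle inequality delivers the claimed containment.

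For (e), both universes have size at most $n := k\nu_d(r)$, so brute-force enumeration of the at most $n^n$ bijections between them, each verified in time $\bigoh(\size{\sigma} \cdot n^{\size{\sigma}})$ by checking relation preservation, yields total time $n^{\bigoh(n + \size{\sigma})} = \bigl(k\nu_d(r)\bigr)^{\bigoh(\size{\sigma} + k\nu_d(r))}$; the $d = 2$ refinement follows from $\nu_2(r) = 2r+1$. The one place that really needs care, rather than a purely mechanical BFS computation, is (c)--(d): since the Gaifman graph of the induced substructure $\A[N_r^\A(\ov{a})]$ can in general be a strict subgraph of $G_\A$ restricted to $N_r^\A(\ov{a})$, one must verify that the relevant shortest $\A$-paths between close centres are witnessed by tuples living entirely inside the sphere.
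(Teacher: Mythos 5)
The paper gives no proof of this lemma (it is cited from \cite{HeiKS16,BKS_ICDT17} as a collection of ``easy facts''), so there is no internal argument to compare against; your proof is the standard BFS/brute-force argument, and it is sound for (a), (b) and (e). The concern you raise at the end --- that $G_{\A[N]}$ may be a strict subgraph of $G_\A$ restricted to $N=N_r^\A(\ov{a})$ --- is exactly the right thing to notice, but you leave it unresolved. For (d) it is harmless: the only inclusion your argument needs is that every edge of the sphere's Gaifman graph is also a $G_\A$-edge, which always holds because $R^{\A[N]}\subseteq R^\A$. (Do repair the telescoping step, though: the path may switch associated centre more than $k-1$ times, so pass to a chain of length $\le k-1$ in the auxiliary graph on the at-most-$k$ centres that occur along the path.)

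For (c) the flagged gap is fatal under the literal reading. Take $\sigma=\{R\}$ with $R$ ternary, $A=\{a_1,b,a_2\}$, $R^\A=\{(a_1,b,a_2)\}$ and $r=0$; then $\A$ has degree $2$, $\dist^\A(a_1,a_2)=1=2r+1$, and $N_0^\A(a_1,a_2)=\{a_1,a_2\}$, yet $R^{\A[N]}=\emptyset$, so the Gaifman graph of $\Neighb{0}{\A}{a_1,a_2}$ has no edges and the sphere is disconnected. The statement --- and your shortest-path proof of the ``if'' direction --- becomes correct once ``$\NrA{\ov{a}}$ is connected'' is read as ``$N_r^\A(\ov{a})$ induces a connected subgraph of $G_\A$'', which is the standard convention in this literature and presumably what the cited sources intend: under that reading every edge of your shortest path automatically lies in the induced subgraph, and no further verification is needed. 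One more fix in your ``only if'' direction: when $\dist^\A(a_1,a_2)=2r+1$ the sets $N_r^\A(a_1)$ and $N_r^\A(a_2)$ are disjoint, so there need be no single vertex within radius $r$ of both centres; instead locate the edge $\{v,v'\}$ at which the path crosses from $N_r^\A(a_1)$ into $N_r^\A(a_2)$ and conclude $\dist^\A(a_1,a_2)\le r+1+r$ by the triangle inequality.
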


For our algorithms it will be convenient to work with a fixed list of
representatives of $d$-bounded $r$-types, provided by the following
lemma (see \cite{HeiKS16,BKS_ICDT17} for a proof).

\begin{lemma}\label{lemma:isotypes}
There is an algorithm which upon input of a relational signature $\sigma$, a degree
bound $d\geq 2$, a radius $r\geq 0$, and a number $k\geq 1$,
computes a list $\Typeslistrdk= \tau_1,\ldots,\tau_\ell$ (for a
suitable $\ell\geq 1$) of $d$-bounded
$r$-types with $k$ centres (over $\sigma$), such that 
for every $d$-bounded $r$-type $\tau$ with $k$ centres (over
$\sigma$) there is exactly one $i\in[\ell]$ such that
$\tau\isom\tau_i$.
The algorithm's runtime is \
$2^{(k\nu_d(r))^{\bigOh(\size{\sigma})}}$.
Furthermore, upon input of a $d$-bounded $r$-type $\tau$ with $k$
centres (over $\sigma$), the particular $i\in[\ell]$ with
$\tau\isom\tau_i$ can be computed in time
$2^{(k\nu_d(r))^{\bigOh(\size{\sigma})}}$. 
\end{lemma}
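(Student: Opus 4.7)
The plan is to obtain the list by brute-force enumeration combined with pairwise isomorphism testing, leveraging Lemma~\ref{lem:basic_facts} throughout. First, I would fix the bound $n := k\cdot\nu_d(r)$ which, by Lemma~\ref{lem:basic_facts}\eqref{eq:Nsize:lem:basic_facts}, is an upper bound on the universe size of any $d$-bounded $r$-type with $k$ centres (since its universe must coincide with $N_r^\A(c_1^\A,\ldots,c_k^\A)$). Up to isomorphism we may assume the universe is $[m]$ for some $m\in[1,n]$, so it suffices to enumerate all structures $\A$ over $\sigma\cup\{c_1,\ldots,c_k\}$ on each such universe.

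The enumeration step would iterate through all assignments of each relation symbol $R\in\sigma$ to a subset of $[m]^{\ar(R)}$ and of each constant symbol (including $c_1,\ldots,c_k$) to an element of $[m]$. The number of candidates is bounded by $m^{k+|\sigma_c|}\cdot\prod_{R\in\sigma}2^{m^{\ar(R)}} \le 2^{(km\cdot\size{\sigma})^{\bigOh(1)}}$ which, plugging in $m\le k\nu_d(r)$, is $2^{(k\nu_d(r))^{\bigOh(\size{\sigma})}}$. For each candidate $\A$, I would (i)~check that $\A$ is $d$-bounded by inspecting its Gaifman graph, (ii)~compute $N_r^\A(c_1^\A,\ldots,c_k^\A)$ using Lemma~\ref{lem:basic_facts}\eqref{eq:Ncompute:lem:basic_facts} and verify that it equals $[m]$, and (iii)~check whether the resulting $r$-type $(\A,c_1^\A,\ldots,c_k^\A)$ is isomorphic to some type already added to the list, using Lemma~\ref{lem:basic_facts}\eqref{eq:Nisom:lem:basic_facts}. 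If not, append it.

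For the second algorithm (given a $d$-bounded $r$-type $\tau$ with $k$ centres, find the unique $i$ with $\tau\isom\tau_i$), I would first run the listing algorithm, then perform a linear scan through $\tau_1,\ldots,\tau_\ell$, testing $\tau\isom\tau_i$ via Lemma~\ref{lem:basic_facts}\eqref{eq:Nisom:lem:basic_facts} and returning the unique matching index.

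The time bound then follows from a direct calculation: each isomorphism check costs $2^{\bigOh(\size{\sigma}k^2\nu_d(r)^2)}$, the number of pairwise checks is dominated by the square of the number of candidates, and both the enumeration and the validity checks fit within $2^{(k\nu_d(r))^{\bigOh(\size{\sigma})}}$. The main obstacle is purely bookkeeping: being careful that the exponent after all nested enumerations and isomorphism tests still collapses to $(k\nu_d(r))^{\bigOh(\size{\sigma})}$ rather than blowing up by an extra factor that would push the bound into a higher tower level. No conceptually hard step is involved; everything reduces to the combinatorial bounds of Lemma~\ref{lem:basic_facts}.
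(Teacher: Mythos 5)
The paper itself does not prove this lemma; it simply cites \cite{HeiKS16,BKS_ICDT17} for a proof. Your brute-force enumeration plus pairwise isomorphism filtering is the standard way such a lemma is established and is the natural argument; your approach is correct in substance.

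One intermediate bound in your write-up is off, although it does not affect the conclusion. You claim that $m^{k+|\sigma_c|}\cdot\prod_{R\in\sigma}2^{m^{\ar(R)}}$ is at most $2^{(km\cdot\size{\sigma})^{\bigOh(1)}}$, but $\prod_{R\in\sigma}2^{m^{\ar(R)}} = 2^{\sum_{R}m^{\ar(R)}}$, and the exponent here is of order $m^{\max_R\ar(R)}$, which grows roughly like $m^{\size{\sigma}}$ — exponential in $\size{\sigma}$, not polynomial in $km\size{\sigma}$. The correct intermediate estimate is $2^{m^{\bigOh(\size{\sigma})}}$, from which substituting $m\le k\nu_d(r)$ does give the claimed $2^{(k\nu_d(r))^{\bigOh(\size{\sigma})}}$. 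The rest of your accounting (validity checks via Lemma~\ref{lem:basic_facts}(b), pairwise isomorphism tests via Lemma~\ref{lem:basic_facts}(e), and the linear scan for the index-lookup part) is sound and stays within the stated time bound.
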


Throughout the remainder of this paper, $\Typeslistrdk$ will always
denote the list provided by Lemma~\ref{lemma:isotypes}.
We will write $\tau\in\Typeslistrdk$ to express that $\tau$ is one of the types
$\tau_1,\ldots,\tau_\ell$ of the list $\Typeslistrdk$.

Our proof of parts \eqref{thm:main:ghnf} and \eqref{thm:main:algo} of
Theorem~\ref{thm:main} proceeds by induction on the construction of
the input formula. A major technical step for the construction is
provided by the following lemma.

\begin{lemma}\label{lem:upperbound:terms}
  Let $\sigma$ be a relational signature, let $r\geq 0$, $n\geq 0$,
  $k\geq 1$, let $(x_1,\ldots,x_n,y_1,\ldots,y_k)$ be a tuple of
  $n{+}k$ pairwise distinct variables in $\VARS$, let
  $\ov{x}=(x_1,\ldots,x_n)$, let $\ov{y}=(y_1,\ldots,y_k)$, let
  $\tau\in\Typeslistrd{n{+}k}$, and let
  \[
     t(\ov{x}) \ \ \deff \ \ \Count{\ov{y}}{\sphere{\tau}{\ov{x},\ov{y}}}.
  \]
  \begin{enumerate}[$\bullet$]
  \item If $n=0$, then there is a simple counting term $\hat{t}$
    without number variables such that \ $ t^{\A} = \hat{t}^{\A}$, \
    for any $d$-bounded $\sigma$-structure $\A$.
  \item If $n\neq 0$, then for every
    $R'\geq R\deff r+k{\cdot}(2r{+}1)$ and every
    $\rho\in\TypeslistRStrichd{n}$, there is a simple counting term
    $\hat{t}_\rho$ without number variables, such that \
    $ t^{\A}[\ov{a}] = \hat{t}_\rho^\A$ \ holds for any $d$-bounded
    $\sigma$-structure $\A$ and any tuple $\ov{a}\in A^n$ of type
    $\rho$ (i.e., any $\ov{a}\in A^n$ with
    $\A\models\sph_{\rho}[\ov{a}]$).
  \end{enumerate}
  Furthermore, $\hat{t}$ and $\hat{t}_\rho$ have locality radius at
  most $\hat{R}\deff r+(k{-}1)(2r{+}1)$. Moreover, there is an
  algorithm which constructs $\hat{t}$ and $\hat{t}_\rho$,
  respectively, within time 
  $\exp_1\bigl(((n+k)\cdot\nu_d(\hat{R}))^{\bigOh(\size{\sigma})}\bigr)$\,.
\end{lemma}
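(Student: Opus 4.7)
I would decompose $\tau$ into its connected components $\tau_1,\dots,\tau_s$, partitioning the $n{+}k$ centres into groups; let $I_i\subseteq[n]$ and $J_i\subseteq[k]$ denote the indices of the $x$- and $y$-centres in $\tau_i$. The key observation is that in a $d$-bounded $\sigma$-structure $\A$, the pair $(\ov a,\ov b)$ has $r$-type $\tau$ if, and only if, each sub-tuple on $I_i\cup J_i$ realises $\tau_i$ and centres belonging to distinct components are pairwise at distance strictly greater than $2r{+}1$ in $\A$ (so that the involved $r$-balls are disjoint and non-adjacent, making the global $r$-neighbourhood decompose as intended). The plan is then to turn both conditions into a polynomial over one-centre counts $\Count{(y)}{\sph_\sigma(y)}$ via a Hanf-style inclusion-exclusion.

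For the case $n=0$ I would start from the fact that for every connected $r$-type $\nu$ with $q\le k$ centres, Lemma~\ref{lem:basic_facts}\eqref{eq:Nconnk:lem:basic_facts} forces every $q$-tuple of type $\nu$ in $\A$ to lie inside the $\hat R$-ball of its first coordinate, so that
\[
  \Count{\ov y'}{\sph_\nu(\ov y')} \ \equiv_d \
  \sum_{\sigma\in\Typeslistd{\hat R}{1}}\, N(\sigma,\nu)\cdot\Count{(y)}{\sph_\sigma(y)}\,,
\]
where $N(\sigma,\nu)$ is an integer computable from a representative of $\sigma$. Multiplying one such expansion per component $\tau_i$ accounts for all $\ov b$ in which every sub-tuple realises the required $\tau_i$, but dropped inter-component distance constraints; I would then correct by Möbius inversion over all ``merge patterns'' on $[s]$ that collapse colliding components into new connected $r$-types (themselves with at most $k$ centres, and hence still of extent $\le\hat R$), whose counts are again expressible through the displayed identity. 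This yields the desired simple counting term $\hat t$ of locality radius $\le\hat R$. For $n\ge1$ and a fixed $\ov a$ of type $\rho$, I would split the components as $S=\{\,i:I_i\ne\emptyset\,\}$ and $T=\{\,i:I_i=\emptyset\,\}$. A $y$-centre in a component $i\in S$ can be reached from its nearest $x$-centre along a centre-path that stays inside $\tau_i$ and uses at most $|J_i|\le k$ edges of length $2r{+}1$, hence lies at distance $\le k(2r{+}1)$ from some $a_\ell$; its entire $r$-ball therefore sits inside the $R'$-neighbourhood of $\ov a$ that is visible in $\rho$, and the number of valid placements of the $y$'s of this component is a constant $m_i^{(\rho)}$ read off from $\rho$. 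The components in $T$ are handled as in the $n=0$ case, with an additional layer of inclusion-exclusion subtracting the $y$-placements whose $r$-balls meet $\ov a$'s visible neighbourhood (those contributions are again constants extracted from $\rho$); combining everything yields $\hat t_\rho$.

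The main obstacle is the combinatorics of the inclusion-exclusion itself: one has to enumerate all ways $y$-only components may collide with one another or with $\ov a$'s visible neighbourhood, assign each configuration a signed contribution, and verify that every merged type produced this way is still a connected $r$-type with at most $k$ centres, so that its count is itself expressible over radius-$\hat R$ one-centre counts. The algebra is essentially the one already used in \cite{BolK12,HeiKS16} and is routine, but the bookkeeping is the delicate part. Once it is in place, the stated running time $\exp_1\bigl(((n{+}k)\nu_d(\hat R))^{\bigOh(\size{\sigma})}\bigr)$ follows from Lemma~\ref{lemma:isotypes} and Lemma~\ref{lem:basic_facts}\eqref{eq:Nisom:lem:basic_facts}, since the lists $\Typeslistd{\hat R}{1}$ and $\Typeslistrd{n{+}k}$, as well as the constants $N(\sigma,\nu)$ and $m_i^{(\rho)}$, can be computed by exhaustive search inside representative structures of size at most $(n{+}k)\nu_d(\hat R)$.
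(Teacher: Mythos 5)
Your high-level strategy — decompose $\tau$ into its connected components, reduce connected types with several centres to one-centre counts via Lemma~\ref{lem:basic_facts}\eqref{eq:Nconnk:lem:basic_facts}, and then correct the naive product over components by an inclusion–exclusion over collisions — is the same moral idea as the paper's, but the paper organises the correction as a \emph{recursion on the number of components}, not as a one-shot M\"obius inversion. Concretely, the paper (Cases~3 and 4) splits off only the last pure-$y$ component $W_m$, writes $t^\A[\ov a] = i_1^\A\cdot i_2^\A[\ov a] - i_3^\A[\ov a]$, where $i_3 = \sum_{\tau'\in J} t_{\tau'}^\A[\ov a]$ ranges over all types $\tau'\neq\tau$ whose restrictions to $W_m$ and to its complement agree with $\tau$'s, and then observes that every such $\tau'$ has strictly fewer components w.r.t.\ $\ov x$ than $\tau$, so the induction hypothesis applies to $t_{\tau'}$. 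Your plan instead tries to do the whole inclusion–exclusion in a single pass.

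This is where a genuine gap appears. You state that one must ``verify that every merged type produced this way is still a connected $r$-type with at most $k$ centres, so that its count is itself expressible over radius-$\hat R$ one-centre counts.'' But the types that arise when you correct the overcount $\prod_i N_i$ are \emph{not} connected in general: if only two of the $s$ components merge, the resulting type $\tau'$ has $s{-}1$ components, which can still be $\geq 2$. So the connected-type expansion cannot be applied to the correction terms directly, and a single-level inclusion–exclusion is insufficient; you would need to recursively apply the correction to each disconnected $\tau'$, which is precisely the paper's induction on the number of components. The ``delicate bookkeeping'' you defer is thus not routine — it is the heart of the argument, and the way the paper makes it routine is exactly by recursing rather than enumerating merge patterns globally. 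A smaller point: for $n\ge1$, reading off a separate constant $m_i^{(\rho)}$ for each component $i\in S$ and multiplying would overcount placements in which two $S$-components come too close; the paper avoids this by merging \emph{all} $x$-containing components into the single piece $W_1$ before anything else, so that there is only one joint constant to read off from $\rho$.
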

\begin{proof}
The proof relies on a similar analysis of neighbourhood types as the
proof of Lemma~4.7 in \cite{KS_CSLLICS14} and proceeds by an induction on the
number of \emph{components of $\tau$ w.r.t.\ $\ov{x}$}.
These \emph{components} are defined as follows.
Let $\tau=(\T,e_1,\ldots,e_n,f_1,\ldots,f_k)$ 
and let $G=(V,E)$ be the Gaifman graph of $\tau$. 
Decompose $G$ into its connected components $V_1,\ldots,V_s$.  
In case that $n=0$, the tuple $\ov{x}$ is the empty tuple,
the \emph{components of $\tau$ w.r.t.\ $\ov{x}$} are
defined as the connected components $V_1,\ldots,V_s$ of $G$, and we let
$m\deff s$ and $W_j\deff V_j$ for all $j\in [m]$.
In case that $n\neq 0$, we can assume
w.l.o.g.\ that there is an $i\in\set{1,\ldots,s}$ such
that each of the sets $V_1,\ldots,V_{i}$ and none of the sets
$V_{i+1},\ldots,V_s$ contains an element of $\set{e_1,\ldots,e_n}$.
The \emph{components of $\tau$ w.r.t.\ $\ov{x}$} are defined as the sets
$W_1,\ldots,W_{m}$ where $m\deff s{-}i{+}1$, $W_1\deff V_1\cup\cdots\cup V_i$ and
$W_{j}\deff V_{i+j{-}1}$ for every $j\in\set{2,\ldots,m}$.

Let $\ov{e}=(e_1,\ldots,e_n)$ and $\ov{f}=(f_1,\ldots,f_k)$.
For a set $I\subseteq\set{1,\ldots,k}$ and a $k$-tuple
$\ov{a}=(a_1,\ldots,a_k)$ we write $\ov{a}_I$ to denote the tuple of
length $|I|$ obtained from $\ov{a}$ by deleting all components that do
not belong to $I$.

In case that $n\neq 0$,
consider an arbitrary $R'\in\NN$ with $R'\geq R\deff r+k{\cdot}(2r{+}1)$ and an arbitrary 
$\rho\in\TypeslistRStrichd{n}$ and let
$(\S,\ov{a}')=\rho$ and $\ov{a}'=(a'_1,\ldots,a'_n)$.
\medskip

\emph{Case~1:} \ $n\neq 0$ and $m=1$.\\
Then, $f_1,\ldots,f_k\in \neighb{R-r}{\T}{\ov{e}}$ and thus
$T=\nrT{\ov{e},\ov{f}}=\neighb{R}{\T}{\ov{e}}$.
Therefore, for any
$\sigma$-structure $\A$ and any tuple $\ov{a}\in A^n$ of
type $\rho$ (i.e., $\Neighb{R'}{\A}{\ov{a}})\isom(\S,\ov{a}')$), the following is true:
\[
\begin{array}{rcl}
  t^{\A}[\ov{a}]
& =
& \setsize{\setc{\ov{b}\in A^k\ }{\ \A\models\sph_{\tau}[\ov{a},\ov{b}]}}
\medskip
\\
&  =
& \setsize{\setc{\ov{b}\in A^k\ }{\ \Neighb{r}{\A}{\ov{a},\ov{b}}\isom
    (\T,\ov{e},\ov{f})}}
\medskip
\\
&  =
& \setsize{\setc{\ov{b}\in (\neighb{R-r}{\A}{\ov{a}})^k\ }
  {\ \Neighb{r}{\A}{\ov{a},\ov{b}}\isom (\T,\ov{e},\ov{f})}}
\medskip
\\
& =
& \setsize{\setc{\ov{b}{}'\in (\neighb{R-r}{\rho}{\ov{a}{}'})^k\ }
  {\ \Neighb{r}{\rho}{\ov{a}{}',\ov{b}{}'}\isom (\T,\ov{e},\ov{f})}}\,.
\end{array}
\]
Thus, we can choose
\[
 \hat{t}_\rho
 \ \ \deff \ \ 
 i_\rho
 \ \ \deff \ \ 
 \setsize{\setc{\ov{b}{}'\in (\neighb{R-r}{\rho}{\ov{a}{}'})^k\ }
  {\ \Neighb{r}{\rho}{\ov{a}{}',\ov{b}{}'}\isom (\T,\ov{e},\ov{f})}}
 \ \ \in \ \ 
 \NN
\]
and we are done.
\medskip

\emph{Case~2:} \ $n=0$ and $m=1$.
\\ 
Then $\tau=(\T,f_1,\ldots,f_k)$ and $f_2,\ldots,f_k\in
\neighb{\hat{R}-r}{\T}{f_1}$ for $\hat{R}\deff r+(k{-}1)(2r{+}1)$.
Thus, $T=\neighb{r}{\T}{f_1,\ldots,f_k}=\neighb{\hat{R}}{\T}{f_1}$.
Therefore, for any $d$-bounded $\sigma$-structure $\A$, the following is true:
\begin{equation}\label{eq:lem:upperbound:terms:Case2}
\begin{array}{rcl}
  t^{\A}
& =
& \setsize{\setc{\ov{b}\in A^k\ }{\ \A\models\sph_{\tau}[\ov{b}]}}
\medskip
\\
& =
& \setsize{\setc{\ov{b}\in A^k\ }{\ \Neighb{r}{\A}{\ov{b}}\isom
    \tau}}
\medskip
\\
& =
& \setsize{\setc{(b_1,\ldots,b_k)\ }{\ b_1\in A, \ (b_2,\ldots,b_k)\in
    (\neighb{\hat{R}-r}{\A}{b_1})^{k-1},\ 
  {\Neighb{r}{\A}{b_1,\ldots,b_k}\isom \tau}}}
\medskip
\\
& =
& \displaystyle
  \sum_{\rho\in J}\ i_\rho \cdot \setsize{\setc{b_1\in A\ }{\ \Neighb{\hat{R}}{\A}{b_1}\isom\rho}}\,,
\end{array}
\end{equation}
where $J\deff\TypeslistRhatd{1}$ and,
for each $\rho=(\S,f_1)\in J$,
\[
  i_\rho 
  \ \ \deff \ \ 
  \setsize{
   \setc{(f_2,\ldots,f_k)\in
     (\neighb{\hat{R}-r}{\rho}{f_1})^{k-1}\ }{\ \Neighb{r}{\rho}{f_1,\ldots,f_k}\isom \tau}
  }
  \ \ \in \ \
  \NN\,.
\]
Note that for every $\rho\in J$ and for every $\sigma$-structure $\A$
we have
\[
 \setsize{\setc{b_1\in A\ }{\ \Neighb{\hat{R}}{\A}{b_1}\isom\rho}} 
 \ \ = \ \
 t_\rho^{\A},
\]
for 
\[
 t_\rho 
 \ \ \deff \ \ 
 \Count{(y_1)}{\sph_{\rho}(y_1)}\,.
\]
Using equation~\eqref{eq:lem:upperbound:terms:Case2}, we choose
\[
 \hat{t}
 \ \ \deff \ \ 
 \sum_{\rho\in J}\ 
  (\,i_\rho \cdot t_\rho\,)
\]
and we are done.
\medskip

\emph{Case~3:} \ $n\neq 0$ and $m\geq 2$.
\\
Let
\[
\begin{array}{rclcl}
 I_1 & \deff & \setc{i\in [k]\ }{\ f_i\in W_m},
\smallskip\\
 I_2 & \deff & [k]\setminus I_1 & = & \setc{i\in [k]\ }{\ f_i\in W_1\cup\cdots\cup W_{m-1}}\,.
\end{array}
\]
We consider the neighbourhood types
\[
  \tau_1
  \ \ \deff \ \ 
  \big(\inducedSubStr{\T}{W_m},\ov{f}_{I_1}\big)
\]
and
\[
  \tau_2
  \ \ \deff \ \ 
  \big(\inducedSubStr{\T}{W_1\cup\cdots\cup W_{m-1}},\ov{e},\ov{f}_{I_2}\big)\,.
\]
Clearly, $\tau=(\T,\ov{e},\ov{f})$ is the disjoint union of $\tau_1$
and $\tau_2$.
Furthermore, for any 
$\sigma$-structure $\A$ and any tuple
$\ov{a}=(a_1,\ldots,a_n)\in A^n$
the following is true:
\begin{equation}\label{eq:lem:upperbound:terms:Case3}
\begin{array}{rcl}
  t^{\A}[\ov{a}]
& = 
& \setsize{\setc{\ov{b}\in A^k\ }{\ \Neighb{r}{\A}{\ov{a},\ov{b}}\isom \tau}}
\medskip
\\
& =
& \setsize{\setc{\ov{b}\in A^k\ }{\ \Neighb{r}{\A}{\ov{a},\ov{b}}\isom
    \tau
    \text{ and }
    \Neighb{r}{\A}{\ov{b}_{I_1}}\isom\tau_1
    \text{ and }
    \Neighb{r}{\A}{\ov{a},\ov{b}_{I_2}}\isom\tau_2}
  }
\medskip
\\
& =
& i_1^\A \cdot i_2^\A[\ov{a}] \ - \ i_3^\A[\ov{a}]
\end{array}
\end{equation}
where
\[
\begin{array}{rcl}
  i^{\A}_1 
& \deff 
& \setsize{\setc{\ov{b}{}'\in A^{|I_1|}\ }{\ \Neighb{r}{\A}{\ov{b}{}'}\isom \tau_1}}
\medskip
\\
  i_2^{\A}[\ov{a}] 
& \deff 
& \setsize{\setc{\ov{b}{}''\in A^{|I_2|}\ }{\ \Neighb{r}{\A}{\ov{a},\ov{b}{}''}\isom \tau_2}}
\medskip
\\
  i_3^{\A}[\ov{a}] 
& \deff 
& \setsize{
   \setc{\ov{b}\in A^k\ }{\ 
    \Neighb{r}{\A}{\ov{b}_{I_1}}\isom \tau_1
    \text{ and }
    \Neighb{r}{\A}{\ov{a},\ov{b}_{I_2}}\isom \tau_2
    \text{ and }
    \Neighb{r}{\A}{\ov{a},\ov{b}}\not\isom \tau
   }
 }.
\end{array}
\]

Obviously,
\ $i_1^{\A} =  t_1^{\A}$ \ for 
\[
\begin{array}{rcl}
  t_1 
& \deff
& \Count{\ov{y}_{I_1}}{\sph_{\tau_1}(\ov{y}_{I_1})}\,.
\end{array}
\]
Since $\tau_1$ is connected and has no free variable(s), by
\emph{Case~2} we obtain a simple counting term $\hat{t}_1$ such
that $\hat{t}_1^\A=t_1^{\A}$ for all $d$-bounded $\sigma$-structures
$\A$.

Furthermore, if $I_2\neq\emptyset$, then
\ $i_2^{\A}[\ov{a}]=t_2^{\A}[\ov{a}]$ \ for 
\[
\begin{array}{rcl}
  t_2(\ov{x})
& \deff
& \Count{\ov{y}_{I_2}}{\sph_{\tau_2}(\ov{x},\ov{y}_{I_2})}\,.
\end{array}
\]
Since $\tau_2$ has fewer components w.r.t.\ $\ov{x}$ than $\tau$, by induction
hypothesis we obtain a simple counting term $\hat{t}_{2,\rho}$ such
that $\hat{t}_{2,\rho}^{\A}=t_2^{\A}[\ov{a}]$ holds for any
$d$-bounded $\sigma$-structure $\A$ and any tuple $\ov{a}\in A^n$ of
type $\rho$.

In case that $I_2=\emptyset$, we have
\[
   i_2^{\A}[\ov{a}] \ \ = \ \ 
   \left\{
    \begin{array}{lll}
     1 & \ & \text{if \ $\Neighb{r}{\A}{\ov{a}}\isom \tau_2$}
     \medskip\\
     0 & & \text{if \ $\Neighb{r}{\A}{\ov{a}}\not\isom \tau_2$}\,.
    \end{array}
   \right.
\]
Thus, letting
\[
   \hat{t}_{2,\rho} \ \ \deff \ \ 
   \left\{
    \begin{array}{lll}
     1 & \ & \text{if \ $\Neighb{r}{\rho}{\ov{a}{}'}\isom \tau_2$}
     \medskip\\
     0 & & \text{if \ $\Neighb{r}{\rho}{\ov{a}{}'}\not\isom \tau_2$}
    \end{array}
   \right.
\]
we obtain that 
$\hat{t}_{2,\rho}^{\A}=i_2^{\A}[\ov{a}]$, for any
$d$-bounded $\sigma$-structure $\A$ and any tuple $\ov{a}\in A^n$ of
type $\rho$.

Furthermore, letting $J$ be the set of all 
$\tau'=(\T',\ov{e}{}',\ov{f}{}')\in \Typeslistrd{n{+}k}$ such that
\begin{enumerate}[(i)]
\item 
  $\tau'\neq \tau$,
\item
  $\Neighb{r}{\tau'}{\ov{f}{}'_{I_1}}\isom \tau_1$, and
\item
  $\Neighb{r}{\tau'}{\ov{e}{}',\ov{f}{}'_{I_2}}\isom \tau_2$,
\end{enumerate}
we obtain on all \emph{$d$-bounded} $\sigma$-structures $\A$ and for all
$\ov{a}\in A^n$ that
\[
 i_3^{\A}[\ov{a}]
 \ \ = \ \
 \sum_{\tau'\in J}
  \setsize{
   \setc{\ov{b}\in A^k\ }{\ \Neighb{r}{\A}{\ov{a},\ov{b}}\isom \tau'}
  }
 \ \ = \ \
 \sum_{\tau'\in J} t_{\tau'}^{\A}[\ov{a}]\,,
\]
for
\[
  t_{\tau'}(\ov{x})
  \ \ \deff \ \ 
  \Count{\ov{y}}{\sph_{\tau'}(\ov{x},\ov{y})}\,.
\]
Note that each $\tau'\in J$ has fewer components w.r.t.\ $\ov{x}$ than
$\tau$.
Thus, by induction hypothesis we obtain for each $\tau'\in J$ a simple
counting 
term $\hat{t}_{\tau',\rho}$ such
that $\hat{t}_{\tau',\rho}^{\A}=t_{\tau'}^{\A}[\ov{a}]$ holds for any
$d$-bounded $\sigma$-structure $\A$ and any tuple $\ov{a}\in A^n$ of
type $\rho$.
Using equation \eqref{eq:lem:upperbound:terms:Case3}, we are done by
choosing
\[
 \hat{t}_\rho
 \ \ \deff \ \ 
 \hat{t}_1 \cdot \hat{t}_{2,\rho} \ \ + \ \ (-1)\cdot \sum_{\tau'\in J}\hat{t}_{\tau',\rho}\,.
\]

\emph{Case~4:} \ $n=0$ and $m\geq 2$.
\\
The proof can be taken almost verbatim from the proof for \emph{Case~3} by
always letting $\ov{e}$, $\ov{a}$, $\ov{x}$, and $\ov{e}{}'$ be the
empty tuple, omitting the case that $I_2=\emptyset$, and dropping the
type $\rho$ wherever mentioned. 

\medskip

It is straightforward to verify that in each of the four cases, the
lemma's statement concerning the locality radius of 
$\hat{t}$ and $\hat{t}_{\rho}$, respectively, is correct.
Furthermore, the above proof can easily be translated into an algorithm for
constructing $\hat{t}$ and $\hat{t}_\rho$, respectively.
To analyse the algorithm's runtime, let us write
$\mytime(m)$ for the algorithm's runtime for the case where $\tau$
has at most $m$ components w.r.t.\ $\ov{x}$. By using
Lemma~\ref{lem:basic_facts} and \ref{lemma:isotypes} it is 
straightforward (although a bit tedious) to verify the following.
\begin{itemize}

\item
In \emph{Case 1}, $\mytime(1)\leq 2^{\bigOh(\size{\sigma} (n+k)^2 \nu_d(r)^2)}$.

\item
In \emph{Case 2}, $\mytime(1)\leq 
2^{(\nu_d(\hat{R})^{\bigOh(\size{\sigma})})}\cdot 
\nu_d(\hat{R}{-}r)^{k-1} \cdot 2^{\bigOh(\size{\sigma}\cdot k^2\cdot \nu_d(r)^2)}
$, for $\hat{R}\deff r+(k{-}1)(2r{+}1)$.

\item
Thus, in Case~1 and in Case~2 we have
\[
 \mytime(1) 
 \ \ \leq \ \ 
 2^{((n+k)\cdot \nu_d(\hat{R}))^{\bigOh(\size{\sigma})}}\,.
\]

\item
For Case~3 and Case~4 we obtain that
$\mytime(m)\leq 2^{((n+k)\cdot\nu_d(r))^{\bigOh(\size{\sigma})}}\cdot
\mytime(m{-}1)$.

\item
Inductively, we thus obtain that
\[
 \mytime(m)
 \ \ \leq \ \
 2^{(m-1) \cdot ((n+k)\cdot\nu_d(r))^{\bigOh(\size{\sigma})}} \cdot \mytime(1)
\]

\item
Since $m\leq n{+}k$, the algorithm's runtime upon input of $t$ (and
$\rho$, in case that $n\neq 0$) is at most
\[
  2^{((n+k)\cdot\nu_d(\hat{R}))^{\bigOh(\size{\sigma})}}\,.
\]
\end{itemize}
Finally, the proof of Lemma~\ref{lem:upperbound:terms} is complete.
\qed
\end{proof}

\bigskip

We also use the following lemma from \cite{BKS_ICDT17}.

\begin{lemma}\label{lemma:dynamicHanf}
  Let $\sigma$ be a relational signature.  Let $s\geq 0$ and let
  $\chi_1(\ov{\kappa})$, \ldots, $\chi_s(\ov{\kappa})$ be arbitrary
  number formulas from $\FOCN(\Ps)[\sigma]$.\footnote{The lemma's statement in \cite{BKS_ICDT17} was
    formulated for sentences $\chi_1,\ldots,\chi_s$ of first-order
    logic with modulo-counting 
  quantifiers; the proof, however, is independent of the particular
  kind of formulas and also applies for number formulas in $\FOCN(\Ps)$ (or any
  other logic).}
  Let $r\geq 0$, $k\geq 1$, $d\geq
  2$, and let $\Typeslistrdk=\tau_1,\ldots,\tau_\ell$.  
  Let $\ov{x}=x_1,\ldots,x_k$ be a list of $k$ pairwise distinct
  variables in $\VARS$.
  For every Boolean combination $\psi(\ov{x},\ov{\kappa})$ of the
  formulas $\chi_1(\ov{\kappa}),\ldots,\chi_s(\ov{\kappa})$ and
  of $d$-bounded sphere-formulas of radius at most $r$
  (over~$\sigma$), and for every $J\subseteq [s]$ there is a set
  $I\subseteq [\ell]$ such that
  \[
     \psi_J(\ov{x})
     \quad \equivd \quad
     \Oder_{i \in {I}} \sphere{\tau_i}{\ov{x}},
  \]
  where $\psi_J(\ov{x})$ is the formula obtained from
  $\psi(\ov{x},\ov{\kappa})$ by replacing every occurrence of a 
  formula $\chi_j(\ov{\kappa})$ with $\true$ if $j\in J$ and with
  $\false$ if $j\not\in J$ (for every $j\in[s]$).
  \\
  Given $\psi$ and $J$, the set $I$ can be computed in time
  $\poly(\size{\psi})\cdot 2^{(k\nu_d(r))^{\bigOh(\size{\sigma})}}$.
\end{lemma}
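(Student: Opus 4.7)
The plan is to exploit the partitioning property of the list $\Typeslistrdk=\tau_1,\ldots,\tau_\ell$ provided by Lemma~\ref{lemma:isotypes}: on any $d$-bounded $\sigma$-structure $\A$, every $k$-tuple $\ov{a}\in A^k$ has $\Neighb{r}{\A}{\ov{a}}$ isomorphic to exactly one $\tau_i$. Consequently, on $d$-bounded structures the formulas $\sphere{\tau_1}{\ov{x}},\ldots,\sphere{\tau_\ell}{\ov{x}}$ are pairwise exclusive and their disjunction is $d$-equivalent to $\true$. This partition property is the only place where the hypothesis ``$d$-bounded'' actually enters the argument.

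First I would observe that substituting $\true$ for every $\chi_j$ with $j\in J$ and $\false$ for every $\chi_j$ with $j\notin J$ reduces $\psi_J(\ov{x})$ to a Boolean combination of $d$-bounded sphere-formulas of radius at most $r$. Whether such a sphere-formula $\xi$ is satisfied by $(\A,\ov{a})$ depends only on $\Neighb{r}{\A}{\ov{a}}$, so for each $i\in[\ell]$ and each sphere-formula $\xi$ occurring in $\psi_J$ we can decide, using Lemma~\ref{lemma:isotypes}, whether $\tau_i\models\xi$. Replacing each $\xi$ by the disjunction of $\sphere{\tau_i}{\ov{x}}$ over those $i$ with $\tau_i\models\xi$ yields a Boolean combination $\psi_J'(\ov{x})$ of the literals $\sphere{\tau_1}{\ov{x}},\ldots,\sphere{\tau_\ell}{\ov{x}}$ with $\psi_J\equivd\psi_J'$.

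Next I would define $I\subseteq[\ell]$ as the set of those $i\in[\ell]$ for which the Boolean evaluation of $\psi_J'$ obtained by sending $\sphere{\tau_i}{\ov{x}}\mapsto\true$ and $\sphere{\tau_j}{\ov{x}}\mapsto\false$ for all $j\in[\ell]\setminus\set{i}$ yields $\true$. By the partition property, for any $d$-bounded $\A$ and any $\ov{a}\in A^k$ there is a unique $i\in[\ell]$ with $\A\models\sphere{\tau_i}{\ov{a}}$, so $\A\models\psi_J'[\ov{a}]$ iff $i\in I$ iff $\A\models\Oder_{i\in I}\sphere{\tau_i}{\ov{a}}$, which is the desired equivalence.

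For the runtime, Lemma~\ref{lemma:isotypes} lets us compute $\Typeslistrdk$ and, for each sphere-formula of $\psi$, locate its representative in the list within time $2^{(k\nu_d(r))^{\bigOh(\size{\sigma})}}$, and there are at most $\bigOh(\size{\psi})$ sphere-formulas to process; computing $I$ then amounts to $\ell\le 2^{(k\nu_d(r))^{\bigOh(\size{\sigma})}}$ Boolean evaluations of a formula of size $\bigOh(\size{\psi})$. Adding these contributions yields the advertised bound $\poly(\size{\psi})\cdot 2^{(k\nu_d(r))^{\bigOh(\size{\sigma})}}$. I do not foresee any genuine obstacle: the argument is essentially a bookkeeping exercise once the partition property is in hand, and the only subtlety is properly aligning each sphere-formula of $\psi_J$ with the representative types in $\Typeslistrdk$ via the isomorphism test of Lemma~\ref{lemma:isotypes}.
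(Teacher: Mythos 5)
Your argument is correct, and it is the natural ``truth-table over canonical types'' argument that the footnote to the lemma alludes to; note, however, that the paper itself does not reprove this lemma but merely cites it from \cite{BKS_ICDT17}, so there is no in-paper proof to compare against line by line. The one point worth tightening is the notation ``$\tau_i\models\xi$'': the sphere-formulas occurring in $\psi$ may be over a proper subtuple of $\ov{x}$ (and of radius $<r$), so what you really mean is that any $k$-tuple of type $\tau_i$ satisfies $\xi$ --- which is well-defined precisely because, as you observe, satisfaction of a sphere-formula of radius $\le r$ over variables among $\ov{x}$ depends only on $\Neighb{r}{\A}{\ov{a}}$; also the check is performed via the isomorphism test of Lemma~\ref{lem:basic_facts}\eqref{eq:Nisom:lem:basic_facts} rather than by Lemma~\ref{lemma:isotypes} directly, but this does not affect the stated time bound.
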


By combining the Lemmas~\ref{lemma:dynamicHanf} and
\ref{lem:upperbound:terms}, we immediately obtain:

\begin{lemma}\label{lem:upperbound:terms2}
  Let $\sigma$ be a relational signature.  Let $s\geq 0$ and let
  $\chi_1(\ov{\kappa}), \ldots, \chi_s(\ov{\kappa})$ be arbitrary
  number formulas from $\FOCN(\Ps)[\sigma]$.
  Let $r\geq 0$, $n\geq 0$, $k\geq 1$, and $d\geq 2$.  Let
  $(x_1,\ldots,x_n,y_1,\ldots,y_k)$ be a tuple of $n{+}k$ pairwise
  distinct variables in $\VARS$, let $\ov{x}=(x_1,\ldots,x_n)$, and
  let $\ov{y}=(y_1,\ldots,y_k)$.  Let
  $\psi(\ov{x},\ov{y},\ov{\kappa})$ be a Boolean combination of the
  formulas $\chi_1(\ov{\kappa}),\ldots,\chi_s(\ov{\kappa})$ and
  of $d$-bounded sphere-formulas of radius at most $r$ (over
  $\sigma$), and let
  \[
     t(\ov{x},\ov{\kappa}) \ \ \deff \ \ \Count{\ov{y}}{\psi(\ov{x},\ov{y},\ov{\kappa})}\,.
  \]
  Let $\Typeslistrd{n{+}k}=\tau_1,\ldots,\tau_\ell$.  For every
  $i\in[\ell]$ let
  \[
    t_i(\ov{x}) 
    \ \ \deff \ \ 
    \Count{\ov{y}}{\sph_{\tau_i}(\ov{x},\ov{y})}\,.
  \]
  For every $J\subseteq [s]$ there is a set $I\subseteq [\ell]$ such
  that the following is true for every $d$-bounded $\sigma$-structure
  $\A$ and every tuple
  $\ov{k}\in\ZZ^{|\ov{\kappa}|}$ with
  \[
    (\A,\ov{k}) \ \ \models \ \
    \chi_J(\ov{\kappa}) \ \ \deff \ \ \Und_{j\in J}\chi_j(\ov{\kappa}) \ \und \Und_{j\in
      [s]\setminus J} \nicht\,\chi_j(\ov{\kappa})\,.
  \]
  \begin{itemize}
  \item If $n=0$, then
    \[
      t^\A[\ov{k}] 
      \ \ = \ \
      \sum_{i\in I}\ \hat{t}_i^{\A}
    \]
    where $\hat{t}_i$ is the simple counting term (without number variables) provided by
    Lemma~\ref{lem:upperbound:terms} for the term $t_i$.  We let
    $\hat{t}_{J}\deff\sum_{i\in I}\hat{t}_{i}$.
  \item If $n\neq 0$, then for every
    $R'\geq R\deff r+k{\cdot}(2r{+}1)$, every
    $\rho\in \TypeslistRStrichd{n}$ and every tuple $\ov{a}\in A^{n}$
    of type $\rho$ we have
    \[
      t^\A[\ov{a},\ov{k}]
      \ \ = \ \
      \sum_{i\in I}\ \hat{t}_{i,\rho}^{\A}\,,
    \]
    where $\hat{t}_{i,\rho}$ is the simple counting term (without number variables) provided by
    Lemma~\ref{lem:upperbound:terms} for the term $t_i(\ov{x})$ and
    the type $\rho$.  We let
    $\hat{t}_{J,\rho}\deff\sum_{i\in I}\hat{t}_{i,\rho}$.
  \end{itemize}
  Furthermore, the locality radii of $\hat{t}_J$ and
  $\hat{t}_{J,\rho}$ are at most
  $\hat{R}\deff r+(k{-}1)(2r{+}1)$.  Moreover, there is an algorithm
  which upon input of $\psi(\ov{x},\ov{y},\ov{\kappa})$ and $J$ (and
  $\rho$, in case that $n\neq 0$), constructs $\hat{t}_J$ (resp,
  $\hat{t}_{J,\rho}$) within time \
  $\poly(\size{\psi})\cdot 2^{((n+k)\cdot
    \nu_d(\hat{R}))^{\bigOh(\size{\sigma})}}$.
  \\
  In addition to that, $\setsize{\setc{\hat{t}_J}{J\subseteq [s]}}$
  and $\setsize{\setc{\hat{t}_{J,\rho}}{J\subseteq [s]}}$ is at most
  $2^\ell$ for $\ell \in 2^{((n+k)\nu_d(r))^{\bigOh(\size{\sigma})}}$.
\end{lemma}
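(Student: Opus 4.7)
The plan is to pipeline Lemma~\ref{lemma:dynamicHanf} and Lemma~\ref{lem:upperbound:terms}; I do not expect a serious obstacle, since the statement really is the composition of these two previous results. The only subtle point is a clean semantic reduction from $\psi$ (which depends on $\ov{\kappa}$) to a pure Boolean combination of sphere-formulas, once the behaviour of the number formulas $\chi_j$ is pinned down by the side condition $\chi_J$.

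First I would fix an arbitrary $J\subseteq [s]$ and exploit that each $\chi_j(\ov{\kappa})$ is a number formula, so its truth value on a $\sigma$-interpretation $(\A,\beta)$ depends only on $\beta$ restricted to $\ov{\kappa}$. Consequently, for any $(\A,\ov{k})$ with $(\A,\ov{k})\models\chi_J(\ov{\kappa})$ and any $\ov{a}\in A^n$, $\ov{b}\in A^k$, the value of $\psi(\ov{x},\ov{y},\ov{\kappa})$ under the extended assignment coincides with the value of $\psi_J(\ov{x},\ov{y})$ obtained from $\psi$ by replacing each $\chi_j$ with $\true$ if $j\in J$ and with $\false$ otherwise. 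Hence $t^{\A}[\ov{a},\ov{k}] = |\{\ov{b}\in A^k : \A\models \psi_J[\ov{a},\ov{b}]\}|$.

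Next I would apply Lemma~\ref{lemma:dynamicHanf} to the Boolean combination $\psi$, viewed with free structure variables $(\ov{x},\ov{y})$ of total length $n+k$, and to the chosen $J$. This yields a set $I\subseteq [\ell]$ with $\psi_J(\ov{x},\ov{y}) \equivd \bigvee_{i\in I}\sph_{\tau_i}(\ov{x},\ov{y})$. Since the types in $\Typeslistrd{n+k}$ are pairwise non-isomorphic, the sphere-formulas $\sph_{\tau_i}$ are pairwise mutually exclusive on $d$-bounded structures, so the count of tuples satisfying the disjunction equals the sum of counts for the disjuncts:
$$t^{\A}[\ov{a},\ov{k}] \ =\ \sum_{i\in I} t_i^{\A}[\ov{a}] \qquad \text{for all $d$-bounded $\A$ and $(\ov{a},\ov{k})$ with $(\A,\ov{k})\models\chi_J$.}$$

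To finish, I would apply Lemma~\ref{lem:upperbound:terms} termwise to each $t_i$. For $n=0$ this supplies simple counting terms $\hat{t}_i$ of locality radius at most $\hat{R}$ with $t_i^{\A}=\hat{t}_i^{\A}$, and I set $\hat{t}_J\deff\sum_{i\in I}\hat{t}_i$; for $n\neq 0$ and every $R'\geq R$ and $\rho\in\TypeslistRStrichd{n}$ it supplies $\hat{t}_{i,\rho}$ of locality radius at most $\hat{R}$ with $t_i^{\A}[\ov{a}]=\hat{t}_{i,\rho}^{\A}$ whenever $\ov{a}$ is of type $\rho$, and I set $\hat{t}_{J,\rho}\deff\sum_{i\in I}\hat{t}_{i,\rho}$. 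Summing the runtime of Lemma~\ref{lemma:dynamicHanf} (to produce $I$ from $J$) and of at most $\ell$ invocations of Lemma~\ref{lem:upperbound:terms} (to produce each $\hat{t}_{i,\rho}$) yields the stated time bound. Finally, since $\hat{t}_J$ (respectively $\hat{t}_{J,\rho}$) is fully determined by the set $I\subseteq [\ell]$, and Lemma~\ref{lemma:isotypes} gives $\ell\in 2^{((n+k)\nu_d(r))^{\bigOh(\size{\sigma})}}$, the total number of distinct such terms is at most $2^\ell$.
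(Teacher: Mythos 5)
Your proof is correct and takes essentially the same route as the paper, which simply states that Lemma~\ref{lem:upperbound:terms2} is obtained ``by combining the Lemmas~\ref{lemma:dynamicHanf} and \ref{lem:upperbound:terms}'' without further detail; your writeup spells out exactly that composition, including the key observations that $\chi_J$ pins down the truth values of the number formulas so that $\psi$ reduces semantically to $\psi_J$, and that the sphere-formulas $\sph_{\tau_i}$ for $\tau_i\in\Typeslistrd{n+k}$ are pairwise mutually exclusive (all of radius $r$), so the count over the disjunction is the sum of the individual counts.
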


We are now ready to prove Theorem~\ref{thm:main}\eqref{thm:main:ghnf}+\eqref{thm:main:algo}:

\begin{theorem}\label{thm:upperbound}
Let $(\Ps,\ar,\sem{.})$ be a numerical predicate collection.
There is an algorithm which upon input of a degree bound $d\geq
2$, a relational signature $\sigma$, and an $\FOCN(\Ps)[\sigma]$-formula
$\phi$, constructs a hnf-formula $\psi$ for $\FOCN(\Ps)[\sigma]$ with 
$\psi\equivd\phi$.

Furthermore, $\free(\psi)=\free(\phi)$, $\nqr(\psi)\le\nqr(\phi)$, and
the locality radius of $\psi$ is 
$<(2\bw(\phi){+}1)^{\br(\phi)}$.  The number of distinct numerical
oc-type conditions in $\psi$ is at most
\begin{align*}
  \exp_3\bigl(\poly(\size{\phi}+\size{\sigma})\bigr) &\text{ \ for \ } d=2
\quad\text{and }\\
  \exp_4\bigl(\poly(\size{\phi}+\size{\sigma})+\log\log(d)\bigr) &\text{ \ for \ }
  d\geq 3.
\end{align*}
The construction of $\psi$ takes time at most 
\begin{align*}
  \exp_4\bigl(\poly(\size{\phi}+\size{\sigma})\bigr) &\text{ \ for \ } d=2
  \quad\text{and}\\
  \exp_5\bigl(\poly(\size{\phi}+\size{\sigma})+\log\log(d)\bigr) &\text{ \ for \ }
  d\geq 3.
\end{align*}
\end{theorem}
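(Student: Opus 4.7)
The plan is to prove Theorem~\ref{thm:upperbound} by simultaneous induction on the structure of the input expression $\phi$ (both formulas and counting terms), applying Lemma~\ref{lem:upperbound:terms2} at each counting step to discharge the non-local content. For a formula $\phi(\ov{x},\ov{\kappa})$ the inductive claim is a $d$-equivalent hnf-formula $\psi$ satisfying the bounds stated. For a counting term $t(\ov{x},\ov{\kappa})$ the invariant is different: for every radius $R'$ large enough (depending on $t$) and every $R'$-type $\rho\in\TypeslistRStrichd{|\ov{x}|}$, we compute a simple counting term $\hat{t}_\rho$ such that $t^\A[\ov{a},\ov{k}]=\hat{t}_\rho^\A[\ov{k}]$ whenever $\A$ is $d$-bounded and $\ov{a}$ realises $\rho$; if $\ov{x}$ is empty we simply produce one simple counting term $\hat{t}$.

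First I would dispatch the easy cases. Atomic formulas $x_1{=}x_2$ and $R(x_1,\dots,x_{\ar(R)})$ are already sphere-formulas of radius~$0$. Boolean combinations combine by taking Boolean combinations of the inductively produced hnf-formulas. The existential quantifier $\exists y\,\phi$ over a structure variable is rewritten as $\P_\exists(\Count{(y)}{\phi})$, reducing it to the counting-term case. Number-variable atoms $\kappa$ are already simple counting terms, integer constants $i$ are simple counting terms, and arithmetic combinations $(t_1+t_2)$ and $(t_1\cdot t_2)$ are handled by combining the $\hat{t}_{i,\rho}$ pointwise in $\rho$, which is possible because simple counting terms are closed under $+$ and $\cdot$. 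The quantifier $\exists\kappa\,\phi$ is handled by inducting on $\phi$ and then applying $\exists\kappa$ to the resulting hnf-formula, which is permitted since numerical oc-type conditions are closed under quantification over number variables.

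The decisive cases are numerical predicates $\P(t_1,\dots,t_m)$ and counting terms of the form $\Count{\ov{y}}{\phi}$. For the former, induct on each $t_i$ to obtain, for some common radius bound $R'$ and every $\rho\in\TypeslistRStrichd{|\ov{x}|}$, simple counting terms $\hat{t}_{i,\rho}$. Then
\[
  \P(t_1,\dots,t_m) \ \equiv_d \ \Oder_{\rho} \big(\,\sph_\rho(\ov{x})\,\und\,\P(\hat{t}_{1,\rho},\dots,\hat{t}_{m,\rho})\,\big),
\]
where each $\P(\hat{t}_{1,\rho},\dots,\hat{t}_{m,\rho})$ is an atomic numerical oc-type condition, so the right-hand side is an hnf-formula. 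For $\Count{\ov{y}}{\phi}$, first put $\phi$ in hnf by induction; this yields a Boolean combination of numerical oc-type conditions $\chi_1(\ov{\kappa}),\dots,\chi_s(\ov{\kappa})$ (treating any number-quantifier-headed pieces as the $\chi_j$) and sphere-formulas in the variables $\ov{x},\ov{y}$. Then apply Lemma~\ref{lem:upperbound:terms2}: for every $J\subseteq[s]$, the lemma produces a simple counting term $\hat{t}_{J,\rho}$ (or $\hat{t}_J$ if $\ov{x}$ is empty) equal to the value of $\Count{\ov{y}}{\phi}$ on any $\A$ of type $\rho$ whose parameters $\ov{k}$ satisfy $\chi_J(\ov{\kappa})$. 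Since the $\chi_J$ form a partition of parameter space, this exactly yields the inductive invariant for the counting term.

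The main obstacle is a disciplined bookkeeping of the two simultaneously blowing-up parameters. The radius after one application of Lemma~\ref{lem:upperbound:terms2} grows from $r$ to $\hat{R}=r+(k{-}1)(2r{+}1)\le(2k{+}1)r$, so iterating through the binding-rank $\br(\phi)$ layers produces the bound $(2\bw(\phi){+}1)^{\br(\phi)}$ on the locality radius. The number of types at radius $r$ over $k$ centres is doubly exponential in $k\nu_d(r)|\sigma|$, which for $d\geq 3$ gives another $\log\log(d)$ in the exponent; iterating $\br(\phi)$ times through the construction, each step takes a Boolean combination indexed by $\TypeslistRStrichd{n}$ and by subsets $J\subseteq[s]$ of the previously produced oc-type conditions. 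This $2^{2^{\cdots}}$ compounding is what yields $\exp_3$ (resp.\ $\exp_4$) for the count of distinct oc-type conditions and one further exponential for the construction time, and verifying that the polynomial-in-$\size{\phi}+\size{\sigma}$ dependency in the exponent tower survives these iterations is the technically delicate part.
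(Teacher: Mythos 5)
Your overall strategy matches the paper's: proceed by induction, replacing $\exists y$ by $\P_\exists(\Count{(y)}{\,\cdot\,})$, handling atomic and Boolean cases directly, and making Lemma~\ref{lem:upperbound:terms2} do the real work on counting subterms; the radius and complexity bookkeeping you sketch (radius growth $r\mapsto r+(k{-}1)(2r{+}1)$, compounding over $\br(\phi)$ layers) is the right analysis. However, your stated inductive invariant for counting terms is too strong and, as written, false. You claim a single simple counting term $\hat{t}_\rho$ per type $\rho$ with $t^\A[\ov{a},\ov{k}]=\hat{t}_\rho^\A[\ov{k}]$. But the value of $\Count{\ov{y}}{\phi}$ depends on $\ov{k}$ \emph{only through which of the numerical oc-type conditions $\chi_1,\ldots,\chi_s$ inside $\phi$ hold}, and that dependence is not polynomial in $\ov{k}$. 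Lemma~\ref{lem:upperbound:terms2} gives you, for each $J\subseteq[s]$, a simple counting term $\hat{t}_{J,\rho}$ \emph{without number variables} that is correct only under the hypothesis $\chi_J$. So the invariant must be stated with the $J$-partition built in, and the synthesis for $\P(t_1,\ldots,t_m)$ must disjoin over $J$ as well: $\bigvee_\rho\bigvee_J(\sph_\rho(\ov{x})\und\chi_J\und\P(\hat{t}_{1,J,\rho},\ldots,\hat{t}_{m,J,\rho}))$. Your display formula omits the $\chi_J$-conjunct and the $J$-disjunction, so it is not an equivalence.

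The other gap is the $\exists\kappa$ case. You say you apply $\exists\kappa$ to the inductively produced hnf-formula $\psi'$ because ``numerical oc-type conditions are closed under number quantification,'' but $\psi'$ is a Boolean combination of such conditions \emph{and sphere-formulas}, and an hnf-formula is required to have its sphere-formulas sitting at the outermost Boolean level, not under $\exists\kappa$. You first need to stratify by the type $\tau$ of $\ov{x}$ (replacing each $\sph_\rho(\ov{x})$ in $\psi'$ by $\true$ or $\false$ according to whether $\rho$ is the restriction of $\tau$), yielding numerical oc-type conditions $\psi'_\tau$, and only then form $\bigvee_\tau(\sph_\tau(\ov{x})\und\exists\kappa\,\psi'_\tau)$. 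Both gaps are repairable without changing your architecture, and with them patched the proof reduces to what the paper does; but in the form you wrote them, the counting-term invariant and the $\exists\kappa$ step do not go through.
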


\begin{proof}
  W.l.o.g.\ we assume that $\P_{\exists}\in \Ps$ and that $\phi$ does
  not contain any existential quantifier of the form $\exists y$ with
  $y\in\VARS$ (to achieve this, we add $\P_{\exists}$ to $\Ps$ with
  $\sem{\P_{\exists}}=\NNpos$, and we replace every subformula of
  $\phi$ of the form $\exists y\,\phi'$ by the formula
  $\P_{\exists}(\Count{(y)}{\phi'})$).
  We proceed by induction on the shape of $\phi$.  Throughout the
  proof, we let $\ov{x}=(x_1,\ldots,x_n)$ be the free structure
  variables, and $\ov{\kappa}$ be the free number variables of $\phi$.

  \bigskip

  \emph{Case 1:} \ Suppose that $\phi$ is an atomic formula of the
  form $x_1{=}x_2$ or $R(x_1,\ldots,x_{\ar(R)})$ with $R\in \sigma$.
  Clearly, $\phi$ is equivalent to the formula
\[
  \psi \ \ \deff \ \
  \Oder_{\tau\in J} \sph_{\tau}(\ov{x})
\]
where $J$ is the set of all types $\tau\in\Typeslist{0}{d}{n}$ that
satisfy $\phi$.

Furthermore, $\psi$ has locality radius 0, and 
$(2\bw(\phi){+}1)^{\br(\phi)}= 1^0 = 1>0$.
The number of distinct numerical oc-type conditions in $\psi$ is 0, and $\nqr(\psi)=\nqr(\phi)=0$, and $\free(\psi)=\free(\phi)$.
By Lemma~\ref{lemma:isotypes},
$J$ and $\psi$ can be constructed in time 
$2^{(n\nu_d(0))^{\bigOh(\size{\sigma})}} =
2^{n^{\bigOh(\size{\sigma})}} \leq 2^{\size{\phi}^{\bigOh(\size{\sigma})}}$.

\bigskip

\emph{Case 2:} \ Suppose that $\phi$ is of the form $\nicht\phi'$ or
of the form $(\phi'\oder\phi'')$.
By induction hypothesis, there are hnf-formulas $\psi'$ and $\psi''$
with $\psi'\equivd\phi'$ and $\psi''\equivd\phi''$. Thus,
$\nicht\psi'$ and $(\psi'\oder\psi'')$ are hnf-formulas that are
$d$-equivalent to $\nicht\phi'$ and to $(\phi'\oder\phi'')$,
respectively.

Furthermore, by applying the induction hypothesis, it is
straightforward to see that the free variables, the number
quantifier rank, the locality radius, the number of distinct
numerical oc-type conditions, and the runtime for constructing
$\nicht\psi'$ and $(\psi'\oder\psi'')$ are as stated in the theorem.

\bigskip

\emph{Case 3:} \ Suppose that $\phi$ is of the form 
$\P(t_1,\ldots,t_m)$ with $\P\in \Ps\cup\set{\P_{\exists}}$,
$m=\ar(\P)$, and where $t_1,\ldots,t_m$ are counting terms.

According to Definition~\ref{def:FOCN}, for every $j\in [m]$, the
counting term $t_j$ is built by using addition and multiplication
based on integers, on number variables from $\ov{\kappa}$, and on
counting terms $\theta'$ of the form~$\Count{\ov{y}}{\theta}$.  Let
$\Theta'$ be the set of all these counting terms $\theta'$ and let
$\Theta$ be the set of all the according formulas $\theta$.  By the
induction hypothesis, for each $\theta$ in $\Theta$ there is a
$d$-equivalent hnf-formula $\psi^{(\theta)}$. Let $\Psi$ be the set
of all these $\psi^{(\theta)}$.  Each $\psi$ in $\Psi$ is a Boolean
combination of $d$-bounded sphere-formulas and of numerical oc-type
conditions. Let $\chi_1(\ov{\kappa}),\ldots,\chi_s(\ov{\kappa})$ be a
list of numerical oc-type conditions such that any of the
$\psi\in\Psi$ is a Boolean combination of sphere-formulas and of
formulas in $\set{\chi_1,\ldots,\chi_s}$.
Let $r$ be the maximum locality radius of any of the
sphere-formulas that occur in any $\psi\in\Psi$, and let $k$ be the
maximum arity~$|\ov{y}|$ for any term~$\theta'$ of the form
$\Count{\ov{y}}{\theta}$ in $\Theta'$.

For each
$\theta'(\ov{x},\ov{\kappa})=
\Count{\ov{y}}{\theta(\ov{x},\ov{y},\ov{\kappa})}$ in $\Theta'$, we
apply Lemma~\ref{lem:upperbound:terms2} to the term
\[
  t^{(\theta')}(\ov{x},\ov{\kappa}) 
  \ \ \deff \ \ 
  \Count{\ov{y}}{\psi^{(\theta)}(\ov{x},\ov{y},\ov{\kappa})}
\]
and obtain for every $J\subseteq [s]$ 
\begin{itemize}
\item a simple counting term $\hat{t}^{(\theta')}_{J}$ without number
  variables, in case that $n=0$,
\item and for every $\rho\in \Typeslist{R}{d}{n}$, with
  $R\deff r+k(2r{+}1)$, a simple counting term
  $\hat{t}^{(\theta')}_{J,\rho}$ without number variables, in case that
  $n\neq 0$.
\end{itemize}
By Lemma~\ref{lem:upperbound:terms2}, the following is
true for every $J\subseteq [s]$ and $\chi_J := \Und_{j\in J}\chi_j
\und \Und_{j\in [s]\setminus J}\nicht\chi_j$.

If $n=0$, then
\begin{align*}
 & \big( \ \chi_J(\ov{\kappa}) \ \und \ \P(t_1(\ov{\kappa}),\ldots,t_m(\ov{\kappa})) \ \big)\\
  \equivd \ \ &
  \big( \ \chi_J(\ov{\kappa}) \ \und \ \P(t_{1,J}(\ov{\kappa}),\ldots,t_{m,J}(\ov{\kappa})) \ \big)
\end{align*}
where, for every $i\in [m]$, we let $t_{i,J}(\ov{\kappa})$ be the
simple counting term obtained from $t_i(\ov{\kappa})$ by replacing
each occurrence of a term $\theta'\in\Theta'$ by the
term $\hat{t}^{(\theta')}_J$.  

If $n\neq 0$, then for every $\rho\in \Typeslist{R}{d}{n}$ we have
\begin{align*}
  &\big( \ 
  \sph_{\rho}(\ov{x}) \ \und \
  \chi_J(\ov{\kappa}) \ \und \ 
  \P(t_1(\ov{x},\ov{\kappa}),\ldots,t_m(\ov{x},\ov{\kappa})) \ \big)\\
   \equivd \ \ &
  \big( \ 
  \sph_{\rho}(\ov{x}) \ \und \
  \chi_J(\ov{\kappa}) \ \und \ 
  \P(t_{1,J,\rho}(\ov{\kappa}),\ldots,t_{m,J,\rho}(\ov{\kappa})) \ \big)
\end{align*}
where, for every $i\in [m]$, we let $t_{i,J,\rho}(\ov{\kappa})$ be the
simple counting term obtained from $t_i(\ov{x},\ov{\kappa})$ by
replacing each occurrence of a term $\theta'\in\Theta'$
by the term $\hat{t}^{(\theta')}_{J,\rho}$.

In summary, we obtain the following:
\\
If $n=0$, then
\begin{align*}
  \phi(\ov{\kappa}) 
   & \ \ = \ \  \P(t_1,\ldots,t_m)\\
   & \ \  \equivd \ \
   \displaystyle\Oder_{J\subseteq [s]} \big(\
    \chi_J \ \und \ \P(t_1,\ldots,t_m) 
  \ \big)\\
   & \ \ \equivd \ \
  \displaystyle\Oder_{J\subseteq [s]} \big(\
    \chi_J \ \und \ \P(t_{1,J},\ldots,t_{m,J})
  \ \big)
 \quad =: \ \
 \psi(\ov{\kappa})\,.
\end{align*}
The formula $\chi_J$ is a Boolean combination of the numerical oc-type
conditions $\chi_1,\ldots,\chi_s$.
The terms $t_{i,J}$ are polynomials over the simple
counting terms $\hat{t}_J^{(\theta')}$ and number variables from
$\ov{\kappa}$, i.e., they are simple counting terms. Hence
$\psi$ is a Boolean combination of numerical oc-type conditions and therefore a
hnf-formula without free structure variables.

If $n\neq 0$, then for $L\deff\Typeslist{R}{d}{n}$ we have
\[
\begin{array}{llllll}
  \phi(\ov{x},\ov{\kappa}) \ 
& = \quad
& \P(t_1,\ldots,t_m)
&
&
\smallskip
\\
&   \equivd
& \displaystyle\Oder_{\rho\in L}\Big(\ 
   \sph_{\rho}(\ov{x}) \ \und
   \Oder_{J\subseteq [s]} \big(\
    \chi_J \ \und \ \P(t_1,\ldots,t_m) 
  \ \big)\ \Big)
&
&
\\
\smallskip
& \equivd
& \displaystyle\Oder_{\rho\in L}\Big(\
    \sph_{\rho}(\ov{x}) \ \und 
    \Oder_{J\subseteq [s]} \big(\
      \chi_J \ \und \ \P(t_{1,J,\rho},\ldots,t_{m,J,\rho})
  \ \big)\ \Big)
& \quad =: \ 
& \psi(\ov{x},\ov{\kappa})\,.
\end{array}
\]
As above, the formula $\chi_J$ is a Boolean combination of numerical oc-type conditions. The
terms $t_{i,J,\rho}$ are polynomials over the simple counting terms
$\hat{t}_{J,\rho}^{(\theta')}$ and number variables from $\ov{\kappa}$,
i.e., they are simple counting terms. Hence $\psi(\ov{x},\ov{\kappa})$ is a
Boolean combination of sphere-formulas and of numerical oc-type
conditions, and therefore $\psi(\ov{x},\ov{\kappa})$ is a hnf-formula.

By Lemma~\ref{lem:upperbound:terms2}, each of the terms 
$t_{i,J}$ and $t_{i,J,\rho}$, respectively,
has locality radius at most $r+(k{-}1)(2r{+}1)$.
Furthermore, for each $\rho\in L$ the formula $\sph_\rho(\ov{x})$ has 
locality radius $R=r+k(2r{+}1)$.
Thus, the locality radius of $\psi$ is the maximum of $R$ and the
maximum of the locality radii of $\chi_1,\ldots,\chi_s$.

By the induction hypothesis, each $\chi_j$ has locality radius 
$< \tilde{r}\deff (2\bw(\phi){+}1)^{\br(\phi)-1}$. Furthermore, by the induction
hypothesis we also know that 
$r < \tilde{r}$, i.e., $r\leq \tilde{r}{-}1$.
Since $k\leq \bw(\phi)$, we therefore obtain
\[
\begin{array}{rcl}
  R
&  \ \ = \ \ 
&  r + k(2r{+}1)
\\
&  \ \ \leq \ \ 
&  (\tilde{r}{-}1) + 2\bw(\phi) (\tilde{r}{-}1) + \bw(\phi)
\\
&  \ \ < \ \ 
&  \tilde{r} + 2\bw(\phi)\tilde{r}
\\
&  \ \ = \ \ 
&  (2\bw(\phi)+1)^{\br(\phi)}.
\end{array}
\]
Therefore, the locality radius of $\psi$ is $< (2\bw(\phi)+1)^{\br(\phi)}$.

Furthermore, by applying the induction hypothesis it is easy to see that 
$\free(\psi)=\free(\phi)$ and $\nqr(\psi)\leq\nqr(\phi)$.

To determine the number of distinct numerical oc-type conditions in
$\psi$, let us first consider the case $n=0$.
Recall from Lemma~\ref{lem:upperbound:terms2}
that for each $\theta'\in\Theta'$ we have 
\ $
  |\setc{\hat{t}_J^{(\theta')}}{J\subseteq [s]}| 
  \leq 2^\ell 
$ \ with $\ell\in 2^{((n+k)\nu_d(r))^{\bigOh(\size{\sigma})}}$.
Thus, for each $i\in [m]$ we have
\[
  \setsize{\setc{t_{i,J}}{J\subseteq [s]}} 
  \ \ \leq \ \
  \big( 2^\ell \big)^{|\Theta'|}
  \ \ = \ \
  2^{\ell\cdot |\Theta'|}
  \ \ \leq \ \ 
  2^{\ell\cdot \size{\phi}}\,,
\]
and hence
\[
 \setsize{\setc{({t}_{1,J},\ldots,{t}_{m,J})}{J\subseteq [s]}}
 \ \ \leq \ \ 
 \big( 2^{\ell\cdot\size{\phi}}\big)^m
 \ \ = \ \
 2^{m\cdot \size{\phi}\cdot \ell}\,.
\]
Thus, in case that $n=0$, we obtain that $\psi$ is a Boolean
combination of at most $s+2^{m\cdot \size{\phi}\cdot \ell}$
distinct numerical oc-type conditions.
By a similar reasoning we obtain that if $n\neq 0$, then $\psi$ is a
Boolean combination of sphere-formulas and of at most
$s+ |L|\cdot 2^{m\cdot\size{\phi}\cdot \ell}$
distinct numerical oc-type conditions.
Note that
\[
  |L|\cdot 2^{m\cdot\size{\phi}\cdot \ell}
  \ \ \leq \ \
  2^{(n\nu_d(R))^{\bigOh(\size{\sigma})}+ m\cdot\size{\phi}\cdot 2^{((n+k)\nu_d(r))^{\bigOh(\size{\sigma})}}}
  \ \ \leq \ \ 
  2^{2^{(m+n+k)\cdot\size{\phi}\cdot \nu_d(R))^{\bigOh(\size{\sigma})}}}\,.
\]
We already know that $R<(2\bw(\phi)+1)^{\br(\phi)}\leq
\size{\phi}^{\size{\phi}} = 2^{\poly(\size{\phi})}$. Furthermore,
$m+n+k\leq\size{\phi}$.
Thus,
\[
  |L|\cdot 2^{m\cdot\size{\phi}\cdot \ell}
  \ \ \leq \ \
  2^{2^{\size{\phi}^2\cdot \nu_d(2^{\poly(\size{\phi})}))^{\bigOh(\size{\sigma})}}}\,.
\]
In case that $d=2$, this is at most 
\ $2^{2^{2^{\poly(\size{\phi}+\size{\sigma})}}}$. 
In case that $d\geq 3$, it is at most
\ $2^{2^{d^{2^{\poly(\size{\phi}+\size{\sigma})}}}}$.

From the induction hypothesis we obtain a bound on $s$, and in summary
we obtain that $\psi$ is a Boolean combination of sphere-formulas and
of at most 
\[
  \exp_3(\poly(\size{\phi}+\size{\sigma})) 
  \text{ \ for $d=2$ \quad and \quad}
  \exp_4(\poly(\size{\phi}+\size{\sigma})+\log\log d) 
  \text{ \ for $d\geq 3$}
\]
distinct numerical oc-type conditions.

To verify that the claimed runtime is correct, note that by
Lemma~\ref{lem:upperbound:terms2} 
for each $\theta'\in\Theta'$, each $\rho\in L$, and each $J\subseteq
[s]$, the terms
$\hat{t}^{(\theta')}_J$ and $\hat{t}^{(\theta')}_{J,\rho}$, resp., can
be constructed in time
\ $\poly(\size{\psi^{(\theta)}})\cdot 2^{((n+k)\cdot
  \nu_d(\hat{R}))^{\bigOh(\size{\sigma})}}$, where
$\hat{R}\leq R < (2\bw(\phi)+1)^{\br(\phi)}\leq
\size{\phi}^{\size{\phi}} \leq 2^{\poly(\size{\phi})}$.
By the induction hypothesis, $s$ is at most
\[
  \exp_3\bigl(\poly(\size{\phi}+\size{\sigma})\bigr) \text{ \ for \ } d=2
  \qquad
  \text{and}
  \qquad
  \exp_4\bigl(\poly(\size{\phi}+\size{\sigma})+\log\log(d)\bigr) \text{ \ for \ }
  d\geq 3.
\]
Thus, the number of sets $J\subseteq[s]$ that have to be considered is
at most
\[
  \exp_4\bigl(\poly(\size{\phi}+\size{\sigma})\bigr) \text{ \ for \ } d=2
  \qquad
  \text{and}
  \qquad
  \exp_5\bigl(\poly(\size{\phi}+\size{\sigma})+\log\log(d)\bigr) \text{ \ for \ }
  d\geq 3.
\]
Based on this, it is straightforward to verify that the runtime for constructing $\psi$ is as stated in the theorem.

\bigskip

\emph{Case 4:} \ Suppose that $\phi$ is of the form
$\exists\lambda\,\phi'$ with $\lambda\in\NVARS$. By the induction
hypothesis, there is a hnf-formula
$\psi'(\ov{x},\ov{\kappa},\lambda)$ with $\psi'\equiv_d\phi'$. Let $R$
be the locality radius of $\psi'$.

From every $\tau=(\T,\ov{c})\in\Typeslist{R}{d}{n}$, we now construct a
numerical oc-type condition $\psi'_\tau(\ov{\kappa},\lambda)$ as
follows: 
Consider a type $\rho=(\S,\ov{d})$ such that the sphere-formula
$\sph_\rho(\ov{x})$ occurs in $\psi'$, and let $r$ be the locality 
radius of this sphere-formula.
If
\ $
    \Neighb{r}{\T}{\ov{c}}\cong\S\,,
$ \
then we replace every occurrence of the sphere-formula
$\sph_\rho(\ov{x})$ in $\psi'$ by $\true$, otherwise we replace it by
$\false$. As a result, we get
\begin{align*}
  \exists\lambda\,\phi'
  \ \ \equiv_d \ \ 
  \exists\lambda\,\psi' \ \
    &\equiv_d \ \ \exists\lambda\,\bigvee_{\tau\in\Typeslist{R}{d}{n}}
          (\sph_\tau(\ov{x})\land\psi')\\
    &\equiv_d\ \ \exists\lambda\, \bigvee_{\tau\in\Typeslist{R}{d}{n}}
          (\sph_\tau(\ov{x})\land\psi'_\tau(\ov{\kappa},\lambda))\\
    &\equiv \ \ \bigvee_{\tau\in\Typeslist{R}{d}{n}}
          (\sph_\tau(\ov{x})\land\exists\lambda\,\psi'_\tau(\ov{\kappa},\lambda))\\
    &=: \ \ \psi
\end{align*}
which is a hnf-formula.

Clearly, $\free(\psi)=\free(\phi)$, and
$\nqr(\psi)\leq \nqr(\psi'){+}1\leq \nqr(\phi'){+}1=\nqr(\phi)$. 
Since the locality radius of $\psi$ equals that of $\psi'$, it is bounded by
\[
  (2\bw(\varphi')+1)^{\br(\varphi')}\ \ = \ \ (2\bw(\varphi)+1)^{\br(\varphi)}\,.
\]
The number of distinct numerical oc-type conditions in $\psi$ is at most
$\setsize{\Typeslist{R}{d}{n}}$. We get
\begin{align*}
  \setsize{\Typeslist{R}{d}{n}}
    &\ \ \le \ \ \exp_1\left({(nd^{R+1})^{\bigOh(\size{\sigma})}}\right)
     &&\text{by Lemma~\ref{lemma:isotypes}}\\
    & \ \ \le \ \  \exp_1\left({(nd^{(2\bw(\varphi')+1)^{\br(\varphi')}+1})^{\bigOh(\size{\sigma})}}\right)
     &&\text{by the induction hypothesis}\\
    & \ \ \in \ \  \exp_1\left(d^{2^{\poly(\size{\varphi}+\size{\sigma})}}\right)
     &&\text{since }n,\bw(\phi'),\br(\phi')\le\size{\phi}\\
    & \ \ = \ \  \exp_3(\poly(\size{\varphi}+\size{\sigma})+\log\log(d))\\
    & \ \ < \ \  \exp_4(\poly(\size{\varphi}+\size{\sigma})+\log\log(d))\,.
\end{align*}
Furthermore, it is straightforward to verify that the runtime for constructing
$\psi$ is as stated in the theorem.  
\qed
\end{proof}

\section{Applications}
\label{sec:applications}

\subsection{Fixed-parameter model-checking}

As a straightforward application of
Theorem~\ref{thm:main}\eqref{thm:main:ghnf}$+$\eqref{thm:main:algo},
we obtain
that Seese's \cite{See96} $\FO$ model-checking algorithm for classes
of structures of bounded degree can be generalised to the logic
$\FOCN(\Ps)$ for arbitrary numerical predicate collections
$(\Ps,\ar,\sem{.})$:

\begin{theorem}\label{thm:model-checking-formulas}
  Let $(\Ps,\ar,\sem{.})$ be a numerical predicate collection.
  There is an algorithm with oracle
  $\setc{(\P,\ov{n})}{\P\in\Ps,\ov{n}\in\sem{\P}}$ which receives as input
   a formula $\varphi(\ov{x},\ov{\kappa})\in\FOCN(\Ps)$,
   a $\sigma$-structure $\cA$ (where $\sigma$ consists of
   precisely the relation symbols that occur in $\varphi$), a
   tuple $\ov{a} \in A^{|\ov{x}|}$, and a tuple $\ov{k}\in\ZZ^{|\ov{\kappa}|}$,
  and decides whether $\cA \models \varphi[\ov{a},\ov{k}]$. 

  If $d\ge 2$ is an upper bound on the degree of $\cA$, then
  the algorithm runs in time
  \begin{equation*}
    f(\phi,d) \ \ + \ \ g(\phi,d) \cdot |A| \ \ + \ \ f(\phi,d)\cdot |A|^{\nqr(\phi)}
  \end{equation*}
  where $f(\phi,d)\in\exp_5\bigl(\poly(\size{\varphi})+\log\log(d)\bigr)$ and
  $g(\phi,d)\in\exp_3\bigl(\poly(\size{\varphi})+\log\log(d)\bigr)$.
\end{theorem}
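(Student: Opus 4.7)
The plan is to first transform $\varphi$ into a Hanf normal form $\psi$, then cache once the realisation counts of all neighbourhood types needed by $\psi$, and finally evaluate the resulting number formula recursively using these cached values. Concretely, I would invoke Theorem~\ref{thm:main}\eqref{thm:main:algo} on $(d,\sigma,\varphi)$ to construct a $d$-equivalent hnf-formula $\psi$ for $\FOCN(\Ps\cup\set{\P_\exists})[\sigma]$, with $\free(\psi)=\free(\varphi)$, $\nqr(\psi)\le\nqr(\varphi)$, and locality radius $R<(2\bw(\varphi)+1)^{\br(\varphi)}\le 2^{\poly(\size{\varphi})}$. This step also bounds $\size{\psi}$ and runs in time $\exp_5(\poly(\size{\varphi})+\log\log d)=:f(\phi,d)$. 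Since $\cA$ is $d$-bounded, $\cA\models\varphi[\ov{a},\ov{k}]$ iff $\cA\models\psi[\ov{a},\ov{k}]$, so it suffices to evaluate~$\psi$.

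For the caching step I would use Lemma~\ref{lemma:isotypes} to construct the representative lists $\TypeslistRd{1}$ and $\TypeslistRd{\setsize{\ov{x}}}$; this is a one-off cost absorbed into $f(\phi,d)$. I then iterate over the elements $a\in A$, compute $\NRA{a}$ via Lemma~\ref{lem:basic_facts}, determine the unique index $i$ with $\NRA{a}\isom\tau_i\in\TypeslistRd{1}$, and increment a counter $c_i$. The per-element cost is dominated by the sphere computation and the isomorphism test, which together take at most $2^{d^{\bigOh(R)\cdot\bigOh(\size{\sigma})}}\in\exp_3(\poly(\size{\varphi})+\log\log d)=:g(\phi,d)$. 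In the same pass I also compute $\NRA{\ov{a}}$ and read off the truth values of every sphere-formula $\sph_\rho(\ov{x})$ that occurs in $\psi$. The total cost of this phase is $f(\phi,d)+g(\phi,d)\cdot|A|$.

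The final step substitutes the precomputed data into $\psi[\ov{a},\ov{k}]$: every sphere-formula $\sph_\rho(\ov{x})$ becomes $\true$ or $\false$, and every basic counting term $\Count{(y)}{\sph_{\tau_i}(y)}$ becomes the integer constant $c_i$. What remains is a closed number formula in $\ov{\kappa}$, built from integers, number variables (interpreted by $\ov{k}$), integer arithmetic, numerical predicates from $\Ps\cup\set{\P_\exists}$, Boolean connectives, and number quantifiers. I evaluate it recursively: each number quantifier $\exists\kappa$ ranges over $\set{0,\ldots,|A|}$, and each atom $\P(t_1,\ldots,t_m)$ is resolved by evaluating the polynomials $t_j$ and a single oracle query (with $\P_\exists$ handled without oracle). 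With $\nqr(\psi)\le\nqr(\varphi)$ nested number quantifiers and a body of size at most $f(\phi,d)$, this phase costs at most $f(\phi,d)\cdot|A|^{\nqr(\varphi)}$, matching the last summand of the claimed bound. The main point to check is that the arithmetic bookkeeping here really fits inside $f(\phi,d)$: intermediate values of simple counting terms can reach $|A|^{\size{\psi}}$, but their bit-length stays polynomial in $\size{\psi}$ and $\log|A|$, so $+$, $\cdot$ and oracle queries remain cheap. The crucial semantic convention of Definition~\ref{def:FOCN} that $\exists\kappa$ ranges only over $\set{0,\ldots,|A|}$ is exactly what pins the exponent of $|A|$ to $\nqr(\varphi)$; as the remark after that definition already notes, the algorithm would not carry over if number quantifiers ranged over all of~$\ZZ$.
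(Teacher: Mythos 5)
Your proposal matches the paper's own proof essentially step by step: transform $\varphi$ into a $d$-equivalent hnf-formula $\psi$ via Theorem~\ref{thm:main}\eqref{thm:main:algo}, precompute in a linear pass over $A$ the realisation counts of the basic counting terms (using Lemmas~\ref{lem:basic_facts} and~\ref{lemma:isotypes}) together with the truth values of the sphere-formulas at $\ov{a}$, and then evaluate the residual number formula recursively with number quantifiers ranging over $\set{0,\ldots,|A|}$ and oracle access for $\Ps$, which gives the $f(\phi,d)\cdot|A|^{\nqr(\phi)}$ term. Your explicit remark on the bit-length of intermediate polynomial values is a point the paper leaves implicit, but it does not change the argument or the bound.
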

\begin{proof}
Let $\phi(\ov{x},\ov{\kappa})$, $\A$, $\ov{a}$, and $\ov{k}$ be the algorithm's input, where
$\sigma$ is the relational signature that consists of precisely the
relation symbols occurring in $\phi$, 
$\A$ is a $\sigma$-structure, and $\Ps$ is the set of all numerical
predicates that occur in $\phi$. 
For checking whether $\A\models\phi[\ov{a},\ov{k}]$, the algorithm proceeds as follows:
\begin{enumerate}[(1)]
 \item
   Compute an upper bound $d\geq 2$ on the degree of $\A$. 

   This can be done in time $\poly(\size{\A}\cdot\size{\sigma}\cdot d)$.
 \item
   Use the algorithm from 
   Theorem~\ref{thm:main}\eqref{thm:main:algo}
   to transform $\phi(\ov{x},\ov{\kappa})$ into a $d$-equivalent  
   $\FOCN(\Ps\cup\set{\P_\exists})[\sigma]$-formula $\psi(\ov{x},\ov{\kappa})$ in Hanf normal form.

   By Theorem~\ref{thm:main}\eqref{thm:main:ghnf}$+$\eqref{thm:main:algo},
   this takes time at most $f(\phi,d)\in
   \exp_5(\poly(\size{\phi})+\log\log d)$, 
   and $\psi$ has locality radius at most
   $r<(2\bw(\phi){+}1)^{\br(\phi)} \leq 2^{\poly(\size{\phi})}$, and
   $\nqr(\psi)\leq\nqr(\phi)$. 

   Note that $\psi$ is a Boolean combination of sphere-formulas of the
   form $\sph_\rho(\ov{x})$ and numerical oc-type conditions 
   with free variables among $\ov{\kappa}$.

 \item 
   For each sphere-formula $\sph_\rho(\ov{x})$ that occurs in $\psi$, check if $\A\models\sph_\rho[\ov{a}]$, and replace each occurrence
   of $\sph_\rho(\ov{x})$ in $\psi$ with $\True$ if $\A\models\sph_\rho[\ov{a}]$, and with $\False$ otherwise.

   By Lemma~\ref{lem:basic_facts}, each such check takes time at most $2^{\bigOh(\size{\sigma} |\ov{x}|^2 d^{2r+2})}$, and this is in
   $\exp_3(\poly(\size{\phi})+\log\log d)$.

 \item 
   For each basic counting term $\Count{(y)}{\sph_\tau(y)}$ that occurs in $\psi$, compute the number
   $n_\tau$ of elements $b\in A$ with $\A\models\sph_\tau[b]$.

   By using the Lemmas~\ref{lem:basic_facts} and \ref{lemma:isotypes}, the numbers $n_\tau$ for all relevant $\tau$ can be computed 
   in time $|A|\cdot 2^{(d^{r+1})^{\bigOh(\size{\sigma})}}$, and this is in $|A|\cdot \exp_3(\poly(\size{\phi})+\log\log d)$. 
   Furthermore, by Lemma~\ref{lemma:isotypes} the number of relevant $\tau$ is in
   $2^{(d^{r+1})^{\bigOh(\size{\sigma})}}$, and thus in $\exp_3(\poly(\size{\phi})+\log\log d)$.

 \item
   Replace each occurrence of a basic counting term
   $\Count{(y)}{\sph_\tau(y)}$ in $\psi$ with the number $n_\tau$.

   Furthermore, replace each free occurrence of a number variable $\kappa_i$ with the number $k_i$ (where
   $\ov{\kappa}=(\kappa_1,\ldots,\kappa_j)$ and $\ov{k}=(k_1,\ldots,k_j)$).

   Note that the resulting formula can be viewed as a first-order sentence $\chi$ that has to be evaluated in 
   $\ZZ$ with addition, multiplication, and the predicates in $\Ps\cup\set{\P_\exists}$, and where quantifications are 
   relativised to numbers in $\set{0,\ldots,|A|}$. 
   By construction, this sentence evaluates to $\true$ if, and only if, $\A\models\psi[\ov{a},\ov{k}]$.

   When using oracles for evaluating the predicates in $\Ps$,
   the evaluation of $\chi$ in $\ZZ$ can be carried out in time $\size{\psi}\cdot \bigOh(|A|^{\nqr(\psi)})$.
\end{enumerate}
In summary, this yields an algorithm that runs in time
\[
   f(\phi,d) \ + \ g(\phi,d)\cdot |A| \ + \ f(\phi,d)\cdot |A|^{\nqr(\psi)}\,,
\]
with $f(\phi,d)\in \exp_5(\poly(\size{\phi})+\log\log d)$ and
$g(\phi,d)\in \exp_3(\poly(\size{\phi})+\log\log d)$.
This completes the proof of Theorem~\ref{thm:model-checking-formulas}.
\qed
\end{proof}

\begin{remark}
Since $\nqr(\phi)=0$ for all $\phi\in\FOC(\Ps)$, 
Theorem~\ref{thm:model-checking-formulas} in particular implies
that on classes of structures of bounded degree,
model-checking of $\FOC(\Ps)$ is fixed-parameter tractable (even
fixed-parameter linear) when using oracles for the predicates in
$\Ps$.
\end{remark}

\subsection{Hanf-locality of $\FOCN(\Ps)$ and the locality rank of $\FO$}
\label{subsec:locality-rank}
\newcommand{\HT}{\mathsf{HT}}
\newcommand{\cZ}{\mathcal Z}

The following notion is taken from \cite{DBLP:journals/jsyml/HellaLN99} (see also the textbook \cite{Lib04}). 
Let $\cA$ and $\cB$ be
structures over a relational signature $\sigma$, let $k\in\bN$ and
$\ov{a}\in A^k$ and $\ov{b}\in B^k$. Let furthermore $r\in\bN$. Then
$(\cA,\ov{a})$ and $(\cB,\ov{b})$ are \emph{$r$-equivalent} (denoted
$(\cA,\ov{a})\leftrightharpoons_r(\cB,\ov{b})$) if there exists a
bijection $f\colon A\to B$ such that for all $c\in A$ we have
\[
   \Neighb{r}{\cA}{\ov{a},c} \ \cong \ \Neighb{r}{\cB}{\ov{b},f(c)}\,.
\]

Now let $\varphi(\ov{x})$ be an $\FOCN(\Ps)$-formula with $k$ free
structure variables and without free number variables.
The formula $\varphi(\ov{x})$ is \emph{Hanf-local} if there exists
$r\ge0$ such that for all structures $\cA$ and $\cB$ and all
$\ov{a}\in A^k$ and $\ov{b}\in B^k$ with
$(\cA,\ov{a})\leftrightharpoons_r(\cB,\ov{b})$, we have
\[
  \cA\models\varphi[\ov{a}]\ \ \iff \ \ \cB\models\varphi[\ov{b}]\,.
\]
The minimal such $r$ is called the \emph{Hanf-locality rank of
  $\varphi$} and is denoted by $\lr(\varphi)$.

Let $\tau$ be a type with a single centre and let $\cA$ be a
$\sigma$-structure. By $\real_\tau^\cA$, we denote the number of
realisations of the type $\tau$ in $\cA$. For $r,d\in\bN$, the
\emph{Hanf-tuple for radius $r$ and degree $d$} for a structure
$\cA$ is the tuple
\[
    \HT_r^d(\cA) \quad = \quad \big(\real_\tau^\cA\big)_{\tau\in\Typeslist{r}{d}{1}}\,.
\]
By Theorem~\ref{thm:main}\eqref{thm:main:ghnf}, every formula
$\varphi$ has a $d$-equivalent hnf-formula $\psi$ of 
locality radius $r< (2\bw(\varphi)+1)^{\br(\varphi)}$.
Furthermore,
$(\cA,\ov{a})\leftrightharpoons_r(\cB,\ov{b})$ implies that
$\HT_r^d(\cA)=\HT_r^d(\cB)$ and
$\Neighb{r}{\cA}{\ov{a}}\cong\Neighb{r}{\cB}{\ov{b}}$. Since the
validity of $\psi$ only depends on this information, we get the following:

\begin{corollary}\label{cor:FOCN:Hanf:local}
  Every $\FOCN(\Ps)$-formula $\varphi(\ov{x})$ without free number
  variables is Hanf-local with Hanf-locality rank
  $\lr(\phi)< (2\bw(\varphi)+1)^{\br(\varphi)}$.
\end{corollary}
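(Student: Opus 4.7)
The plan is to derive the corollary directly from Theorem~\ref{thm:main}\eqref{thm:main:ghnf} together with standard properties of the $r$-equivalence relation $\leftrightharpoons_r$. I would fix an $\FOCN(\Ps)$-formula $\varphi(\ov{x})$ with $k := |\ov{x}|$ free structure variables and no free number variables, set $r := (2\bw(\varphi)+1)^{\br(\varphi)} - 1$, and consider arbitrary structures $\cA,\cB$ and tuples $\ov{a},\ov{b}$ with $(\cA,\ov{a}) \leftrightharpoons_r (\cB,\ov{b})$ witnessed by a bijection $f\colon A \to B$. Then I would pick $d \ge 2$ bounding the degrees of both $\cA$ and $\cB$ (both finite, so such a $d$ exists) and invoke Theorem~\ref{thm:main}\eqref{thm:main:ghnf} to obtain an hnf-formula $\psi(\ov{x})$ for $\FOCN(\Ps)[\sigma]$ with $\free(\psi) = \free(\varphi)$, locality radius at most $r$, and $\psi \equivd \varphi$. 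It would then suffice to prove $\cA \models \psi[\ov{a}]$ iff $\cB \models \psi[\ov{b}]$.

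The next step would extract the two invariants flagged in the preamble to the corollary. First, I would establish $\Neighb{r}{\cA}{\ov{a}} \cong \Neighb{r}{\cB}{\ov{b}}$: assuming $k \ge 1$, I would specialize the hypothesis at $c := a_1$, obtaining an isomorphism $\Neighb{r}{\cA}{\ov{a},a_1} \cong \Neighb{r}{\cB}{\ov{b},f(a_1)}$ of $(k{+}1)$-centered types; because centers $1$ and $k{+}1$ on the left both equal $a_1$, the iso forces centers $1$ and $k{+}1$ on the right to coincide, i.e.\ $f(a_1) = b_1$, and forgetting the now-redundant last center yields the desired $k$-centered iso (the case $k=0$ is vacuous). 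Second, I would establish $\HT_r^d(\cA) = \HT_r^d(\cB)$: for every $c \in A$ the iso $\Neighb{r}{\cA}{\ov{a},c} \cong \Neighb{r}{\cB}{\ov{b},f(c)}$ matches the last centers $c$ and $f(c)$ and restricts, via distance preservation, to an iso $\Neighb{r}{\cA}{c} \cong \Neighb{r}{\cB}{f(c)}$, so $f$ is a type-preserving bijection on single-centre $r$-types; as a bijection, $f$ also guarantees $|A| = |B|$.

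Finally, $\psi$ is a Boolean combination of sphere-formulas $\sph_\rho(\ov{x})$ of radius at most $r$ and of numerical oc-type conditions -- the latter being Boolean combinations and number-quantifications of atomic conditions $\P(t_1,\ldots,t_m)$ whose simple counting terms $t_i$ are polynomials over basic counting terms $\Count{(y)}{\sph_\tau(y)}$, integers, and number variables, with no free structure variables. The first invariant makes every sphere-formula agree on $(\cA,\ov{a})$ and $(\cB,\ov{b})$, while the second makes every basic counting term evaluate identically in $\cA$ and $\cB$; combined with $|A|=|B|$, which controls the quantifier range $\set{0,\ldots,|A|}$ of $\exists\kappa$, a routine structural induction (through polynomial terms, atomic $\P(\ldots)$ statements, Boolean connectives, and $\exists\kappa$) shows that every numerical oc-type condition has the same truth value in both structures. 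Combining the two parts will yield $\cA \models \psi[\ov{a}]$ iff $\cB \models \psi[\ov{b}]$, and hence $\lr(\varphi) \le r < (2\bw(\varphi)+1)^{\br(\varphi)}$. I expect the only mildly delicate step to be the centers-coincide bookkeeping that pins down $f(a_1) = b_1$ in the first invariant; everything else is essentially definition-chasing once Theorem~\ref{thm:main}\eqref{thm:main:ghnf} is in hand.
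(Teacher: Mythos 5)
Your proof is correct and follows essentially the same route as the paper: reduce to the $d$-equivalent hnf-formula $\psi$ from Theorem~\ref{thm:main}\eqref{thm:main:ghnf}, observe that $(\cA,\ov a)\leftrightharpoons_r(\cB,\ov b)$ forces $\Neighb{r}{\cA}{\ov a}\cong\Neighb{r}{\cB}{\ov b}$ and $\HT_r^d(\cA)=\HT_r^d(\cB)$, and conclude that all sphere-formulas and all numerical oc-type conditions in $\psi$ evaluate identically. The paper states these two invariants as immediate consequences of $\leftrightharpoons_r$ without spelling them out; you instead verify them explicitly (the ``centers coincide'' bookkeeping that gives $f(a_1)=b_1$, the distance-preservation argument for single-centre spheres, and the observation that $|A|=|B|$ controls the range of $\exists\kappa$ -- though note that $|A|$ is also recoverable from the Hanf tuple itself as $\sum_\tau\real^\cA_\tau$), which is a harmless expansion of the same argument rather than a different approach.
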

\begin{proof}
  Let $r\deff (2\bw(\varphi)+1)^{\br(\varphi)}-1$.
  Let $k=|\ov{x}|$. Let $\cA$ and $\cB$ be two $\sigma$-structures and
  let $\ov{a}\in A^k$ and $\ov{b}\in B^k$ with
  $(\cA,\ov{a})\leftrightharpoons_r(\cB,\ov{b})$. This implies in
  particular that
  $\HT_r^d(\cA)=\HT_r^d(\cB)$ for all $d\in\bN$, and
  (provided that $k>0$) $\Neighb{r}{\cA}{\ov{a}}\cong\Neighb{r}{\cB}{\ov{b}}$.
 
  Since $\cA$ and $\cB$ are finite, there is some $d\in\bN$ such that
  both structures $\cA$ and $\cB$ are $d$-bounded. By
  Theorem~\ref{thm:main}\eqref{thm:main:ghnf}, 
  there exists a hnf-formula $\psi(\ov{x})$ with
  $\varphi\equiv_d\psi$. Since the locality radius of $\psi$ is at most
  $r$, the Hanf-tuple $\HT_r^d(\cA)$ (if $k>0$, together with the
  isomorphism type of the sphere $\Neighb{r}{\cA}{\ov{a}}$) determines
  whether the hnf-formula $\psi$ holds in $(\cA,\ov{a})$ or not (and
  similarly for $(\cB,\ov{b})$). We therefore get
  \begin{align*}
     (\cA,\ov{a})\models\varphi 
       &\ \ \iff \ \ (\cA,\ov{a})\models\psi 
        &&\text{since }\varphi\equiv_d\psi\\
       &\ \ \iff \ \ (\cB,\ov{b})\models\psi
        &&\text{since }\HT_r^d(\cA)=\HT_r^d(\cB)
          \text{ \ (and $\Neighb{r}{\cA}{\ov{a}}\cong\Neighb{r}{\cB}{\ov{b}}$)}\\
       &\ \ \iff \ \ (\cB,\ov{b})\models\varphi
        &&\text{since }\varphi\equiv_d\psi\,.
  \end{align*}
  Hence $\varphi$ is $r$-Hanf-local.\qed
\end{proof}
\bigskip

For first-order formulas $\varphi$ we have
$\lr(\varphi)\in\exp_1(\bigO(\size{\varphi}))$ 
(actually, $\lr(\varphi)\leq 2^{q-1}-1$ where $q$ is the quantifier depth of $\varphi$
\cite{Lib00}).
Our results allow us to bound the Hanf-locality rank of
$\varphi\in\FO$ by a polynomial in $\size{\varphi}$ whose degree is
the quantifier alternation depth of~$\varphi$. As usual, we write
$\Sigma_{n}$ to denote the set of all $\FO$-formulas of quantifier
alternation depth $\leq n$ whose outermost quantifier block is
existential.

\begin{theorem}\label{thm:locality-rank-of-FO}
  Let $\varphi(\ov{x})\in\FO[\sigma]$ belong to
  $\Sigma_{n}$ and let $m\deff|\ov{x}|>0$. Then, $\lr(\varphi)
  < (2\size{\varphi}+1)^n$.
\end{theorem}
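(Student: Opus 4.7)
The idea is to exhibit an equivalent $\FOCN(\{\P_\exists\})$-formula with binding rank at most $n$ and binding width at most $\size{\varphi}$, and then to invoke Corollary~\ref{cor:FOCN:Hanf:local}. The key observation is that a whole block of like quantifiers can be absorbed into a single tuple-counting construct, which contributes only $1$ to the binding rank (instead of the $k$ contributed by a chain of $k$ nested single-variable $\exists$-quantifiers in $\FO$); this is exactly why passing through $\FOCN$ gains us a polynomial rather than an exponential bound.

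First, I would bring $\varphi(\ov{x})$ into prenex normal form
\[
\varphi(\ov{x}) \ \equiv \ Q_1 \ov{y}_1\, Q_2 \ov{y}_2 \cdots Q_n \ov{y}_n\, \phi_0(\ov{x},\ov{y}_1,\ldots,\ov{y}_n),
\]
with $\phi_0$ quantifier-free, $Q_1 = \exists$, and the $Q_i$ strictly alternating. Since $\varphi \in \Sigma_n$, the usual prenex rewriting rules produce such a form without increasing the alternation depth and without introducing any new quantifiers (they only reorder and rename), so $n$ and $\sum_i |\ov{y}_i|$ are both at most the total number of quantifier occurrences in $\varphi$, hence bounded by $\size{\varphi}$.

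Second, I would translate inside-out into $\FOCN(\{\P_\exists\})$ by collapsing each quantifier block into a single counting term: setting $\psi_n \deff \phi_0$ and, for $i = n, n-1, \ldots, 1$,
\[
\psi_{i-1} \ \deff \
\begin{cases}
\P_\exists\bigl(\Count{\ov{y}_i}{\psi_i}\bigr) & \text{if } Q_i = \exists, \\
\neg\,\P_\exists\bigl(\Count{\ov{y}_i}{\neg\psi_i}\bigr) & \text{if } Q_i = \forall.
\end{cases}
\]
Because $\sem{\P_\exists} = \NNpos$, we have $\psi_{i-1} \equiv Q_i \ov{y}_i\, \psi_i$, so $\varphi^* \deff \psi_0$ is equivalent to $\varphi$ on every $\sigma$-structure. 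Reading off the definitions from Section~\ref{section:FOCN}, each operator $\Count{\ov{y}_i}{\cdot}$ adds exactly $1$ to the binding rank and contributes $|\ov{y}_i|$ to the binding width; since negations and $\P_\exists(\cdot)$ do not affect either quantity, $\br(\varphi^*) = n$ and $\bw(\varphi^*) = \max_i |\ov{y}_i| \leq \size{\varphi}$.

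Applying Corollary~\ref{cor:FOCN:Hanf:local} to $\varphi^*$ then gives
\[
\lr(\varphi^*) \ < \ (2\bw(\varphi^*)+1)^{\br(\varphi^*)} \ \leq \ (2\size{\varphi}+1)^n,
\]
and since Hanf-locality rank is a purely semantic invariant, $\lr(\varphi) = \lr(\varphi^*)$ inherits the same bound. The only step requiring real care is the prenexification: one has to verify the standard claim that on a $\Sigma_n$-formula the prenex rewrite rules (for $\neg$, $\vee$, $\wedge$, and quantifier pulling with bound-variable renaming) yield a strictly alternating prenex prefix of at most $n$ blocks whose total number of bound variables does not exceed $\size{\varphi}$. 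Once this is in place, the rest is routine bookkeeping with the definitions of $\br$ and $\bw$.
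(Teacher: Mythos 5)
Your proposal is correct and essentially identical to the paper's own proof: the paper also prenexifies $\varphi$ (writing the prenex form uniformly as $\exists\ov{x_1}\,\lnot\exists\ov{x_2}\cdots\lnot\exists\ov{x_n}\,\lnot\psi$ rather than splitting the $\exists$/$\forall$ cases as you do), then replaces each quantifier block by a single $\P_\exists(\Count{\ov{x_i}}{\cdot})$ to get binding rank $n$ and binding width $\le\size{\varphi}$, and invokes Corollary~\ref{cor:FOCN:Hanf:local}.
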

\begin{proof}
  The formula $\varphi$ is equivalent to a formula of the form
  $\exists\ov{x_1}\,\lnot\exists\ov{x_2}\cdots\,\lnot\exists\ov{x_n}\,\lnot\psi$
  where $\psi$ is quantifier-free and $\ov{x_1}$, $\ov{x_2}$, \ldots,
  $\ov{x_n}$ are tuples of variables of length $\le\size{\varphi}$.

  By induction, set $\psi_n=\psi$ and
  $\psi_{i-1}=\P_\exists(\Count{\ov{x_i}}{\lnot\psi_i})$. Clearly,
  $\psi_0\equiv\varphi$ has binding rank~$n$ and binding width
  $\max\setc{|\ov{x_i}|}{1\le i\le n}\le\size{\varphi}$.

  Then Corollary~\ref{cor:FOCN:Hanf:local} implies that
  $\lr(\varphi)=\lr(\psi_0) < (2\size{\varphi}+1)^n$.\qed
\end{proof}

\bigskip

In \cite{KusL11} it is conjectured that, for every $n\in\bN$, the
locality rank (also for infinite structures) of formulas
$\varphi\in\Sigma_n$ is polynomial in the quantifier rank $q$ and
therefore in the size of~$\varphi$. The above theorem confirms 
this conjecture at least for finite
structures.\footnote{\cite[Conjecture~6.2]{KusL11} expected the bound
  $\size{\varphi}\cdot 2^n$ as opposed to our result
  $\size{\varphi}^n$.}

\medskip

We close this subsection by proving a result that is slightly stronger than
Theorem~\ref{thm:locality-rank-of-FO}.  To formulate that result
concisely, we need the following definition. Let
$\ell\in\bN_{\ge1}$. Then $\cB\Sigma_{0,\ell}$ is the set of
quantifier-free formulas from $\FO$ (independent from $\ell$). A
formula belongs to $\cB\Sigma_{n+1,\ell}$ if it is a Boolean
combination of formulas of the form
\[
   \exists x_1\,\exists x_2\,\cdots\,\exists x_k\;\psi
\]
with $k\le \ell$ and $\psi\in\cB\Sigma_{n,\ell}$. Note that the
traditional set $\cB\Sigma_n$ of formulas of quantifier alternation
depth~$\le n$ equals the union of all the sets $\cB\Sigma_{n,\ell}$
with $\ell\ge1$. The index $\ell$ bounds the size of blocks of
quantifiers. As an example, consider
$\varphi(\ov{x}),\psi(\ov{y})\in\cB\Sigma_{n,\ell}$ with disjoint sets
of free variables, and note that the formula
\[
   \exists x_1\dots \exists x_\ell\,\exists y_1\dots\exists y_\ell\,(\varphi(\ov{x})\land\psi(\ov{y}))
\]
belongs to $\cB\Sigma_{n+1,2\ell}$ as well as to
$\cB\Sigma_{n+2,\ell}$, but the following is an equivalent formula
from $\cB\Sigma_{n+1,\ell}$:
\[
   \bigl(\ 
    \exists x_1\dots \exists x_\ell\,\varphi(\ov{x})
    \ \land \ 
    \exists y_1\dots\exists y_\ell\,\psi(\ov{y})
   \ \bigr)\,.
\]

The following lemma translates $\cB\Sigma_{n,\ell}$-formulas into
$\FOC({\set{\P_{\exists}}})$-formulas of restricted binding width and
rank:

\begin{lemma}\label{L:FO->FOCQ}
  For all $n\ge0$, $\ell\ge1$, and $\varphi\in\cB\Sigma_{n,\ell}$,
  there exists an equivalent formula $\varphi'\in\FOC(\Ps)$ with
  $\Ps=\set{\P_{\exists}}$ such that $\free(\varphi')=\free(\varphi)$,
  $\bw(\varphi')\le\ell$ and $\br(\varphi')\le n$.
\end{lemma}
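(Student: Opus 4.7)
The plan is to proceed by induction on $n$, translating each block of existential quantifiers into a single counting term wrapped by the predicate $\P_\exists$.

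For the base case $n=0$, any $\varphi\in\cB\Sigma_{0,\ell}$ is already quantifier-free, so I simply take $\varphi'\deff\varphi$. By the definitions in the paper, a quantifier-free $\FO$-formula has $\br=\bw=0$, so the bounds hold trivially, and the free variables are preserved.

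For the inductive step, suppose the claim holds for $n$, and let $\varphi\in\cB\Sigma_{n+1,\ell}$. Since Boolean combinations preserve both $\br$ and $\bw$ (both are defined as the maximum over subformulas), it suffices to handle a single generator of the form $\exists x_1\cdots\exists x_k\,\psi$ with $k\le\ell$ and $\psi\in\cB\Sigma_{n,\ell}$; the general case follows by taking the corresponding Boolean combination of the translated generators. By the induction hypothesis, there exists $\psi'\in\FOC(\set{\P_\exists})$ equivalent to $\psi$ with $\free(\psi')=\free(\psi)$, $\bw(\psi')\le\ell$, and $\br(\psi')\le n$. I then define
\[
  \varphi' \ \ \deff \ \ \P_\exists\big(\Count{(x_1,\ldots,x_k)}{\psi'}\big).
\]
Equivalence follows because, for any $\sigma$-interpretation $\I=(\A,\beta)$, we have $\I\models\exists x_1\cdots\exists x_k\,\psi$ iff there is some tuple $(a_1,\ldots,a_k)\in A^k$ with $\I\frac{a_1,\ldots,a_k}{x_1,\ldots,x_k}\models\psi'$, iff the value of $\Count{(x_1,\ldots,x_k)}{\psi'}$ under $\I$ is at least $1$, which by $\sem{\P_\exists}=\NNpos$ means $\I\models\varphi'$. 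The free variables match because $\free(\varphi')=\free(\psi')\setminus\set{x_1,\ldots,x_k}=\free(\psi)\setminus\set{x_1,\ldots,x_k}=\free(\exists x_1\cdots\exists x_k\,\psi)$.

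For the complexity measures, using the inductive definitions of $\br$ and $\bw$ from Section~\ref{section:FOCN}, the counting term $t\deff\Count{(x_1,\ldots,x_k)}{\psi'}$ satisfies $\br(t)=\br(\psi')+1\le n+1$ and $\bw(t)=\max\set{k,\bw(\psi')}\le\ell$, since $k\le\ell$ and $\bw(\psi')\le\ell$. Applying $\P_\exists$ does not change these values, so $\br(\varphi')\le n+1$ and $\bw(\varphi')\le\ell$, completing the induction. There is no real obstacle here; the only subtlety is making sure that one packages the whole block $\exists x_1\cdots\exists x_k$ into a single counting term (rather than nesting $k$ separate applications of $\P_\exists$), which is exactly what keeps the binding rank at $n+1$ rather than $n+k$. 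The bound on $\bw$ is the reason the lemma isolates the block-size parameter $\ell$ separately from the alternation depth $n$.
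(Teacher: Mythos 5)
Your proof is correct and follows essentially the same inductive strategy as the paper's: induct on $n$, handle quantifier blocks by packaging $\exists x_1\cdots\exists x_k\,\psi$ into a single counting term $\P_\exists(\Count{\ov{x}}{\psi'})$, and observe that Boolean combinations preserve $\br$ and $\bw$. The only difference is cosmetic — the paper states the induction step for a single generator and then remarks that Boolean combinations handle the general case, whereas you front-load that remark — but the substance, including the crucial observation that a whole block must become one counting term to keep the binding rank at $n$, is identical.
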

\begin{proof}
  We proceed by induction on $n$. If $n=0$, i.e., $\varphi$ is
  quantifier-free, we set $\varphi'=\varphi$.

  Now consider a formula of the form $\varphi=\exists\ov{x}\,\psi$
  with $\psi\in\cB\Sigma_{n,\ell}$ and $|\ov{x}|=k\le \ell$. By
  induction, there exists a formula $\psi'\in\FOC(\Ps)$ with
  $\psi\equiv\psi'$, $\bw(\psi')\le\ell$, $\br(\psi')\le n$, and
  $\free(\psi')=\free(\psi)$. We set
  $\varphi'=\P_{\exists}\,(\Count{\ov{x}}{\psi'})$. Then, clearly,
  $\varphi$ and $\varphi'$ are equivalent,
  $\bw(\varphi') = \bw(\Count{\ov{x}}{\psi'}) =
  \max(|\ov{x}|,\bw(\psi')) \le \ell$,
  $\br(\varphi')= \br(\Count{\ov{x}}{\psi'}) = 1+\br(\psi')\le n+1$,
  and $\free(\varphi')=\free(\varphi)$.

  Since $\cB\Sigma_{n+1,\ell}$-formulas are Boolean combinations
  of formulas of the form $\exists\ov{x}\,\psi$ with
  $\psi\in\cB\Sigma_{n,\ell}$ and $|\ov{x}|=k\le \ell$, the result follows.\qed
\end{proof}

\bigskip

We obtain the following strengthening of Theorem~\ref{thm:locality-rank-of-FO}:
\begin{theorem}
  Let $\varphi(\ov{x})\in\FO[\sigma]$ with $|\ov{x}|>0$ belong to
  $\cB\Sigma_{n,\ell}$. Then the Hanf-locality
  rank of $\varphi$ is less than $(2\ell+1)^n$, i.e.,
  $\lr(\varphi)< (2\ell+1)^n\le\size{\varphi}^n$.
\end{theorem}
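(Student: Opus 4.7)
The plan is to combine Lemma~\ref{L:FO->FOCQ} with Corollary~\ref{cor:FOCN:Hanf:local} in a direct way, so the proof should be very short.

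First, given $\varphi(\ov{x}) \in \cB\Sigma_{n,\ell}$, I would apply Lemma~\ref{L:FO->FOCQ} to obtain an equivalent $\FOC(\{\P_\exists\})$-formula $\varphi'$ with $\free(\varphi') = \free(\varphi)$, $\bw(\varphi') \le \ell$, and $\br(\varphi') \le n$. Since $\varphi \in \FOC(\{\P_\exists\}) \subseteq \FOCN(\Ps)$ and $\varphi'$ has no free number variables, Corollary~\ref{cor:FOCN:Hanf:local} applies and yields
\[
  \lr(\varphi') \ < \ (2\bw(\varphi')+1)^{\br(\varphi')} \ \le \ (2\ell+1)^n.
\]
Because $\varphi \equiv \varphi'$ (over all structures, hence in particular with respect to $r$-equivalence of pointed structures), we also have $\lr(\varphi) = \lr(\varphi') < (2\ell+1)^n$.

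For the second inequality $(2\ell+1)^n \le \size{\varphi}^n$, I would argue that any $\cB\Sigma_{n,\ell}$-formula that is not trivial must contain at least $2\ell+1$ symbols: a block quantifying $\ell$ variables already contributes $\ell$ quantifier symbols and $\ell$ variable occurrences, so $\size{\varphi} \ge 2\ell$; a slight case analysis (or the harmless observation that for $\ell = 0$ the claim is vacuous and for $n=0$ both bounds equal~$1$) yields $2\ell + 1 \le \size{\varphi}$, from which $(2\ell+1)^n \le \size{\varphi}^n$ follows by monotonicity.

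No real obstacle is expected here, since all the work was done in the preceding lemma and corollary; the only small point to handle carefully is the syntactic inequality $2\ell+1 \le \size{\varphi}$, which amounts to verifying the base case and the quantifier counting in the definition of $\size{\cdot}$.
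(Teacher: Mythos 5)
Your proof of the main inequality $\lr(\varphi) < (2\ell+1)^n$ is precisely the paper's own proof: apply Lemma~\ref{L:FO->FOCQ} to obtain an equivalent $\varphi'\in\FOC(\set{\P_\exists})$ with $\bw(\varphi')\le\ell$, $\br(\varphi')\le n$, and $\free(\varphi')=\free(\varphi)$, and then invoke Corollary~\ref{cor:FOCN:Hanf:local} on $\varphi'$; since $\varphi\equiv\varphi'$ over all structures, $\lr(\varphi)=\lr(\varphi')$.

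One caveat concerns the auxiliary inequality $(2\ell+1)^n \le \size{\varphi}^n$, which the paper states but does not argue. Your symbol count assumes that $\varphi$ actually contains a quantifier block of length $\ell$, but membership in $\cB\Sigma_{n,\ell}$ only guarantees that all blocks have length \emph{at most} $\ell$; since $\cB\Sigma_{n,\ell}\subseteq\cB\Sigma_{n,\ell'}$ for $\ell\le\ell'$, a short formula such as $\exists x\,(x{=}y)$ belongs to $\cB\Sigma_{1,\ell}$ for every $\ell\ge1$ even though $\size{\exists x\,(x{=}y)}=5<2\ell+1$ once $\ell\ge3$. So the inequality $2\ell+1\le\size{\varphi}$ holds only when $\ell$ is taken as the \emph{smallest} parameter for which $\varphi\in\cB\Sigma_{n,\ell}$ (in which case your count is correct). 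This is arguably an imprecision in the theorem's formulation rather than in your argument; for the main bound $\lr(\varphi)<(2\ell+1)^n$, your proof is exactly what the paper does.
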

\begin{proof}
  By Lemma~\ref{L:FO->FOCQ}, there exists a formula
  $\varphi'\in\FOC(\Ps)$ with $\Ps=\set{\P_{\exists}}$ that is
  equivalent to $\varphi$ such that $\bw(\varphi')\le\ell$,
  $\br(\varphi')\le n$, and
  $\free(\varphi')=\free(\varphi)\subseteq\VARS$. Then we get
  $\lr(\varphi)=\lr(\varphi')< (2\ell+1)^n$ from
  Corollary~\ref{cor:FOCN:Hanf:local}.\qed
\end{proof}

\subsection{Hanf-locality and bounded arithmetic}
\label{subsec:bounded:arithmetic}

For a numerical predicate collection $(\Ps,\ar,\sem{.})$ consider
the extension
\[
  \cZ_\Ps \quad = \quad
  (\bZ,+,\cdot,0,\le,(\sem{\P})_{\P\in\Ps\cup\{\P_{\exists}\}})
\]
of integer arithmetic with the predicates from $\Ps\cup\{\P_{\exists}\}$. A
first-order formula $\Phi(\ov{v})$ in the signature of this structure
is \emph{bounded} if every quantification $\exists v$ is of the form
$\exists v\,(0\le v\le \sum_{1\le i\le |\ov{v}|}v_i\,\land\, \ldots\,)$, i.e.,
quantification is restricted to numbers between 0 and the sum of the
free variables of $\Phi$. 

Let $\varphi\in\FOCN(\Ps)$ be a sentence, let
$r=(2\bw(\varphi)+1)^{\br(\varphi)}-1$, and let $d\in\bN$. By
Theorem~\ref{thm:main}\eqref{thm:main:ghnf}, validity of
$\varphi$ in a $d$-bounded structure $\cA$ only depends on the tuple
$\HT_r^d(\cA)\in\bN^{\Typeslist{r}{d}{1}}$. Since hnf-sentences are
Boolean combinations of numerical oc-type conditions,
Theorem~\ref{thm:main}\eqref{thm:main:ghnf} ensures that $\cA\models\varphi$ is a
first-order property of the tuple $\HT_r^d(\cA)$ in $\cZ_\Ps$. More generally, we
obtain the following:

\begin{theorem}\label{thm:determining-formulas}
  Let $(\Ps,\ar,\sem{.})$ be a numerical predicate collection.
  There is an algorithm which receives as input a degree bound $d$, a
  relational signature $\sigma$, a formula
  $\phi(\ov{x})\in\FOCN(\Ps)[\sigma]$ with $k$ free structure
  variables and without free number variables, and a type
  $\rho\in\Typeslist{r}{d}{k}$, 
  for $r=(2\bw(\varphi)+1)^{\br(\varphi)}-1$,
  and constructs a bounded first-order formula
  $\Psi_\rho$ in the signature
  of~$\cZ_\Ps$ and with free variables
  $(v_\tau)_{\tau\in\Typeslist{r}{d}{1}}$,
  such that the following holds for all $d$-bounded
  $\sigma$-structures $\cA$ and all $\ov{a}\in A^k$ with
  $\rho\cong\Neighb{r}{\cA}{\ov{a}}$:
  \[
     \cA\models\varphi[\ov{a}] \ \ \iff \ \
     \cZ_\Ps\models\Psi_\rho[\HT_r^d(\cA)]\,.
  \]
\end{theorem}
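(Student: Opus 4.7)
The plan is to transform $\varphi(\ov{x})$ into a $d$-equivalent hnf-formula and then read off the desired bounded first-order formula $\Psi_\rho$ over $\cZ_\Ps$ from it.

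First, I would invoke Theorem~\ref{thm:main}\eqref{thm:main:algo} to obtain a $d$-equivalent hnf-formula $\psi(\ov{x})$ for $\FOCN(\Ps)[\sigma]$. By Theorem~\ref{thm:main}\eqref{thm:main:ghnf}, the locality radius of $\psi$ is at most $r$, so $\psi$ is a Boolean combination of sphere-formulas $\sph_{\tau'}(\ov{x}')$ with $\ov{x}'\subseteq\ov{x}$ and radius $r'\le r$, and of numerical oc-type conditions built from basic counting terms $\Count{(y)}{\sph_\tau(y)}$, integers, and number variables by $+$, $\cdot$, numerical predicates from $\Ps\cup\set{\P_\exists}$, Boolean combinations, and number quantifications. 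Because $\Neighb{r}{\cA}{\ov{a}}\isom\rho$ and any shortest path in $\cA$ of length $\le r'$ starting from a centre in $\ov{a}$ stays within $N_r^\cA(\ov{a})$, the isomorphism type of $\Neighb{r'}{\cA}{\ov{a}'}$ is completely determined by $\rho$ for every sub-tuple $\ov{a}'$ of $\ov{a}$ and every $r'\le r$. I would therefore eliminate every sphere-formula occurring in $\psi$ by replacing it with $\true$ or $\false$ according to the isomorphism type it prescribes, as evaluated inside $\rho$.

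After this elimination, the resulting formula is essentially a numerical oc-type condition; its only atomic building blocks (besides integer constants) are applications $\P(t_1,\ldots,t_m)$ whose arguments are polynomials over number variables and basic counting terms $\Count{(y)}{\sph_\tau(y)}$ with $\tau\in\Typeslist{r}{d}{1}$. I would then produce $\Psi_\rho$ by (i) substituting the free variable $v_\tau$ for every occurrence of $\Count{(y)}{\sph_\tau(y)}$; (ii) expressing each integer constant via a term over $0$, $+$, and $1$, where $1$ is first-order definable in $\cZ_\Ps$ as the smallest positive integer using $\le$; (iii) rendering every atomic formula $\P(t_1,\ldots,t_m)$ as the atomic $\sem{\P}(t_1,\ldots,t_m)$ of $\cZ_\Ps$; and (iv) relativising every number quantification $\exists\kappa\,\theta$ to the bounded form $\exists\kappa\,(0\le\kappa\le\sum_{\tau\in\Typeslist{r}{d}{1}}v_\tau \ \und\ \theta)$.

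Step (iv) is sound because the types in $\Typeslist{r}{d}{1}$ partition any $d$-bounded $\cA$ according to the isomorphism type of the $r$-sphere, so $\sum_\tau\real_\tau^\cA = |A|$; this makes the range of the bounded quantifier coincide with $\set{0,\ldots,|A|}$ as required by the $\FOCN(\Ps)$-semantics of $\exists\kappa$. The overall equivalence $\cA\models\varphi[\ov{a}]\iff\cZ_\Ps\models\Psi_\rho[\HT_r^d(\cA)]$ then follows by a straightforward structural induction on the formula obtained after sphere-elimination. The work here is largely bookkeeping; the only non-trivial point is the path-containment observation ensuring that sphere-formulas can be evaluated purely from $\rho$, together with the double role of the free variables $v_\tau$, which simultaneously represent the Hanf-tuple entries and supply the upper bound $\sum_\tau v_\tau = |A|$ for the bounded number quantifiers.
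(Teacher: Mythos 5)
Your proof is correct and takes essentially the same route as the paper's: transform $\varphi$ into a $d$-equivalent hnf-formula of locality radius at most $r$, evaluate every sphere-formula inside $\rho$ to reduce to a hnf-sentence $\psi_\rho$, replace each basic counting term $\Count{(y)}{\sph_\tau(y)}$ by the free variable $v_\tau$, and relativise each number quantifier to the range $[0,\sum_{\tau}v_\tau]$. You are in fact slightly more careful than the paper on two points that it leaves implicit: you justify that $\sum_\tau v_\tau$ is the correct bound because the $r$-types with one centre partition $A$ (so $\sum_\tau\real_\tau^\cA=|A|$), and you note that non-zero integer constants must first be made definable over the signature of $\cZ_\Ps$, which contains only the constant $0$.
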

\begin{proof}
  Let $d\in\bN$, let $\sigma$ be a signature, and let
  $\phi\in\FOCN(\Ps)[\sigma]$ have $k$ free structure variables and no free
  number variables. Set $r=(2\bw(\varphi)+1)^{\br(\varphi)}-1$. By
  Theorem~\ref{thm:main}\eqref{thm:main:ghnf}$+$\eqref{thm:main:algo},
  we can construct a $d$-equivalent 
  hnf-formula $\psi$ of locality radius at most $r$. Without loss of
  generality, we can assume that all sphere-formulas $\sph_\tau(\ov{x})$ that appear in $\psi$ 
  have radius exactly $r$ and $\tau\in\Typeslist{r}{d}{k}$.  Let
  $\rho\in\Typeslist{r}{d}{k}$. In $\psi$, replace all occurrences of
  $\sph_\tau(\ov{x})$ with $\tau\neq\rho$ by $\false$ and all
  occurrences of $\sph_\rho(\ov{x})$ by $\true$. The resulting formula
  $\psi_\rho$ is a hnf-sentence of locality radius $\le r$, i.e., a
  Boolean combination of numerical oc-type conditions from
  $\FOCN(\Ps\cup\set{\P_\exists})$. In other words, it is built from counting terms using
  number variables $\kappa$ and basic counting terms
  $\Count{(y)}{\sph_\tau(y)}$ for $\tau\in\Typeslist{r}{d}{1}$ using
  arithmetic operations $+$ and $\cdot$, predicates from
  $\Ps\cup\{\P_{\exists}\}$, Boolean connectives, and number
  quantification~$\exists\kappa$.

  We construct $\Psi_\rho$ from $\psi_\rho$ by replacing every
  occurrence of $\Count{(y)}{\sph_\tau(y)}$ by the variable $v_\tau$, for
  $\tau\in\Typeslist{r}{d}{1}$, and by replacing every subformula of the form $\exists\kappa\,\chi$
  by
  \[
     \exists
     \kappa\,\big(\; 0\le\kappa\le\!\!\!\!\sum_{\tau\in\Typeslist{r}{d}{1}}\!\!\!\!v_\tau
     \ \und\ \chi\;\big)\,.
  \]
  These replacements turn the hnf-sentence $\psi_\rho$ into a bounded
  formula $\Psi_\rho$.

  Now let $\cA$ be some $d$-bounded $\sigma$-structure and let
  $\ov{a}\in A^k$ with $\Neighb{r}{\cA}{\ov{a}}\cong\rho$. For each
  $\tau\in\Typeslist{r}{d}{1}$ let $n_\tau\deff\real_\tau^\cA$ be the
  number of realisations of the type $\tau$ in the
  structure~$\cA$. Then we have
  \[
     (\cA,\ov{a})\models\varphi 
     \quad \xLeftrightarrow{\varphi\equiv_d\psi} \quad
     (\cA,\ov{a})\models\psi 
      \quad \xLeftrightarrow{\Neighb{r}{d}{\ov{a}}\cong\rho} \quad
     \cA\models\psi_\rho 
       \quad \iff \quad
     \cZ_\Ps\models\Psi_\rho[(n_\tau)_{\tau\in\Typeslist{r}{d}{1}}]\,.
  \]\qed
\end{proof}

\bigskip

Note that if the formula $\varphi$ belongs to $\FOC(\Ps)$, i.e., contains no
number variables, then the formula $\Psi_\rho$ is
quantifier-free. With $\Us$ the numerical predicate collection from
Example~\ref{E-list}\eqref{E-ultimately-periodic}, the formula
$\Psi_\rho$ can be rewritten into a formula in the signature of
$(\bZ,+,0,\le)$ (for $\varphi\in\FOC(\Us)$). Furthermore, using
\cite{FagSV95} and in particular \cite{BolK12}, a similar proof for
$\varphi\in\FO[\sigma]$ yields a quantifier-free formula $\Psi_\rho$
in the signature of~$(\bZ,\le)$. Recall that the counting logics
$\FO(\mathrm{Cnt})$ from \cite{Lib04} and $\FO{+}\mathsf{C}$ from
\cite{Gro13} are fragments of $\FOCN(\Ps)$ where $\Ps$ contains only
arithmetical predicates. Consequently, the $d$-bounded models of any
formula from these logics are determined by some set definable in
bounded arithmetic.

\nc{\Yes}{\texttt{yes}}
\nc{\No}{\texttt{no}}

\nc{\Dom}{\ensuremath{\textbf{dom}}}
\nc{\schema}{\ensuremath{\sigma}}
\nc{\DB}{\ensuremath{D}} 
\nc{\DBStrich}{\ensuremath{D'}} 
\nc{\DBstart}{\ensuremath{{\DB_0}}} 
\nc{\DBempty}{\ensuremath{{\DB_{\emptyset}}}} 

\nc{\DBold}{\ensuremath{{\DB_{\textit{old}}}}}
\nc{\DBnew}{\ensuremath{{\DB_{\textit{new}}}}}
\nc{\UpdateSet}{\ensuremath{{U}}}

\nc{\Adom}{\ensuremath{\textrm{\upshape adom}}}
\nc{\adom}[1]{\ensuremath{\Adom(#1)}} 

\nc{\DS}{\ensuremath{\mathtt{D}}} 

\nc{\UpdateFont}[1]{\ensuremath{\textsf{#1}}}
\nc{\Delete}{\UpdateFont{delete}}
\nc{\Insert}{\UpdateFont{insert}}
\nc{\Update}{\UpdateFont{update}}

\nc{\AlgoFont}[1]{\ensuremath{\textbf{#1}}}
\nc{\PREPROCESS}{\AlgoFont{preprocess}}
\nc{\INIT}{\AlgoFont{init}}
\nc{\UPDATE}{\AlgoFont{update}}
\nc{\ENUMERATE}{\AlgoFont{enumerate}}
\nc{\COUNT}{\AlgoFont{count}}
\nc{\ANSWER}{\AlgoFont{answer}}
\nc{\TEST}{\AlgoFont{test}}

\nc{\EOE}{\texttt{EOE}} 
\nc{\Null}{\ensuremath{0}}

\nc{\preprocessingtime}{\ensuremath{t_p}}
\nc{\inittime}{\ensuremath{t_i}}
\nc{\delaytime}{\ensuremath{t_d}}
\nc{\updatetime}{\ensuremath{t_u}}
\nc{\updatetimeStrich}{\ensuremath{t'_u}}

\nc{\answertime}{\ensuremath{t_a}}
\nc{\countingtime}{\ensuremath{t_c}}
\nc{\testingtime}{\ensuremath{t_t}}

\nc{\arrayfont}[1]{\ensuremath{\texttt{#1}}}
\nc{\arrayA}{\arrayfont{A}}

\nc{\query}{\ensuremath{\varphi}}
\nc{\card}[1]{\ensuremath{|#1|}}

\subsection{Query-evaluation on dynamic databases}
\label{subsec:dynamicdbs}

In \cite{BKS_ICDT17}, Berkholz, Keppeler, and Schweikardt used the Hanf normal form result of \cite{HeiKS16} to design efficient algorithms for evaluating queries of first-order logic with modulo-counting quantifiers on dynamic databases.
It turns out that the methods of \cite{BKS_ICDT17} can easily be adapted to generalise to 
$\FOC(\Ps)$-queries, when using the Hanf normal form for $\FOC(\Ps)$ obtained from 
Theorem~\ref{thm:main}\eqref{thm:main:ghnf}$+$\eqref{thm:main:algo}.

To give a precise statement of the results, we need to provide some notation from \cite{BKS_ICDT17}.
We fix a countably infinite set $\Dom$, the \emph{domain} of potential
database entries. 
Consider a relational signature $\schema=\set{R_1,\ldots,R_{|\schema|}}$.
A $\schema$-\emph{database} $\DB$ ($\schema$-db, for short)
is of the form $\DB=(R_1^\DB,\ldots,R_{|\schema|}^\DB)$, where each 
$R_i^\DB$ is a finite subset of $\Dom^{\ar(R_i)}$.
The \emph{active domain} $\adom{\DB}$ of $\DB$ is the smallest subset
$A$ of $\Dom$ such that $R_i^\DB\subseteq A^{ar(R_i)}$ for each $R_i$
in $\schema$.
As usual in database theory, we identify a $\sigma$-db $\DB$ with the
$\schema$-structure $\A_{\DB}$ with universe $\adom{\DB}$ and relations
$R_i^{\DB}$ for each $R_i\in\schema$.
The \emph{degree} of $\DB$ is the degree of $\A_{\DB}$.
The \emph{cardinality} $\card{\DB}$ of $\DB$ is defined
as the number of tuples stored in $\DB$, i.e.,
$\card{\DB}\deff\sum_{R\in\schema} |R^{\DB}|$. 
The \emph{size} $\size{\DB}$ of $\DB$ is defined as
$\size{\schema}+|\Adom(\DB)|+\sum_{R\in\schema} \ar(R){\cdot}
|R^D|$ and corresponds to the size of a reasonable encoding of $\DB$.

For an $\FOCN(\Ps)$-formula $\phi$ with $\free(\phi)\subseteq \VARS$ and
for any tuple $\ov{x}=(x_1,\ldots,x_k)$ of pairwise distinct structure variables such 
that $\free(\phi)\subseteq\set{x_1,\ldots,x_k}$, the query result $\sem{\phi(\ov{x})}^{\DB}$ of $\phi(\ov{x})$ on $\DB$ is defined via
\[
  \sem{\phi(\ov{x})}^{\DB}
  \ \ = \ \ 
  \setc{\ov{a}\in \adom{\DB}^k}{\A_{\DB}\models\phi[\ov{a}]}\,.
\]
If $\phi$ is a sentence, then
the \emph{answer} $\sem{\phi}^{\DB}$ of $\phi$ on $\DB$ is defined as
$\sem{\phi}^{\DB}=\sem{\phi}^{\A_{\DB}}\in\set{0,1}$.

We allow to update a given $\sigma$-database by inserting or deleting
tuples as follows (note that both types of commands may change the
database's active domain and the database's degree).
A \emph{deletion} command is of the form
\Delete\,$R(a_1,\ldots,a_m)$
for $R\in\schema$, $m=\ar(R)$, and $a_1,\ldots,a_m\in \Dom$. When
applied to a $\schema$-db $\DB$, it results in the updated $\schema$-db
$\DB'$ with $R^{\DB'}= R^{\DB}\setminus\set{(a_1,\ldots,a_m)}$ and
$S^{\DB'}= S^{\DB}$ for all $S\in\schema\setminus\set{R}$.
An \emph{insertion} command is of the form
\Insert\,$R(a_1,\ldots,a_m)$
for $R\in\schema$, $m=\ar(R)$, and $a_1,\ldots,a_m\in \Dom$. 
When applied to a $\schema$-db $\DB$ in the unrestricted setting, it
results in the updated $\schema$-db 
$\DB'$ with $R^{\DB'}= R^{\DB}\cup\set{(a_1,\ldots,a_r)}$ and
$S^{\DB'}= S^{\DB}$ for all $S\in\schema\setminus\set{R}$.
Here, we restrict attention to databases of degree at most
$d$. Therefore, when applying an insertion command to a $\schema$-db $\DB$ of degree
$\leq d$, the command is carried out only if the resulting
database $\DB'$ still has degree $\leq d$; otherwise $\DB$ remains
unchanged and instead of carrying out the insertion command, an error
message is returned.

As in \cite{BKS_ICDT17}, we adopt the framework for dynamic algorithms for query evaluation of 
\cite{BKS_enumeration_PODS17}.
These algorithms are based on Random Access Machines (RAMs)
with $\bigoh(\log n)$ word-size and a uniform cost 
measure (cf., e.g., \cite{Cormen.2009}).
We assume that the RAM's memory is initialised to $\Null$.
Our algorithms will take as input a $\FOC(\Ps)$-formula $\phi(\ov{x})$ with
$\ov{x}=(x_1,\ldots,x_k)\in\VARS^k$ and a $\schema$-db $\DBstart$ of degree $\leq
d$.
For all query evaluation problems considered here, we aim at
routines
$\PREPROCESS$ and $\UPDATE$ which achieve the following.
 
Upon input of $\query(\ov{x})$ and $\DBstart$,
   $\PREPROCESS$ \allowbreak builds a data
   structure $\DS$ which represents $\DBstart$ (and which is designed in
   such a way that it supports the evaluation of $\query(\ov{x})$ on $\DBstart$).
Upon input of a command
   $\Update\ R(a_1,\ldots,a_m)$ (with
   $\Update\in\set{\Insert,\Delete}$), 
   calling $\UPDATE$  modifies the data structure $\DS$ such that it
   represents the updated database $\DB$.
The \emph{preprocessing time} $\preprocessingtime$ is the
time used for performing $\PREPROCESS$;
the \emph{update time} $\updatetime$ is the time used for performing
an $\UPDATE$.
By $\INIT$ we denote the particular case of the routine $\PREPROCESS$
upon input of a formula $\query(\ov{x})$ and the \emph{empty} database
$\DBempty$ (where $R^{\DBempty}=\emptyset$ for all $R\in\schema$).
The \emph{initialisation time} $\inittime$
is the time used for performing $\INIT$.
In the dynamic algorithms presented here, the $\PREPROCESS$ routine
for input of $\query(\ov{x})$ and $\DBstart$ 
carries out the $\INIT$ routine for $\query(\ov{x})$ and then
performs a sequence of $\card{\DBstart}$ update operations to
insert all the tuples of $\DBstart$ into the data structure.
Consequently, 
$\preprocessingtime = \inittime +  \card{\DBstart}\cdot\updatetime$.

Whenever speaking of a \emph{dynamic algorithm} we mean an algorithm
that has at least the routines $\PREPROCESS$ and $\UPDATE$.
In the following, $\DB$ will always denote the database that is
currently represented by the data structure $\DS$.
To \emph{answer} a \emph{sentence} $\phi$ under updates,
apart from the routines $\PREPROCESS$ and $\UPDATE$,
we aim at a routine $\ANSWER$ 
that outputs $\sem{\phi}^{\DB}$.
The \emph{answer time} $\answertime$ is the time used for
performing $\ANSWER$.

The following corollary is obtained by a straightforward adaptation of the proof of Theorem~4.1 of \cite{BKS_ICDT17}, where all uses of the Hanf normal form result for 
first-order logic with modulo-counting quantifiers of 
\cite{HeiKS16} are replaced by uses of
Theorem~\ref{thm:main}\eqref{thm:main:ghnf}$+$\eqref{thm:main:algo}. 

\begin{corollary}\label{cor:AnsweringBooleanQueries}
Let $(\Ps,\ar,\sem{.})$ be a numerical predicate collection.
There is a dynamic algorithm with oracle 
$\setc{(\P,\ov{n})}{\P\in\Ps, \ov{n}\in\sem{\P}}$
which receives as input a relational signature $\sigma$, a
degree bound $d\geq 2$, an $\FOC(\Ps)[\schema]$-sentence $\phi$, and a
$\schema$-db $\DBstart$ of degree $\leq d$, and computes within 
$\preprocessingtime= f(\phi,d)\cdot\size{\DBstart}$
preprocessing time a data structure that can be updated in time
$\updatetime= f(\phi,d)$ and allows to
return the query result $\sem{\phi}^{\DB}$ with answer time
$\answertime= \bigOh(1)$.
\\
The function $f(\phi,d)$ is of the form 
$\exp_5(\poly(\size{\phi})+\log\log d)$.
\end{corollary}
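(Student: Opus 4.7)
The plan is to adapt the proof of Theorem~4.1 in \cite{BKS_ICDT17}, replacing every appeal to the Hanf normal form result of \cite{HeiKS16} by an appeal to Theorem~\ref{thm:main}\eqref{thm:main:ghnf}$+$\eqref{thm:main:algo}. In the preprocessing phase I would first compute, in time $f(\phi,d)=\exp_5(\poly(\size{\phi})+\log\log d)$, a $d$-equivalent hnf-sentence $\psi$ for $\FOC(\Ps\cup\set{\P_{\exists}})$ whose locality radius $r$ satisfies $r<(2\bw(\phi)+1)^{\br(\phi)}\le 2^{\poly(\size{\phi})}$. Since $\phi$ is a sentence in $\FOC(\Ps)$ and hence has neither free structure variables nor number variables, $\psi$ is a pure Boolean combination of atomic numerical oc-type conditions $\P(t_1,\dots,t_m)$ whose simple counting terms $t_j$ are polynomials with integer coefficients over basic counting terms $\Count{(y)}{\sph_\tau(y)}$, with $\tau$ ranging over the finite list $\Typeslist{r}{d}{1}$ produced by Lemma~\ref{lemma:isotypes}.

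The data structure $\DS$ maintained by the algorithm stores a representation of the current database $\DB$ together with an integer counter $c_\tau$ for every $\tau\in\Typeslist{r}{d}{1}$, where $c_\tau$ equals the number of elements $a\in\adom{\DB}$ whose $r$-neighbourhood in $\DB$ is isomorphic to~$\tau$. $\INIT$ on the empty database sets every $c_\tau$ to $0$ (no actual work is needed, since the RAM memory is initialised to $\Null$), and $\PREPROCESS$ then inserts the tuples of $\DBstart$ one by one via the $\UPDATE$ routine. For $\UPDATE$ on a command modifying a tuple $R(a_1,\dots,a_m)$, the algorithm first checks, in the $\Insert$ case, whether the insertion would violate the degree bound (and if so, aborts with an error message); otherwise only those elements lying within Gaifman distance~$r$ of $\set{a_1,\dots,a_m}$ can change their $r$-type. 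By Lemma~\ref{lem:basic_facts} there are at most $m\cdot\nu_d(r)$ such elements, and the old and new type of each can be identified with its representative in $\Typeslist{r}{d}{1}$ in time bounded by $f(\phi,d)$; the counters $c_\tau$ are then adjusted accordingly.

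The $\ANSWER$ routine evaluates $\psi$ symbolically: it substitutes the current value of $c_\tau$ for every basic counting term $\Count{(y)}{\sph_\tau(y)}$, evaluates each polynomial $t_j$ by integer arithmetic on the counters, decides each atomic condition $\P(t_1,\dots,t_m)$ by a single oracle call, and then evaluates the outer Boolean combination. Since the size and shape of $\psi$, and in particular the number of its atomic conditions and the sizes and degrees of the $t_j$, depend only on $\phi$ and $d$ (and not on $\card{\DB}$), the routine $\ANSWER$ runs in $\bigOh(1)$ RAM operations on $\bigOh(\log\card{\DB})$-bit integers, i.e., in $\bigOh(1)$ time in the RAM model used. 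Summing the contributions yields $\inittime\le f(\phi,d)$, $\updatetime=f(\phi,d)$, $\preprocessingtime=\inittime+\card{\DBstart}\cdot\updatetime\le f(\phi,d)\cdot\size{\DBstart}$, and $\answertime=\bigOh(1)$, which matches the bounds claimed in the corollary.

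The main obstacle is not a new technical difficulty but the verification that the richer polynomial structure of the simple counting terms of $\FOC(\Ps)$ does not affect the $\bigOh(1)$ answer time of the dynamic algorithm of \cite{BKS_ICDT17}, which was originally built for atomic conditions of the simpler single-counter form $\P(\Count{(y)}{\sph_\tau(y)}-k)$. This is essentially immediate: the polynomial $t_j$ has constant size (depending only on $\phi$ and $d$), each counter $c_\tau$ fits into a single RAM word, and therefore the polynomial evaluation requires only a constant number of RAM operations. Consequently, the proof of \cite[Theorem~4.1]{BKS_ICDT17} carries over \emph{mutatis mutandis} once the Hanf normal form is replaced by that provided by Theorem~\ref{thm:main}.
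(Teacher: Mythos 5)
Your approach mirrors the paper's: translate $\phi$ into a $d$-equivalent hnf-sentence $\psi$ via Theorem~\ref{thm:main}\eqref{thm:main:ghnf}$+$\eqref{thm:main:algo}, maintain one integer counter per basic counting term $\Count{(y)}{\sph_\tau(y)}$, and update these counters locally as in \cite{BKS_ICDT17}. There is, however, one genuine gap in the time accounting. Your $\ANSWER$ routine evaluates $\psi$ from scratch: it substitutes the counters, evaluates each polynomial $t_j$, makes an oracle call for every atomic condition, and then evaluates the outer Boolean combination. The number of atomic numerical oc-type conditions in $\psi$ may be as large as $\exp_4(\poly(\size{\phi}+\size{\sigma})+\log\log d)$, so this evaluation takes time depending on $\phi$ and $d$ — not $\bigOh(1)$. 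In the corollary, $\answertime=\bigOh(1)$ means an absolute constant, independent of both the database \emph{and} the query (this is why $\updatetime$ is stated as $f(\phi,d)$ rather than $\bigOh(1)$). The paper achieves the $\bigOh(1)$ answer time by additionally storing a Boolean value $\texttt{Ans}=\sem{\phi}^{\DB}$ in the data structure and recomputing it at the end of every $\UPDATE$ call (which is affordable within the budget $\updatetime=f(\phi,d)$); the $\ANSWER$ routine then simply outputs the cached bit. To repair your proof, move the symbolic evaluation of $\psi$ from $\ANSWER$ into $\UPDATE$ (and into $\INIT$) and cache the resulting Boolean value; with that change your argument coincides with the paper's.
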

\begin{proof}
The proof is a straightforward adaptation of the proof of Theorem~4.1 of \cite{BKS_ICDT17}.

In addition to the statement made in
Corollary~\ref{cor:AnsweringBooleanQueries},
we also show the following:
If $\phi$ is a $d$-bounded hnf-sentence of locality radius $r$ (i.e.,
each sphere-formula that occurs in $\phi$ or in a numerical oc-type
condition of $\phi$, regards a sphere of degree $\leq d$ and radius
$\leq r$),
then
$f(\phi,d)= \poly(\size{\phi})+2^{\bigOh(\size{\schema}
  d^{2r+2})}$,
and the initialisation
time is $\inittime=\bigOh(\size{\phi})$.

W.l.o.g.\ we assume that all the symbols of $\schema$ occur in $\phi$
(otherwise, we remove from $\schema$ all symbols that do not occur in
$\phi$).
In the preprocessing routine, we first use 
Theorem~\ref{thm:main}\eqref{thm:main:ghnf}$+$\eqref{thm:main:algo} to
transform $\phi$ into a $d$-equivalent $\FOCN(\Ps)[\sigma]$-sentence $\psi$ in Hanf normal
form; this takes time 
$f(\phi,d)=\exp_5(\poly(\size{\phi})+\log\log d)$.

The sentence $\psi$ is a Boolean combination of 
atomic numerical oc-type conditions for \allowbreak $\FOCN(\Ps)[\sigma]$, each of which is of the form
$\P(t_1,\ldots,t_m)$ with $\P\in\Ps\cup\set{\P_\exists}$, $m=\ar(\P)$, and each 
$t_i$ is a simple counting term without number variables, i.e., it is a polynomial 
over basic counting terms and with integer coefficients. Recall that each basic counting term that occurs in $\psi$ is of the form 
$\Count{(y)}{\sph_\rho(y)}$ where $\rho$ is an $r$-type with 1 centre (over $\sigma$);
and from 
Theorem~\ref{thm:main}\eqref{thm:main:ghnf}
we know that 
$r< (2\bw(\phi){+}1)^{\br(\phi)}\leq 2^{\poly(\size{\phi})}$.

Let $\rho_1,\ldots,\rho_s$ be the list of all types $\rho$ that occur in
$\psi$. Thus, every basic counting term that occurs in $\psi$ is of the form
$\Count{(y)}{\sph_{\rho_j}(y)}$ for some $j\in[s]$.
For each $j\in[s]$ let $r_j$ be the radius of $\rho_j$. Thus, 
$\rho_j$ is an $r_j$-type with 1 centre (over $\schema$).

For each $j\in[s]$ our data structure will store the number
$\arrayA[j]$ of all elements $a\in\adom{\DB}$ whose $r_j$-type is
isomorphic to $\rho_j$, i.e., 
$\arrayA[j]=\sem{\Count{(y)}{\sph_{\rho_j}(y)}}^{\DB}$.
The initialisation for the empty database $\DBempty$ lets
$\arrayA[j]= 0$ for all $j\in[s]$.

In addition to the array $\arrayA$, our data structure
stores a Boolean value $\texttt{Ans}$ where $\texttt{Ans}=\sem{\phi}^{\DB}$
is the answer of the Boolean query $\phi$ on the current database
$\DB$. This way, the query can be answered in time
$\bigOh(1)$ by simply outputting $\texttt{Ans}$.

The initialisation for the empty database $\DBempty$ computes
$\texttt{Ans}$ as follows. 
Consider each atomic numerical oc-type condition $\P(t_1,\ldots,t_m)$ in $\psi$, 
evaluate each $t_i$ after replacing each basic counting term in $t_i$ by the number 0, and
let $n_i$ be the resulting integer. Use the oracle to determine if 
$(n_1,\ldots,n_m)\in\sem{\P}$. If the oracle answers ``yes'', replace each occurrence of $\P(t_1,\ldots,t_m)$ in $\psi$ by $\True$, and otherwise replace it by $\False$.
The resulting formula, a
Boolean combination of the Boolean constants $\True$ and $\False$,
then is
evaluated, and we let $\texttt{Ans}$ be the obtained result.
The entire initialisation takes time at most 
$\inittime= f(\phi,d)$. If $\phi$ is
a hnf-sentence, we even have 
 $\inittime=\poly(\size{\phi})$.

To update our data structure upon a command
$\Update\,R(a_1,\ldots,a_k)$, for $k=\ar(R)$ and 
$\Update\in\set{\Insert,\Delete}$, we proceed as follows.
The idea is to remove from the data structure the
information on all the database elements  
whose $r_j$-neighbourhood (for some $j\in[s]$) is affected by
the update, and then to recompute
the information concerning all these elements
 on the updated database.

Let $\DBold$ be the database before the update is received and
let $\DBnew$ be the database after the update has been performed. 
We consider each $j\in[s]$.
All elements whose $r_j$-neighbourhood might have changed, belong to
the set 
\;$\UpdateSet_j\deff\neighb{r_j}{\DBStrich}{\ov a}$, \ where
$\DBStrich\deff\DBnew$ if the update command is $\Insert\; R(\ov{a})$,
and $\DBStrich\deff\DBold$ if the update command is $\Delete\; R(\ov{a})$.

 To remove the old information from $\arrayA[j]$, we compute for each
 $a \in \UpdateSet_j$ the neighbourhood
 $\B_a\deff \Neighb{r_j}{\DBold}{a}$, check whether
 $\B_a\isom\rho_j$, and if so, 
 decrement the value $\arrayA[j]$. 
\\
 To recompute the new information for $\arrayA[j]$, we compute for all 
 $a \in \UpdateSet_j$ the neighbourhood
 $\B'_a\deff \Neighb{r_j}{\DBnew}{a}$, check whether
 $\B'_a\isom\rho_j$, and if so, 
 increment the value $\arrayA[j]$. 

 Using Lemma~\ref{lem:basic_facts} we obtain for each $j\in[s]$ that
 $|\UpdateSet_j|\leq kd^{r_j+1}$.
 For each $a\in \UpdateSet_j$, the
 neighbourhoods $\B_a$ and $\B'_a$ can be computed in time
 $\big(d^{r_j+1}\big)^{\bigOh(\size{\schema})}$, and testing for
 isomorphism with $\rho_j$ can be done in time 
 $\big(d^{r_j+1}\big)^{\bigOh(\size{\schema}+d^{r_j+1})}$. 
 Thus, 
 the update of $\arrayA[j]$ is done in time
 $k{\cdot}
 \big(d^{r_j+1}\big)^{\bigOh(\size{\schema}+d^{r_j+1})} = 
 2^{d^{2^{\poly(\size{\phi})}}}= \exp_3(\poly(\size{\phi})+\log\log d)$.

After having updated $\arrayA[j]$ for each $j\in[s]$, we recompute the 
query answer $\texttt{Ans}$ in a similar way as in the initialisation for the empty database: 
Consider each atomic numerical oc-type condition $\P(t_1,\ldots,t_m)$ in $\psi$, 
evaluate each $t_i$ after replacing each basic counting term of the form
$\Count{(y)}{\sph_{\rho_j}(y)}$ in $t_i$ by the number $\arrayA[j]$, and
let $n_i$ be the resulting integer. Use the oracle to determine if 
$(n_1,\ldots,n_m)\in\sem{\P}$. If the oracle answers ``yes'', replace each occurrence of $\P(t_1,\ldots,t_m)$ in $\psi$ by $\True$, and otherwise replace it by $\False$.
The resulting formula, a
Boolean combination of the Boolean constants $\True$ and $\False$,
then is
evaluated, and we let $\texttt{Ans}$ be the obtained result.
Thus, recomputing $\texttt{Ans}$ takes time $\poly(\size{\psi})$.

In summary, the entire update time is 
$\updatetime=f(\phi,d)=\exp_5(\poly(\size{\phi})+\log\log d)$.

In case that $\phi$ is a $d$-bounded hnf-sentence of locality radius $r$, we even have 
$\updatetime=\poly(\size{\phi})+k{\cdot}\big(d^{r+1}\big)^{\bigOh(\size{\schema}+d^{r+1})}
\leq \poly(\size{\phi})+ 2^{\bigOh(\size{\schema}d^{2r+2})}$.
Note that for a $d$-bounded $r$-type $\rho$ with 1 centre (over $\schema$),
the formula $\sphere{\rho}{y}$ has size
$(d^{r+1})^{\Omega(\size{\schema})}$.
Hence, if $\phi$ is a
$d$-bounded hnf-sentence of locality radius $r$, then the update time $\updatetime$
also is in $2^{\poly(\size{\phi})}$.
This completes the proof of
Corollary~\ref{cor:AnsweringBooleanQueries}.
\qed
\end{proof}

\bigskip

Regarding the evaluation of queries $\phi(\ov{x})$ where $\ov{x}=(x_1,\ldots,x_k)$ is 
a tuple of arity $k>0$, the framework of \cite{BKS_ICDT17} considers the following 
problems.
To \emph{test} if a given tuple belongs to the query result,
we aim at a routine $\TEST$ which upon
input of a tuple $\ov{a}\in\Dom^k$ checks whether $\ov{a}\in
\sem{\query(\ov{x})}^{\DB}$.
The \emph{testing time} $\testingtime$ is the time used for
performing a $\TEST$.
To solve the \emph{counting problem under updates}, 
we aim at a routine $\COUNT$ which outputs the cardinality
$|\sem{\query(\ov{x})}^{\DB}|$ of the query result.
The \emph{counting time} $\countingtime$ is the time used for
performing a $\COUNT$.
To solve the \emph{enumeration problem under updates}, 
we aim at a routine $\ENUMERATE$ such that
calling $\ENUMERATE$ invokes an enumeration of all tuples
(without repetition) that belong to
the query result $\sem{\query(\ov{x})}^{\DB}$.
The \emph{delay} $\delaytime$ is the maximum time
used during a call of $\ENUMERATE$
\begin{itemize}
\item until the output of the first tuple (or the end-of-enumeration
  message $\EOE$, 
  if $\sem{\phi(\ov{x})}^{\DB}=\emptyset$),
\item between the output of two consecutive tuples, and 
\item between the output of the last tuple and the end-of-enumeration
  message $\EOE$.
\end{itemize}

\medskip

The proof of the following corollary is obtained from the proofs of Theorem~6.1, Theorem~8.1, and Theorem~9.4  of \cite{BKS_ICDT17} by replacing
all uses of Theorem~4.1 of \cite{BKS_ICDT17} by 
Corollary~\ref{cor:AnsweringBooleanQueries}.

\begin{corollary}\label{cor:TestingCountingEnumerating}
Let $(\Ps,\ar,\sem{.})$ be a numerical predicate collection.
There is a dynamic algorithm with oracle 
$\setc{(\P,\ov{n})}{\P\in\Ps, \ov{n}\in\sem{\P}}$
which receives as input 
a relational signature $\schema$, a
degree bound $d\geq 2$, an $\FOC(\Ps)[\schema]$-formula $\phi(\ov{x})$ with
free variables $\ov{x}=(x_1,\ldots,x_k)\in\VARS^k$ (for
some $k\in\NN$), and a
$\schema$-db $\DBstart$ of degree $\leq d$, and computes within 
$\preprocessingtime= f(\phi,d)\cdot\size{\DBstart}$
preprocessing time a data structure that can be updated in time
$\updatetime= f(\phi,d)$ and allows to
\begin{itemize}
\item
test for any input tuple 
$\ov{a}\in\Dom^k$ whether $\ov{a}\in\sem{\phi(\ov{x})}^{\DB}$ within testing time
$\testingtime= \bigOh(k^2)$
\item
return the cardinality 
$|\sem{\phi(\ov{x})}^{\DB}|$ 
of the query result within time $\bigOh(1)$
\item
enumerate $\sem{\phi(\ov{x})}^{\DB}$ with $\bigOh(k^3)$ delay.
\end{itemize}
The function $f(\phi,d)$ is of the form 
$\exp_5(\poly(\size{\phi})+\log\log d)$.
\end{corollary}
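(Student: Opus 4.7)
The plan is to follow the structure of the proofs of Theorems~6.1, 8.1, and 9.4 from \cite{BKS_ICDT17} and, at every point where those proofs appeal to the Hanf normal form construction for first-order logic with modulo counting, to invoke instead Theorem~\ref{thm:main}\eqref{thm:main:ghnf}+\eqref{thm:main:algo} together with the bookkeeping already described in the proof of Corollary~\ref{cor:AnsweringBooleanQueries}. Concretely, I would first compute a $d$-equivalent hnf-formula $\psi(\ov{x})$ for $\phi(\ov{x})$ of locality radius $r < (2\bw(\phi){+}1)^{\br(\phi)} \le 2^{\poly(\size{\phi})}$. By construction $\psi$ is a Boolean combination of $k$-centre sphere-formulas $\sph_\tau(\ov{x})$ and of at most $s\le f(\phi,d)$ number sentences $\chi_1,\ldots,\chi_s$ (the numerical oc-type conditions), whose truth values depend only on the current database. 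Applying Lemma~\ref{lemma:dynamicHanf} to $\psi$ once and for all, I obtain, for every $J\subseteq[s]$, an index set $I_J\subseteq\Typeslist{r}{d}{k}$ such that, whenever the oc-type conditions satisfied in $\DB$ are exactly those indexed by $J$,
\[
  \sem{\phi(\ov{x})}^{\DB} \ \ = \ \ \bigsetc{\ov{a}\in\adom{\DB}^k}{\Neighb{r}{\DB}{\ov{a}}\isom\tau_i \text{ for some } i\in I_J}.
\]
The evaluation problem thus reduces to maintaining, under updates, the current set $J$ together with, for each $k$-centre $r$-type $\tau$, a dynamic bucket $B_\tau$ of all $k$-tuples from $\adom{\DB}^k$ that realise~$\tau$, and its cardinality.

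The data structure therefore consists of: (a) the machinery from the proof of Corollary~\ref{cor:AnsweringBooleanQueries} for maintaining the bit-vector $(\sem{\chi_j}^{\DB})_{j\in[s]}$; (b) for each $\tau \in \Typeslist{r}{d}{k}$, a doubly-linked list $B_\tau$ storing the tuples of type $\tau$ and a counter $c_\tau=|B_\tau|$; (c) a hash map that sends each $\ov{a}\in\adom{\DB}^k$ currently in some bucket to its position in that bucket, to support $\bigOh(k)$ membership tests and re-linking. On an update $\Update\,R(\ov{a})$ I update the bit-vector via the routine of Corollary~\ref{cor:AnsweringBooleanQueries}, and separately enumerate all $k$-tuples whose $r$-sphere can possibly change, namely those whose components intersect $\neighb{r}{\DBStrich}{\ov{a}}$ for $\DBStrich\in\{\DBold,\DBnew\}$. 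By Lemma~\ref{lem:basic_facts} this set has size at most $k\cdot\nu_d(r)^k$, and each affected tuple can be reclassified (via Lemmas~\ref{lem:basic_facts} and \ref{lemma:isotypes}) in time absorbed by the formula-dependent constant $f(\phi,d)$, so that the overall update time stays within $\exp_5(\poly(\size{\phi})+\log\log d)$.

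With these structures in place, the three query routines are immediate. $\TEST(\ov{a})$ computes $\Neighb{r}{\DB}{\ov{a}}$ and checks whether the resulting type lies in $I_J$; the cost is $\bigOh(k^2)$ for assembling the $k$-tuple and walking out along its neighbourhood, with the isomorphism test absorbed into the formula-dependent constant. $\COUNT$ returns $\sum_{i\in I_J} c_{\tau_i}$, a sum of formula-dependent constant length and hence $\bigOh(1)$. $\ENUMERATE$ recomputes $I_J$ from the stored bit-vector and then walks through each linked list $B_{\tau_i}$ for $i\in I_J$ in turn, emitting every tuple with $\bigOh(k^3)$ delay to cover the time to move between lists and to output a $k$-tuple; no repetitions arise because the buckets for distinct types are disjoint. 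The main obstacle, as in \cite{BKS_ICDT17}, is achieving this delay in the presence of updates that may abruptly flip many oc-type conditions and thereby change $I_J$ completely between two successive $\ENUMERATE$ calls: since the enumeration order cannot be precomputed and stored, it must be reconstructed on the fly from the bit-vector at the start of each call, and it is precisely the doubly-linked bucket representation plus the constant size of $I_J$ (as a function of $\DB$) that make this reconstruction fit into the claimed $\bigOh(k^3)$ first-delay budget.
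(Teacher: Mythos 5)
The high-level outline is right (reduce to Corollary~\ref{cor:AnsweringBooleanQueries} plus the BKS framework), but the concrete data structure you propose is not viable and would break both the preprocessing and the update bounds. Your plan is to keep, for every $k$-centre $r$-type $\tau\in\Typeslist{r}{d}{k}$, an explicit doubly-linked bucket $B_\tau$ of all $k$-tuples in $\adom{\DB}^k$ that realise $\tau$. But a disconnected type $\tau$ can have $\Theta(N^k)$ realisations where $N=|\adom{\DB}|$ (e.g.\ $k$ pairwise distinct far-apart vertices with any fixed collection of 1-centred types); materialising the buckets already costs $\Omega(N^k)$, which is far beyond the claimed preprocessing time $f(\phi,d)\cdot\size{\DBstart}$. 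Worse, your update argument is incorrect: an insertion $\Insert\,R(\ov{a})$ changes the $r$-sphere of \emph{every} $k$-tuple $\ov{b}$ that has at least one component near $\ov{a}$ and the remaining $k-1$ components anywhere in the database, so the number of affected $k$-tuples is $\Omega(N^{k-1})$, not $k\cdot\nu_d(r)^k$ as you claim. Re-bucketing these tuples cannot be done in $f(\phi,d)$ time. Your $\TEST$ routine has the same flavour of problem: recomputing $\Neighb{r}{\DB}{\ov{a}}$ from scratch and then testing isomorphism against a type costs, by Lemma~\ref{lem:basic_facts}(e), $2^{\bigOh(\size{\sigma}k^2\nu_d(r)^2)}$, which is not $\bigOh(k^2)$ with an absolute constant hidden in the $\bigOh$.

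The actual approach from \cite{BKS_ICDT17} (which the paper invokes without spelling out) avoids touching $k$-tuples altogether. It works exclusively with \emph{1-centred} types: the data structure stores, per element $a\in\adom{\DB}$, the isomorphism class of its 1-centred $\hat r$-sphere (for the appropriate $\hat r$), together with per-type lists of elements and their sizes. A single update affects only $f(\phi,d)$ many elements' 1-centred types, so updates stay within the budget. Counting is done algebraically using precisely the decomposition of Lemma~\ref{lem:upperbound:terms}, which expresses $|\setc{\ov{b}}{\Neighb{r}{\DB}{\ov{b}}\isom\tau}|$ as a polynomial in 1-centred realisation counts. Enumeration is done by decomposing each $k$-centred type into its connected pieces and iterating over products of the corresponding 1-centred lists, with inclusion--exclusion bookkeeping to prevent collisions (the same element or nearby elements occurring in more than one piece); this is what gives $\bigOh(k^3)$ delay with no dependence on $N$. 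Testing reduces to $\bigOh(k^2)$ pairwise distance checks and $\bigOh(k)$ lookups of precomputed per-element types; nothing is recomputed from scratch. You do correctly identify the role of Lemma~\ref{lemma:dynamicHanf} and of the maintained bit-vector $(\sem{\chi_j}^{\DB})_{j}$ in switching the relevant index set $I_J$, but the per-$k$-tuple bucket layer on top of it is the step that fails.
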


\section{The lower bound}\label{sec:lowerbound}
\newcommand{\tree}{\mathrm{tree}}

This section is devoted to the proof of Theorem~\ref{thm:main}\eqref{thm:main:lowerbound}.
In fact, we will prove a slightly stronger lower bound, for which the following notation is needed.

A counting term $t(\ov{x})$ is \emph{linear} if it is of the form
$i$ with $i\in\NN$ or of the form
\[
    \sum_{j=1}^n b_j\cdot\Count{(y)}{\varphi_j(\ov{x},y)} \ \ - \ \ b_0
\]
with $n\ge0$ and $b_0,b_1,\dots,b_n\in\bN$. In particular, linear
counting terms do not count \emph{tuples} of variables, but only
\emph{single} variables. In addition, they are not arbitrary
polynomials, but linear polynomials with non-negative coefficients and
a negative constant term. An $\FOCN(\Ps)$-formula is \emph{linear}
if it only uses linear counting terms. Note that all formulas from
$\FO(\Ps)$ are linear since the only counting terms allowed there are
of the form $\Count{(y)}{\varphi(\ov{x},y)}-k$, for $k\in \NN$. 

By Theorem~\ref{thm:main}\eqref{thm:main:ghnf}, for any formula $\varphi\in\FOCN(\Ps)$
and any $d\ge2$, there exists a $d$-equivalent hnf-formula $\psi$. 
A close inspection of the proof of Theorem~\ref{thm:main}\eqref{thm:main:ghnf} shows that
if $\varphi$ is linear, then $\psi$ is linear as well; moreover, if $\psi$ belongs to
$\FO(\Ps)$, then all counting terms that appear in $\psi$ even are of the form
$i$ with $i\in\bN$ or of the form
\[
  \sum_{\tau\in T}\Count{(y)}{\sph_\tau(y)} \ \ - \ \ k
\]
for some $k\in\bN$ and some set $T$ of types of radius~$r$.
As the bounds from Theorem~\ref{thm:main}\eqref{thm:main:ghnf}
apply here as well, it follows that the
number of distinct numerical oc-type conditions in $\psi$ is at most 
$\exp_4\bigl(\poly(\size{\phi}+\size{\sigma})+\log\log(d)\bigr)$.
In this section, we show a matching lower bound for \emph{linear} hnf-formulas for $\FOC(\Ps)$,
provided that $\Ps$ contains a predicate that is \emph{rich} in the following sense.

\begin{definition}[Rich numerical predicate] \ \\
  A set $R\subseteq\bN$ of natural numbers is \emph{rich} if for all
  $s,u,v\in\bN$, all $\ov{a_0}\in\set{0,1}^s\setminus\set{\ov{0}}$,
  $\ov{a_1},\dots,\ov{a_u}\in\bN^s$, and all $c_1,\dots,c_u\in\bN$
  with $(\ov{a_0},0)\neq(\ov{a_i},c_i)$ for all $i\in\set{1,\ldots,u}$, there
  exist $\ov{x},\ov{y}\in(v+\bN)^s$ such that\footnote{For $s$-dimensional vectors $\ov{a}$ and $\ov{x}$ 
   we write $\ov{a}\transpose\,\ov{x}$ for the usual inner product, i.e., the number
  $\sum_{j=1}^s a_jx_j$, where $a_j$ (and $x_j$) denotes the $j$-th component of $\ov{a}$ (and $\ov{x}$).}
  \begin{itemize}
  \item
    $\ov{a_0}\transpose\,\ov{x}\in R\iff
    \ov{a_0}\transpose\,\ov{y}\notin R$ \ and
  \item
    $\ov{a_i}\transpose\,\ov{x}-c_i\in R\iff
    \ov{a_i}\transpose\,\ov{y}-c_i\in R$ for all $i\in\set{1,\ldots,u}$.
  \end{itemize}
\end{definition}

\begin{example}\label{example:rich}
  We can produce probabilistically a set $R$ of natural numbers as
  follows: for every $n\in\bN$, toss a fair coin and place $n$ into
  $R$ iff the outcome is tail. Then, with probability 1, we get a rich
  set (cf.~Section~\ref{SS-random}). Furthermore,
  Section~\ref{SS-large-gaps} shows that a set $R\subseteq\bN$ is
  rich whenever it has ``large gaps'', i.e., $R$ is infinite, $0\notin R$, and for all
  $d\in\bN_{\ge1}$, there 
  exists $q\in R$ such that
  $\left[\left\lfloor\nicefrac{q}{d}\right\rfloor,d\cdot q\right]\cap
  R=\set{q}$. Examples of such sets are $\setc{n^n}{n\in\bN}$,
  $\setc{\lfloor 2^{n^c}\rfloor}{n\in\bN}$ for all reals $c>1$,
  $\setc{n!}{n\in\bN}$, as well as all infinite subsets of these sets.
  But note that neither the set $\setc{2^n}{n\in\bN}$ nor (by
  Bertrand's postulate) the set of primes has large gaps.
\end{example}
 
Our main lower bound result reads as follows.

\begin{theorem}\label{thm:mainlowerbound}
Let $\Ps=\set{\quant{R}}$ with $\ar(\quant{R})=1$ and $\sem{\quant{R}}\subseteq\bN$ rich.
Let $\sigma_\tree=\set{E_0,E_1,X}$ be the signature consisting of two binary relation symbols $E_0$, $E_1$ and a
unary relation symbol $X$.
There is a sequence $(\varphi_n)_{n\geq 1}$ of $\FO(\Ps)[\sigma_\tree]$-sentences
of size $\bigO(n)$ such that for all $n\geq 1$ and for every
linear hnf-sentence $\psi_n\in \FOC(\Ps\cup\set{\P_{\exists}})[\sigma_\tree]$ that is 
3-equivalent to $\varphi_n$, the number of
distinct numerical oc-type conditions in $\psi_n$ 
is at least $\exp_4(n)$.
\end{theorem}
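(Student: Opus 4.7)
The proof plan follows the strategy behind Theorem~\ref{thm:main-HeiKS16}\eqref{item3:thm:main-HeiKS16} from \cite{BolK12}, which provides the $\exp_3$-lower bound for plain first-order logic, and extends it by one exponential by exploiting the rich predicate $\quant{R}$. First I would take (a slight variant of) the $\FO[\sigma_\tree]$-sentences $\varphi_n$ from \cite{BolK12}: they are of size $\bigO(n)$, use the two binary relations $E_0,E_1$ to access neighbourhoods of radius $r_n\in\exp_2(n)$ with only $\bigO(n)$ nesting of quantifiers, and use the unary $X$ to mark vertices. On 3-bounded structures, their truth value is controlled by the tuple of counts of 1-types of radius $r_n$; the number $s$ of such types satisfies $s\in\exp_3(n)$.

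Next I would establish a realisation lemma: there is a threshold $v=v(n)\in\bN$ such that every vector $\ov{z}\in(v+\bN)^s$ is realised as the Hanf-tuple $\HT_{r_n}^3(\cA)$ of some 3-bounded $\sigma_\tree$-structure~$\cA$. This follows by forming disjoint unions of a fixed family of ``gadget'' structures, one per 1-type, topped up with a uniform background to exceed $v$ in each coordinate; the binary relations $E_0,E_1$ together with the freedom to mark with $X$ leave enough room for the gadgets to realise every $\tau\in\Typeslist{r_n}{3}{1}$. The construction of $\varphi_n$ is then arranged so that, on such 3-bounded structures, its truth value equals the statement $\ov{a}_0\transpose\,\HT_{r_n}^3(\cA)\in\sem{\quant{R}}$ for a specific non-zero $\ov{a}_0\in\set{0,1}^s$ designating a subset of 1-types.

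Now suppose $\psi_n$ is a linear hnf-sentence in $\FOC(\Ps\cup\set{\P_\exists})[\sigma_\tree]$ with $\psi_n\equiv_3\varphi_n$, and let $N$ be the number of its distinct numerical oc-type conditions. Since $\psi_n$ is linear and, as remarked in the excerpt immediately before Corollary~\ref{cor:whnf}, the counting terms appearing in it can be taken (after collecting common types) of the form $\sum_{\tau\in T}\Count{(y)}{\sph_\tau(y)}-k$, each condition is specified by a pair $(T_i,c_i)$ with $T_i\subseteq\Typeslist{r_n}{3}{1}$ and $c_i\in\bN$, together with a choice of predicate $\quant{R}$ or $\P_\exists$. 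Writing $\ov{a}_i\in\set{0,1}^s$ for the indicator vector of $T_i$, a 3-bounded structure $\cA$ satisfies the $i$-th condition iff $\ov{a}_i\transpose\,\HT_{r_n}^3(\cA)-c_i\in\sem{\quant{R}}$ (respectively $\sem{\P_\exists}$). Assuming $N<\exp_4(n)=2^s$, a pigeonhole argument over the $2^s$ admissible target vectors produces an $\ov{a}_0\in\set{0,1}^s\setminus\set{\ov{0}}$ such that $(\ov{a}_0,0)\neq(\ov{a}_i,c_i)$ for every $i$ in the list of $\quant{R}$-conditions.

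The main obstacle, and the heart of the argument, is the final clash between $\varphi_n$ and $\psi_n$. Apply richness of $\sem{\quant{R}}$ to $\ov{a}_0$, to the vectors $\ov{a}_1,\ldots,\ov{a}_u$ and offsets $c_1,\ldots,c_u$ coming from the $\quant{R}$-conditions of $\psi_n$, and with the threshold $v$ supplied by the realisation lemma: this yields vectors $\ov{x},\ov{y}\in(v+\bN)^s$ with $\ov{a}_0\transpose\,\ov{x}\in\sem{\quant{R}}\Leftrightarrow \ov{a}_0\transpose\,\ov{y}\notin\sem{\quant{R}}$, while $\ov{a}_i\transpose\,\ov{x}-c_i\in\sem{\quant{R}}\Leftrightarrow \ov{a}_i\transpose\,\ov{y}-c_i\in\sem{\quant{R}}$ for all $i\in\set{1,\ldots,u}$. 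Realising $\ov{x},\ov{y}$ as Hanf-tuples of 3-bounded structures $\cA,\cB$ forces $\cA\models\varphi_n$ iff $\cB\not\models\varphi_n$, while $\cA$ and $\cB$ are indistinguishable by every numerical oc-type condition of $\psi_n$, contradicting $\psi_n\equiv_3\varphi_n$. The two delicate points are the simultaneous handling of the $\P_\exists$-conditions (they hold automatically once counts exceed $v$, which is why richness uses the $v+\bN$ shift) and the careful bookkeeping of the constants $c_i$; once these are in place, the exponent $\exp_4(n)=2^{\exp_3(n)}$ emerges as the number of $0/1$-profiles over $\Typeslist{r_n}{3}{1}$ that $\varphi_n$ forces $\psi_n$ to distinguish.
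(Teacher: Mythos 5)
Your plan is sound in its broad scaffolding (Frick--Grohe encoding, richness, indistinguishable structure pair), but it contains a gap that breaks the argument at its core. You describe $\varphi_n$ as being ``arranged so that \ldots its truth value equals the statement $\ov{a_0}\transpose\,\HT_{r_n}^3(\cA)\in\sem{\quant{R}}$ for a specific non-zero $\ov{a_0}\in\set{0,1}^s$.'' If $\varphi_n$ were really $3$-equivalent to a single fixed linear combination of $1$-type counts, then the single hnf-sentence $\quant{R}\bigl(\sum_{\tau\in T_0}\Count{(y)}{\sph_\tau(y)}\bigr)$ would be a $3$-equivalent linear hnf-sentence with exactly one numerical oc-type condition, and no lower bound would be possible. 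Moreover, the $\ov{a_0}$ that your pigeonhole argument later extracts from $\psi_n$ is \emph{some} coordinate vector not hit by any condition; there is no reason it should coincide with the $\ov{a_0}$ you baked into $\varphi_n$. Thus, after applying richness you only get that $\cA,\cB$ disagree on that particular pigeonhole combination, which says nothing about the truth of $\varphi_n$ on $\cA$ and $\cB$ --- the claimed contradiction ``$\cA\models\varphi_n$ iff $\cB\not\models\varphi_n$'' simply does not follow.

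What the paper does instead (and what your proposal is missing) is to make the ``effective'' linear form \emph{move}: the sentences $\varphi_n$ express, over $\cT_n$-forests $\cF$, that the number of unmarked trees $T$ in $\cF$ whose marked version $\marked(T)$ also appears in $\cF$ lies in $\sem{\quant{R}}$. Which $1$-types are being counted thus depends on which marked trees appear in $\cF$: for each non-empty set $B\subseteq\marked(\cT_n)$ one gets a different $0/1$ vector $\ov{a_0}^B$, and Lemma~\ref{lem:lower-bound-linear} shows --- by applying richness for each $B$ separately --- that \emph{each} of these $2^{\setsize{\marked(\cT_n)}}-1$ vectors must be matched by some $\quant{R}$-condition in $\psi_n$, and that distinct $B$'s require distinct conditions. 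This family structure is precisely what produces the $\exp_4(n)$ bound, and it is absent from your proposal. A secondary issue: your ``realisation lemma'' asserting that \emph{every} $\ov{z}\in(v+\bN)^s$ is realisable as $\HT_{r_n}^3(\cA)$ is false as stated because $1$-type multiplicities satisfy hidden consistency constraints (e.g., types whose neighbourhoods force the presence of other types). The paper avoids this by parametrising by whole trees (connected components) --- the vectors $\ov{x},\ov{y}$ prescribe copy-counts of trees from $S=B\cup\unmarked(\cT_n)$, and the linear maps $T\mapsto\sum_\tau b_{\tau,i}\real_\tau^T$ translate these into the scalars fed to the oc-type conditions --- thereby never needing arbitrary Hanf-tuples.
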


For proving Theorem~\ref{thm:mainlowerbound} let us consider the following $\sigma_\tree$-structures.
For $n\in\bN$ let $\cT_n$ denote the set of all \emph{(complete
  labeled ordered binary) trees of height~$2^n$}, i.e., of all
structures $T=(D,E_0,E_1,X)$ with $D$ the set of binary words of
length at most $2^n$, $E_b=\setc{(u,ub)}{ub\in D}$ for
$b\in\set{0,1}$, and $X\subseteq D$.

A tree~$T$ is \emph{marked} if the root $\varepsilon$ belongs to $X$,
i.e., is labeled. Otherwise, $T$ is \emph{unmarked}. For a tree
$T=(D,E_0,E_1,X)$, let $\marked(T)=(D,E_0,E_1,X\cup\set{\varepsilon})$
denote the marked tree that is obtained by labelling the root; the
unmarked tree $\unmarked(T)=(D,E_0,E_1,X\setminus\set{\varepsilon})$
is defined analogously.

For a finite set $S$ of trees, a \emph{forest over $S$} 
(an \emph{$S$-forest}, for short) is a disjoint union of finitely many copies of trees
from~$S$. Since every tree is finite, the same applies to every
$S$-forest.

\begin{definition}[Property $P_R(\cF)$] \ \\
For a $\cT_n$-forest $\cF$ and a set $R\subseteq\bN$, let $P_R(\cF)$
be the following property:
\begin{center}
    The number of unmarked trees $T$ in $\cF$ such that $\marked(T)$
    also appears in $\cF$ belongs to $R$.
\end{center}
\end{definition}

\noindent
The following lemma is the technical core of the proof of
Theorem~\ref{thm:mainlowerbound}.

\begin{lemma}\label{lem:lower-bound-linear}
  Let $\Ps=\set{\quant{R}}$ with $\ar(\quant{R})=1$ and
  $\sem{\quant{R}}\subseteq\bN$ rich.
  Let $n\in\bN$ and
  $\psi\in\FOC(\Ps\cup\{\P_{\exists}\})[\sigma_\tree]$ be a linear
  hnf-sentence such that for all $\cT_n$-forests $\cF$ we have
  \begin{equation}
    \label{eq:assumption}
     \cF\models\psi
     \quad \iff \quad 
     P_{\sem{\quant{R}}}(\cF)\,.
  \end{equation}  
  Then, the number of distinct atomic numerical oc-type conditions of
  the form $\quant{R}(t)$ in $\psi$
  is at least as big as the number of non-empty sets 
  $B\subseteq\marked(\cT_n)$ of marked trees of height $2^n$.
\end{lemma}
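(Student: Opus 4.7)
The plan is to associate, to each non-empty $B \subseteq \marked(\cT_n)$, a distinct counting term $t_B$ appearing inside an atomic $\quant{R}$-condition of $\psi$, by exploiting the richness of $\sem{\quant{R}}$ on a carefully chosen parametrised family of forests. First, I would linearise: since a $\cT_n$-forest $\cF$ is the disjoint union of its tree-components, the $r$-ball of any vertex is confined to its own tree, so every basic counting term satisfies $\Count{(y)}{\sph_\tau(y)}^\cF = \sum_{T \in \cT_n} n^T_\tau \cdot m_T(\cF)$, where $m_T(\cF)$ is the number of copies of $T$ in $\cF$ and $n^T_\tau \in \bN$ is the (fixed) number of vertices of $T$ whose $r$-ball in $T$ realises $\tau$. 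Since $\psi$ is linear, every counting term $t$ occurring in $\psi$ therefore evaluates on $\cF$ to $\sum_{T \in \cT_n} c^{(t)}_T \cdot m_T(\cF) - k^{(t)}$ with $c^{(t)}_T, k^{(t)} \in \bN$.

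Next, I would fix a non-empty $B \subseteq \marked(\cT_n)$ and consider the family $\mathcal{F}_B$ of forests having $m_{T^+}(\cF) \geq 1$ for every $T^+ \in B$, $m_{T^+}(\cF) = 0$ for every $T^+ \in \marked(\cT_n) \setminus B$, and $m_{T^-}(\cF) \geq 0$ arbitrary for every $T^- \in \unmarked(\cT_n)$. On $\mathcal{F}_B$ the set of marked trees in $\cF$ is exactly $B$, so $P_{\sem{\quant{R}}}(\cF)$ simplifies to $\sum_{T^- \in B^-} m_{T^-}(\cF) \in \sem{\quant{R}}$, where $B^- \deff \setc{\unmarked(T^+)}{T^+ \in B}$. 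The crucial device is to parametrise $\mathcal{F}_B$ using \emph{both} the marked multiplicities $(m_{T^+})_{T^+ \in B}$ and the unmarked multiplicities $(m_{T^-})_{T^- \in \unmarked(\cT_n)}$, letting the combined vector $\ov{x}$ range over $(v + \bN)^{B \cup \unmarked(\cT_n)}$ for sufficiently large $v$. Then each condition $\quant{R}(t_i)$ of $\psi$ becomes $\quant{R}(\ov{a_i}\transpose\,\ov{x} - k^{(i)})$ with $\ov{a_i} \in \bN^{B \cup \unmarked(\cT_n)}$ and $k^{(i)} \in \bN$, already fitting the form $\ov{a_i}\transpose\,\ov{x} - c_i$ with $c_i \in \bN$ required by richness, and every $\P_{\exists}$-condition becomes a truth constant on $\mathcal{F}_B$.

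I would then apply richness with target $\ov{a_0}$ the indicator of $B^-$ inside $B \cup \unmarked(\cT_n)$ and ``others'' being the pairs $(\ov{a_i}, k^{(i)})$ obtained from the $\quant{R}$-conditions of $\psi$. Assuming for contradiction that no $\quant{R}(t_i)$ of $\psi$ satisfies $(\ov{a_i}, k^{(i)}) = (\ov{a_0}, 0)$---equivalently, no $t$ in $\psi$ has $c^{(t)}_T = 1$ for $T \in B^-$, $c^{(t)}_T = 0$ for the remaining $T \in B \cup \unmarked(\cT_n)$, and $k^{(t)} = 0$---the non-equality required by richness holds for every $i$, and richness furnishes $\ov{x}, \ov{y} \in (v+\bN)^{B \cup \unmarked(\cT_n)}$ flipping the target while preserving every $\quant{R}$-condition (the $\P_{\exists}$-conditions being constant). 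The corresponding forests $\cF_{\ov{x}}, \cF_{\ov{y}} \in \mathcal{F}_B$ then satisfy $\psi$ identically but disagree on $P_{\sem{\quant{R}}}$, contradicting~\eqref{eq:assumption}; hence such a $t_B$ must occur in $\psi$. The hard part is precisely this parametrisation: the naive choice $m_{T^+} = 1$ for $T^+ \in B$ would leave residual constants $\sum_{T^+ \in B} c^{(i)}_{T^+} - k^{(i)}$ which can be \emph{positive}, violating the $c_i \in \bN$ hypothesis of richness; allowing $m_{T^+}$ to vary absorbs this into the linear form.

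Finally, distinctness follows from the bijectivity of $B \mapsto B^-$ between the subsets of $\marked(\cT_n)$ and those of $\unmarked(\cT_n)$: different non-empty $B$'s yield different $B^-$'s, so the coefficient vectors $(c^{(t_B)}_T)_{T \in \unmarked(\cT_n)}$ differ across $B$, and the $t_B$'s are pairwise syntactically distinct counting terms. This yields the required $2^{\setsize{\marked(\cT_n)}} - 1$ pairwise distinct atomic $\quant{R}$-conditions in $\psi$.
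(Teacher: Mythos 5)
Your proposal is correct and follows essentially the same strategy as the paper's proof: fix a non-empty $B\subseteq\marked(\cT_n)$, parametrise forests over the index set $B\cup\unmarked(\cT_n)$ with all multiplicities at least some large $v$, linearise each counting term of $\psi$, set the target vector $\ov{a_0}$ to be the indicator of $B^-=\unmarked(B)$, and derive a contradiction from richness unless some $\quant{R}(t)$ in $\psi$ has coefficient vector exactly $(\ov{a_0},0)$; distinctness across different $B$ is then read off from the coefficients attached to unmarked trees, since $B\mapsto B^-$ is injective. Your explicit remark about why the marked multiplicities must also be made variables (to keep the constant offset in $\bN$ as required by richness) is a nice justification that the paper implements implicitly but does not spell out.
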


\begin{proof}
  Let $\sigma\deff\sigma_\tree$.
  Since $\psi$ is a linear hnf-sentence from $\FOC(\Ps\cup\set{\P_{\exists}})[\sigma]$, it is a
  Boolean combination of atomic numerical oc-type conditions of the form $\P(t)$ with
  $\P\in\set{\quant{R},\P_{\exists}}$ and $t$ a linear counting term.
  Thus, there exist numbers $u,u'\in\NN$ and linear simple counting
  terms $t_1,\ldots,t_{u+u'}$ such that $\psi$ is a Boolean
  combination of the atomic numerical oc-type conditions
  \[
     \quant{R}(t_1),\ \ldots, \ \quant{R}(t_u),\ \
     \P_{\exists}(t_{u+1}),\ \ldots,\ \P_{\exists}(t_{u+u'})\,.
  \]
  Let $r$ be the locality radius of $\psi$ and let
  $L\deff \Typeslistd{\le r}{1}$ denote the list of all $d$-bounded types of
  radius $\le r$ with a single centre. 
  Now consider an $i\in\set{1,\ldots,u+u'}$.
  Since the counting term $t_i$ is
  linear, there are natural numbers $b_{0,i}$ and $b_{\tau,i}$ for
  $\tau\in L$ such that
  \begin{equation}
    \label{eq:ti}
    t_i \quad = \quad \sum_{\tau\in L}b_{\tau,i}\cdot\Count{(y)}{\sph_\tau(y)} \ \ - \ \ b_{0,i}\,.
  \end{equation}

For each $\tau\in L$ and each tree $T\in\cT_n$ we write
$\real_\tau^T$ to denote the number of
nodes $v$ from $T$ whose neighbourhood is isomorphic to $\tau$, i.e.,
\[
  \real_\tau^T \quad \deff \quad \Setsize{\setc{\,v\in D\;}{\;(T,v)\models\sph_\tau\,}}\,.
\]
To apply the richness of $\sem{\quant{R}}$, we consider the following vectors from $\bN^{\cT_n}$
and numbers from~$\bN$:

For each $i\in\set{1,\ldots,u}$ let
\[
 \ov{a_i}\ \deff \ (a_{i,T})_{T\in \cT_n}
 \quad\text{with}\quad
 a_{i,T}\ \deff \ \sum_{\tau\in L} \big(b_{\tau,i}\cdot\real_{\tau}^T\big)
\]
and let
\[
  c_i \ \deff \ b_{0,i}\,.
\]

Furthermore, let us fix a number $v\in\bN$ with
$v>b_{0,i}$ for all $i\in\set{1,\ldots,u+u'}$.
I.e., $v> c$ for all constant terms of
linear counting terms that appear in some numerical oc-type
condition of $\psi$.

Finally, for every non-empty set $B\subseteq\marked(\cT_n)$
let $S\deff B\cup\unmarked(\cT_n)$ and
consider the vectors
\[
 \ov{a_i}^B \ \deff\ (a_{i,T})_{T\in S} \qquad \text{for all $i\in\set{1,\ldots,u}$},
\]
and
\[
  \ov{a_0}^B\ \deff \ (a_T^B)_{T\in S} \quad \text{with}\quad 
   a_T^B \ \deff \
    \begin{cases}
      1 & \text{if }T\in \unmarked(B)\\
      0 & \text{otherwise.}
    \end{cases}
\]

\begin{claim}
 For every non-empty $B\subseteq\marked(\cT_n)$ there is an
 $i\in\set{1,\ldots,u}$ with $(\ov{a_0}^B,0)=(\ov{a_i}^B,c_i)$. 
\end{claim}  
\begin{proof}
Consider an arbitrary non-empty set $B\subseteq\marked(\cT_n)$ and let 
$S\deff B\cup\unmarked(\cT_n)$.

    Towards a contradiction, suppose that $(\ov{a_0}^B,0)\neq(\ov{a_i}^B,c_i)$
    for all $i\in\set{1,\ldots,u}$. Note that $\ov{a_0}^B\neq\ov{0}$ since
    $B\neq\emptyset$.
    Since the set $\sem{\quant{R}}$ is rich, there are vectors
    $\ov{x},\ov{y}\in(v+\bN)^S$ such that
    \begin{enumerate}[(i)]
    \item\label{eq:claim:i}
      $(\ov{a_0}^B)\transpose\,\ov{x}\in \sem{\quant{R}}\iff
      (\ov{a_0}^B)\transpose\,\ov{y}\notin \sem{\quant{R}}$ \ \ and
    \item\label{eq:claim:ii}
      $(\ov{a_i}^B)\transpose\,\ov{x}-c_i\in \sem{\quant{R}}\iff
      (\ov{a_i}^B)\transpose\,\ov{y}-c_i\in \sem{\quant{R}}$ \ \ for all
      $i\in\set{1,\ldots,u}$.
    \end{enumerate}

    From these vectors $\ov{x}$ and $\ov{y}$, we build $S$-forests $\cF_{\ov{x}}$ and
    $\cF_{\ov{y}}$ as follows: For each $T\in S$, the $S$-forest
    $\cF_{\ov{x}}$ contains $x_T$ copies of $T$, and the $S$-forest
    $\cF_{\ov{y}}$ contains $y_T$ copies of $T$. I.e.,
    \[
    \cF_{\ov{x}}\ \ \deff \ \ \biguplus_{T\in S} x_T\cdot T
    \qquad\text{ and }\qquad
    \cF_{\ov{y}}\ \ \deff \ \ \biguplus_{T\in S} y_T\cdot T
    \]
    where $x_T\cdot T$ denotes the disjoint union of $x_T$ copies of
    the tree $T$. Let
    \begin{equation}
      \label{eq:t}
      t \quad = \quad \sum_{\tau\in L}
      b_\tau\cdot\Count{(y)}{\sph_\tau(y)} \ \ - \ \ b_0
    \end{equation}
    be one of the linear counting terms $t_1,\ldots,t_{u+u'}$ that
    occur in $\psi$. By definition of $\cF_{\ov{x}}$ we have
    \begin{align}
      \sem{t}^{\cF_{\ov{x}}} 
      & \quad=\quad \sum_{\tau\in
        L}\left(b_{\tau}\cdot\sum_{T\in S}\big(\real_\tau^T\cdot
        x_T\big)\right) \ \ - \ \ b_{0}\notag\\
      & \quad=\quad \sum_{T\in S}\left(\left(\sum_{\tau\in L}
          \big(b_{\tau}\cdot\real_\tau^T\big)\right)\cdot x_T\right) \
      \ - \ \ b_{0}\label{align:1}\intertext{and similarly}
        \sem{t}^{\cF_{\ov{y}}} 
      &\quad=\quad \sum_{T\in S}\left(\left(\sum_{\tau\in L}
          \big(b_{\tau}\cdot\real_\tau^T\big)\right)\cdot y_T\right) \
      \ - \ \ b_{0}\label{align:2}
    \end{align}

    We next show that
    \begin{equation}
      \label{eq:indistinguishable-by-ghs}
      \cF_{\ov{x}}\models\delta \quad \iff \quad \cF_{\ov{y}}\models\delta
    \end{equation}
    is true for all atomic numerical oc-type conditions $\delta$ that appear in
    $\psi$.

    First consider the case that $\delta=\P_{\exists}(t_i)$ for some
    $i\in\set{u+1,\ldots,u+u'}$, let $t\deff t_i$ and suppose that
    $\cF_{\ov{x}}\models\P_{\exists}(t)$, i.e.,
    $\sem{t}^{\cF_{\ov{x}}}\geq 1$. By \eqref{align:1}, there exists $T\in S$
    with $\sum_{\tau\in L}\big(b_\tau\cdot\real_\tau^T\big)\geq 1$. Recall that $y_T\ge v>b_0$. Hence
    \[
      b_0 \ \ < \ \ v \ \ \le \ \ y_T \ \ \le \ \ \sem{t}^{\cF_{\ov{y}}}+b_0
    \]
    by \eqref{align:2}. Hence $\sem{t}^{\cF_{\ov{y}}}\geq 1$ and therefore
    $\cF_{\ov{y}}\models\P_{\exists}(t)$, i.e.,
    $\cF_{\ov{y}}\models\delta$. By symmetry, we obtain the
    equivalence \eqref{eq:indistinguishable-by-ghs} in this case.

    Next, we consider the case $\delta=\quant{R}(t_i)$ for some
    $i\in\set{1,\ldots,u}$.
    Note that \eqref{align:1},
    \eqref{align:2} and the definition of $\ov{a_i}^B$ and $c_i$ imply
    $\sem{t_i}^{\cF_{\ov{x}}}=(\ov{a_i}^B)\transpose\,\ov{x}-c_i$ and
    $\sem{t_i}^{\cF_{\ov{y}}}=(\ov{a_i}^B)\transpose\,\ov{y}-c_i$. Hence
    we get
    \begin{align*}
      \cF_{\ov{x}}\models\delta
      &\quad\iff\quad (\ov{a_i}^B)\transpose\,\ov{x}-c_i\in \sem{\quant{R}}\\
      &\quad\stackrel{\eqref{eq:claim:ii}}{\iff}\quad (\ov{a_i}^B)\transpose\,\ov{y}-c_i\in \sem{\quant{R}}\\
      &\quad\iff\quad \cF_{\ov{y}}\models\delta\,.
    \end{align*}
    Thus, the equivalence \eqref{eq:indistinguishable-by-ghs} holds
    also in this case.\bigskip

    Since $\psi$ is a Boolean combination of numerical
    oc-type conditions for which \eqref{eq:indistinguishable-by-ghs}
    holds, we obtain that
    \[
      P_{\sem{\quant{R}}}(\cF_{\ov{x}}) \quad\stackrel{\eqref{eq:assumption}}{\iff}\quad
      \cF_{\ov{x}}\models\psi\quad\iff\quad\cF_{\ov{y}}\models\psi
      \quad\stackrel{\eqref{eq:assumption}}{\iff}\quad P_{\sem{R}}(\cF_{\ov{y}})\,.
    \]

    Now we derive a contradiction as follows: Let $T\in\marked(\cT_n)$
    be some marked tree. Then $T$ appears in $\cF_{\ov{x}}$ if and
    only if $T\in B$. Hence we get the following:
    \begin{align*}
      P_{\sem{\quant{R}}}(\cF_{\ov{x}})
        &\ \ \iff \ \
          \begin{minipage}[t]{.7\linewidth}
            the number of copies of unmarked trees $T$ in
            $\cF_{\ov{x}}$ such that the marked tree $\marked(T)$ appears in
            $\cF_{\ov{x}}$ belongs to $\sem{\quant{R}}$
          \end{minipage}\\
        &\ \ \iff \ \
          \begin{minipage}[t]{.7\linewidth}
            the number of copies of trees from $\unmarked(B)$ in
            $\cF_{\ov{x}}$ belongs to $\sem{\quant{R}}$
          \end{minipage}\\
        &\ \ \iff \ \ \sum_{T\in\unmarked(B)}x_T \ \ \in \ \ \sem{\quant{R}}\\
        &\ \ \stackrel{\eqref{eq:claim:i}}{\iff} \ \
        \sum_{T\in\unmarked(B)}y_T \ \ \notin \ \ \sem{\quant{R}}\\
        &\ \ \iff \ \ P_{\sem{\quant{R}}}(\cF_{\ov{y}})\text{ does not hold.}
    \end{align*}
    This contradiction completes the indirect proof of the claim.
  \end{proof}
  \bigskip

  Now
  consider two distinct non-empty sets $B,B'\subseteq\marked(\cT_n)$.
  By the claim we know that there are $i,i'\in\set{1,\ldots,u}$ such
  that 
  $(\ov{a_0}^B,0)=(\ov{a_i}^B,c_i)$ and
  $(\ov{a_0}^{B'},0)=(\ov{a_{i'}}^{B'},c_{i'})$.
  To finish the proof of Lemma~\ref{lem:lower-bound-linear}, it
  suffices to show that $i\neq i'$. 
  For contradiction, assume that $i=i'$.
  W.l.o.g., there is a marked tree $\tilde{T}\in B\setminus B'$. Let
  $T$ be the unmarked version of $\tilde{T}$.
  Thus, $a_T^B=1$ since $T\in \unmarked(B)$, and $a_T^{B'}=0$ since
  $T\not\in\unmarked(B')$.
  But as we assume that $i=i'$, we have
  $a_T^B=a_{i,T}=a_{i',T}=a_T^{B'}$. 
  This contradiction completes the proof of Lemma~\ref{lem:lower-bound-linear}.
\qed
\end{proof}

\bigskip

We are now ready for the proof of Theorem~\ref{thm:mainlowerbound}.

\bigskip

\begin{proof}[\textbf{Proof of Theorem~\ref{thm:mainlowerbound}}]
  A construction by Frick \& Grohe~\cite[Lemma~25]{FriG04} provides us
  with a sequence of formulas $\varphi_n\in\FO(\Ps)[\sigma_\tree]$ of
  size $\bigOh(n)$ such that, for all $n\in\bN$ and all
  $\cT_n$-forests~$\cF$, we have
  \[
   P_{\sem{\quant{R}}}(\cF) \quad \iff \quad \cF\models\varphi_n.
  \]
  Precisely, the proof of \cite[Lemma~25]{FriG04} provides us with a
  sequence $\psi_n(x_1,x_2,y_1,y_2)$ of $\FO[\sigma_\tree]$-formulas
  of length $\bigOh(n)$ such that for all $n\in\NN$, all
  $\cT_n$-forests $\cF$ and all nodes $a_1,a_2,b_1,b_2$ of $\cF$ we
  have
  $\cF\models\psi_n[a_1,a_2,b_1,b_2]$
  if and only if to go from $a_1$ to $a_2$ in $\cF$ we must follow the
  same sequence of $E_0,E_1$-edges as to go from $b_1$ to $b_2$ in $\cF$.
  Using these formulas $\psi_n(x_1,x_2,y_1,y_2)$, we can choose
  \[
    \phi_n \quad \deff \quad
    \quant{R}\,\big(\, \Count{(x)}{\chi_n(x)}\,\big)\,,\qquad \text{where}
  \]    
  $\chi_n(x)$ states that $x$ is an unmarked root node for which there
  exists a marked root node $y$ such that for all nodes $x'$ and $y'$
  that satisfy $\psi_n(x,x',y,y')$ and $x'\neq x$ and $y'\neq y$,
  we have $X(x')\gdw X(y')$.

  Let $n\ge 1$ and let $\psi\in\FOC(\Ps\cup\set{\P_{\exists}})[\sigma_\tree]$ be a
  linear hnf-sentence that is $3$-equivalent to $\varphi_n$. Then, by
  Lemma~\ref{lem:lower-bound-linear}, $\psi$ contains at least
  $2^{\frac{1}{2}\setsize{\cT_n}}-1$ distinct atomic numerical oc-type
  conditions.
  Note that each tree $T$ in $\cT_n$ is a complete labeled ordered binary
  tree of height $2^n$. Therefore, each $T\in\cT_n$ has
  $\sum_{h=0}^{2^n}2^h=2^{2^n+1}-1 \geq 2^{2^n}+3$ nodes.
  Thus, $\setsize{\cT_n}\geq 2^{(2^{2^n}+3)}$, and hence
  $\frac{1}{2}\setsize{\cT_n}\geq 2^{(2^{2^n}+2)}\geq \exp_3(n)+1$.
  In summary, the number of distinct atomic numerical oc-type
  conditions in $\psi$ is at least 
  $2^{\frac{1}{2}\setsize{\cT_n}}-1 \geq 2^{\exp_3(n)+1} -1\geq \exp_4(n)$.
\qed
\end{proof}

\bigskip

Note that Theorem~\ref{thm:main}\eqref{thm:main:lowerbound} is an immediate consequence of Theorem~\ref{thm:mainlowerbound}:

\bigskip

\begin{proof}[\textbf{Proof of Theorem~\ref{thm:main}\eqref{thm:main:lowerbound}}]
  Let $\psi\in\FO(\Ps\cup\set{\P_{\exists}})$ be a formula in weak Hanf normal
  form. Replacing every counting term
  $\Count{(y)}{\bigvee_{\tau\in T}\sph_\tau(y)} \ - k$ by 
  $\sum_{\tau\in T}\Count{(y)}{\sph_\tau(y)} - k$ yields an equivalent
  linear hnf-formula from $\FOC(\Ps\cup\set{\P_{\exists}})$. Hence, the
  lower bound of Theorem~\ref{thm:mainlowerbound} also applies to formulas in weak Hanf normal form, 
  providing a proof of Theorem~\ref{thm:main}\eqref{thm:main:lowerbound}.
\end{proof}

\bigskip

The remainder of the section is devoted to proving the statements of
Example~\ref{example:rich}.
In Section~\ref{SS-random} we show that random sets are rich, and in
Section~\ref{SS-large-gaps} we show that sets with ``large gaps'' are rich.

\subsection{Random sets are rich}\label{SS-random}
 
In this subsection we show that
``almost all'' sets of natural numbers are rich ---
something one would presumably not expect when looking at the
definition.

\begin{lemma}\label{L-semirich}
  Let $R\subseteq\bN$ such that, for all $s,u,v\in\bN$, all
  $\ov{a_0}\in\set{0,1}^s\setminus\set{\ov{0}}$,
  $\ov{a_1},\dots,\ov{a_u}\in\bN^s$, and all
  $c_1,\dots,c_u\in\bN$ with $(\ov{a_0},0)\neq(\ov{a_i},c_i)$
  for all $i\in\set{1,\ldots,u}$ and $c_0\deff 0$, there exist $\ov{x},\ov{y}\in(v+\bN)^s$ such
  that
  \begin{subequations}
    \begin{align}
      \label{eq:x}
        &\ov{a_i}\transpose\,\ov{x}-c_i\in R
          \text{ \quad for all } i\in\set{0,\ldots,u} \text{ with }\ov{a_i}\neq\ov{0}\,,\\     
      \label{eq:y1}
        &\ov{a_0}\transpose\,\ov{y}-c_0\notin R\text{, \ and }\\
      \label{eq:y2}
        &\ov{a_i}\transpose\,\ov{y}-c_i\in R
          \text{ \quad for all } i\in\set{1,\ldots,u} \text{ with }\ov{a_i}\neq\ov{0}.     
    \end{align}
  \end{subequations}
  Then $R$ is rich.
\end{lemma}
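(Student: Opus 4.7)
The plan is to observe that the hypothesis of Lemma~\ref{L-semirich} is essentially a prescriptive strengthening of richness, and that the asserted witnesses $\ov{x},\ov{y}$ already do the job. Given an arbitrary instance of the richness condition, specified by $s,u,v\in\bN$, $\ov{a_0}\in\set{0,1}^s\setminus\set{\ov{0}}$, $\ov{a_1},\ldots,\ov{a_u}\in\bN^s$ and $c_1,\ldots,c_u\in\bN$ with $(\ov{a_0},0)\neq(\ov{a_i},c_i)$ for all $i\in\set{1,\ldots,u}$, I would feed exactly these parameters (together with $c_0\deff 0$) into the lemma's hypothesis, obtaining $\ov{x},\ov{y}\in(v+\bN)^s$ satisfying \eqref{eq:x}, \eqref{eq:y1}, and \eqref{eq:y2}. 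It then remains only to check that these $\ov{x},\ov{y}$ fulfil the two biconditional conditions of richness.

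For the first biconditional, $\ov{a_0}\transpose\,\ov{x}\in R \iff \ov{a_0}\transpose\,\ov{y}\notin R$, I would apply \eqref{eq:x} with $i=0$: since $\ov{a_0}\neq\ov{0}$ and $c_0=0$, this directly yields $\ov{a_0}\transpose\,\ov{x}\in R$, while \eqref{eq:y1} states $\ov{a_0}\transpose\,\ov{y}\notin R$; together these make the biconditional vacuously true. For the second family of biconditionals, fix $i\in\set{1,\ldots,u}$. In the case $\ov{a_i}\neq\ov{0}$, the membership $\ov{a_i}\transpose\,\ov{x}-c_i\in R$ is delivered by \eqref{eq:x} and $\ov{a_i}\transpose\,\ov{y}-c_i\in R$ by \eqref{eq:y2}, so both sides of the biconditional are true. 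In the complementary case $\ov{a_i}=\ov{0}$, the two quantities $\ov{a_i}\transpose\,\ov{x}-c_i$ and $\ov{a_i}\transpose\,\ov{y}-c_i$ are both equal to $-c_i$, so the biconditional holds trivially regardless of whether $-c_i$ lies in $R$ (which, since $R\subseteq\bN$, happens only when $c_i=0$ and $0\in R$).

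I do not anticipate any genuine obstacle here: the hypothesis was designed so that \eqref{eq:x} covers the ``all memberships positive'' side of the arrangement, while \eqref{eq:y1}--\eqref{eq:y2} flip only the index-$0$ membership, which is exactly the asymmetry demanded by richness. The main thing to be careful about is the book-keeping around indices $i$ with $\ov{a_i}=\ov{0}$, where the hypothesis gives no guarantee but also none is needed, since both inner products collapse to the same constant $-c_i$.
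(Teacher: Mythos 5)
Your proposal is correct and follows essentially the same route as the paper: feed the given parameters (with $c_0\deff 0$) into the lemma's hypothesis to obtain $\ov{x},\ov{y}$, then check the two richness biconditionals case-by-case, noting that the case $\ov{a_i}=\ov{0}$ is handled by the observation that both inner products collapse to $-c_i$. The only small quibble is terminology: the first biconditional is not \emph{vacuously} true but simply true because both of its sides are true statements.
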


\begin{proof}
  To show that $R$ is rich, let $s,u,v\in\bN$,
  $\ov{a_0}\in\set{0,1}^s\setminus\set{\ov{0}}$,
  $\ov{a_1},\dots,\ov{a_u}\in\bN^s$, and $c_1,\dots,c_u\in\bN$ with
  $(\ov{a_0},0)\neq(\ov{a_i},c_i)$ for all $i\in\set{1,\ldots,u}$. For
  notational simplicity, we set $c_0\deff 0$. By the assumption on $R$,
  there exist vectors $\ov{x},\ov{y}\in(v+\bN)^s$ satisfying
  \eqref{eq:x}--\eqref{eq:y2}.

  Then, by \eqref{eq:x} and \eqref{eq:y1}, we have
  \[
     \ov{a_0}\transpose\,\ov{x}-c_0\in R \quad\iff\quad
     \ov{a_0}\transpose\,\ov{y}-c_0\notin R\,.
  \]
  By \eqref{eq:x} and \eqref{eq:y2}, we furthermore get
  \[
     \ov{a_i}\transpose\,\ov{x}-c_i\in R \quad\iff\quad
     \ov{a_i}\transpose\,\ov{y}-c_i\in R
  \]
  for all $i\in\set{1,\ldots,u}$ with $\ov{a_i}\neq\ov{0}$. If
  $\ov{a_i}=\ov{0}$, this also holds since then
  \[
     \ov{a_i}\transpose\,\ov{x}-c_i \quad = \quad -c_i \quad =\quad
     \ov{a_i}\transpose\,\ov{y}-c_i\,.
  \]
  \qed
\end{proof}

This subsection's main result reads as follows:

\begin{proposition}\label{P-random-rich}
  A random set $R\subseteq\bN$ is, with probability 1, rich.
\end{proposition}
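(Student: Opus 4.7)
The plan is to invoke Lemma~\ref{L-semirich} and then apply the second Borel--Cantelli lemma. The random set $R\subseteq\bN$ is obtained by tossing an independent fair coin for each $n\in\bN$. Since the collection of tuples $(s,u,v,\ov{a_0},\dots,\ov{a_u},c_1,\dots,c_u)$ satisfying the hypotheses of Lemma~\ref{L-semirich} is countable, it suffices by a countable intersection of probability-1 events to fix one such tuple and show that, with probability 1, vectors $\ov{x},\ov{y}\in(v+\bN)^s$ witnessing \eqref{eq:x}--\eqref{eq:y2} exist. Set $I\deff\setc{i\in\set{0,1,\dots,u}}{\ov{a_i}\neq\ov{0}}$, so $0\in I$ and $|I|\le u+1$.

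For the existence of $\ov{x}$, I would pick an increasing sequence $j_1<j_2<\cdots$ of non-negative integers growing fast enough that the intervals $[v+j_m-\max_{i\in I}c_i,\;(v+j_m)\cdot\max_{i\in I}\sum_{k}a_{i,k}]$ are contained in $\bN$ and pairwise disjoint across $m$. Setting $\ov{x}^{(m)}\deff(v+j_m,\dots,v+j_m)$, the test-number set $T_m\deff\setc{\ov{a_i}\transpose\ov{x}^{(m)}-c_i}{i\in I}$ then lies in the $m$-th interval. The events $E_m\deff$ ``every element of $T_m$ lies in $R$'' consequently depend on disjoint collections of coin tosses, so they are mutually independent and each has probability $2^{-|T_m|}\ge 2^{-(u+1)}>0$. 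The second Borel--Cantelli lemma yields that infinitely many $E_m$ occur almost surely, and any such $m$ supplies a valid~$\ov{x}$.

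For $\ov{y}$, the same scheme applies, but one must additionally avoid the within-candidate contradiction $\ov{a_0}\transpose\ov{y}^{(m)}=\ov{a_i}\transpose\ov{y}^{(m)}-c_i$ for some $i\in I\setminus\set{0}$, which would force the incompatible demands $\ov{a_0}\transpose\ov{y}^{(m)}\notin R$ and $\ov{a_i}\transpose\ov{y}^{(m)}-c_i\in R$ to refer to the same integer. By the standing hypothesis $(\ov{a_0},0)\neq(\ov{a_i},c_i)$, the equation $(\ov{a_0}-\ov{a_i})\transpose\ov{y}=-c_i$ has no solution when $\ov{a_0}=\ov{a_i}$ (which then forces $c_i\neq 0$) and otherwise cuts out a proper affine hyperplane. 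I would therefore choose the candidates along an arithmetic progression $\ov{y}^{(m)}\deff\ov{y}_0+j_m\ov{e}$ with direction $\ov{e}\deff(1,C,C^2,\dots,C^{s-1})$, for some integer $C$ larger than every coordinate of every $\ov{a_i}$; uniqueness of the base-$C$ representation then guarantees $(\ov{a_0}-\ov{a_i})\transpose\ov{e}\neq 0$ whenever $\ov{a_0}\neq\ov{a_i}$, so each forbidden hyperplane is met by at most one value of $j_m$. Discarding those finitely many indices and thinning the remaining sequence so that the test-number sets are pairwise disjoint (as for~$\ov{x}$), we again obtain mutually independent events each of probability at least $2^{-(u+1)}$, and Borel--Cantelli supplies~$\ov{y}$ almost surely.

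The main obstacle is the bookkeeping for $\ov{y}$: one must simultaneously ensure each candidate's internal consistency (no hyperplane coincidences) and probabilistic independence across candidates (pairwise-disjoint test-number sets). Both are handled by the generic-direction plus sparse-progression construction above, after which the Borel--Cantelli lemma closes the argument.
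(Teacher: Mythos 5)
Your proof is correct, but it takes a genuinely different route from the paper's. Both arguments reduce to Lemma~\ref{L-semirich}, fix one of the countably many parameter tuples $(s,u,v,\ov{a_0},\dots,\ov{a_u},c_1,\dots,c_u)$, and close with a second Borel--Cantelli argument (the paper phrases it directly as $\lim_{N\to\infty}(1-p)^N=0$ rather than invoking the lemma by name). Where you diverge is in constructing the candidate vectors. The paper routes through the technical Lemma~\ref{L-large-gaps}: for each sufficiently large $q$ in a fast-growing sequence $(q_n)$, that lemma produces a vector $\ov{x_n}$ whose test numbers all lie strictly between $q_n/d$ and $q_n d$, with $\ov{a_0}\transpose\,\ov{x_n}=q_n$ being the \emph{unique} test number equal to $q_n$; the paper then treats $(\ov{x},\ov{y})=(\ov{x_{2n}},\ov{x_{2n+1}})$ as a joint candidate, obtaining independence across $n$ from the gaps between consecutive $q$-intervals and obtaining within-candidate consistency for $\ov{y}$ from the uniqueness clause of Lemma~\ref{L-large-gaps}(c). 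You instead decouple $\ov{x}$ and $\ov{y}$ into two separate probability-1 existence events (which is legitimate, since the intersection of two probability-1 events again has probability~1), use constant-coordinate candidates $(v+j_m,\dots,v+j_m)$ for $\ov{x}$, where no within-candidate collision can cause trouble, and for $\ov{y}$ an arithmetic progression in the generic direction $(1,C,\dots,C^{s-1})$ with $C$ larger than all entries of the $\ov{a_i}$, so that each forbidden hyperplane $(\ov{a_0}-\ov{a_i})\transpose\,\ov{y}=-c_i$ meets the progression in at most one point (the case $\ov{a_0}=\ov{a_i}$ is vacuous by the hypothesis $(\ov{a_0},0)\neq(\ov{a_i},c_i)$); discarding those at most $u$ bad indices and thinning for pairwise-disjoint test-number sets yields the same independent Borel--Cantelli setup. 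Your construction is more elementary and self-contained for this particular proposition, whereas the paper economizes by reusing Lemma~\ref{L-large-gaps}, which it needs in any case for Proposition~\ref{P-large-gaps-rich}.
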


\begin{proof}
  Let $s,u,v\in\bN$, $\ov{a_0}\in\set{0,1}^s\setminus\set{\ov{0}}$,
  $\ov{a_1},\dots,\ov{a_u}\in\bN^s$, and $c_1,\dots,c_u\in\bN$ with
  $(\ov{a_0},0)\neq(\ov{a_i},c_i)$ for all $i\in\set{1,\ldots,u}$. For notational
  simplicity, we set $c_0\deff 0$. Without loss of generality, we can
  assume that
  $\ov{a_0}=(\underbrace{1,1,\dots,1}_{j\text{
      entries}},0,0,\dots,0)$. We show that, for a random set
  $R\subseteq\bN$, there exist with probability~1 vectors
  $\ov{x},\ov{y}\in(v+\bN)^s$ satisfying the conditions \eqref{eq:x}--\eqref{eq:y2}
  of Lemma~\ref{L-semirich}.

  First let $B>j$ be properly larger than any of the entries in
  $\ov{a_i}$ and any of the number $c_i$ for $i\in\set{0,\ldots,u}$.  Let
  furthermore $d=4s\sum_{1\le i\le j+1}B^i$. From
  Lemma~\ref{L-large-gaps} below, we obtain a sequence of natural numbers
  $(q_n)_{n\in\bN}$ with $q_n<q_{n+1}$ for all $n\in\NN$ and a
  sequence of vectors $(\ov{x_n})_{n\in\bN}$ 
  such that the following holds for all $n\in\bN$, all
  $\ov{a}\in\set{0,\ldots,B{-}1}^s$ and all $c\in\set{0,\ldots,B{-}1}$:
  \begin{enumerate}[(a)]
  \item If $\ov{a}\neq\ov{0}$, then
    $\frac{q_n}{d}<\ov{a}\transpose\,\ov{x_n}-c$,
  \item $\ov{a}\transpose\,\ov{x_n}-c<q_n d$, \quad and
  \item $\ov{a}=\ov{a_0}$ and $c=0$ \ if, and only if, \ $\ov{a}\transpose\,\ov{x_n}-c=q_n$.
  \end{enumerate}

  Since the sequence $(q_n)_{n\in \NN}$ is infinite, we can in addition assume
  $v\le \frac{q_0}{d}$ and $q_n d<\frac{q_{n+1}}{d}$ for all $n\ge0$.

  Setting $\ov{a}=(0,\dots,0,1,0,\dots,0)$ and $c=0$, condition (a)
  implies
  $v<\frac{q_0}{d}\le\frac{q_n}{d}<\ov{a}\transpose\,\ov{x_n}$
  which is an entry of $\ov{x_n}$. Hence $\ov{x_n}\in(v+\bN)^s$ for
  all $n\in\bN$.

  Let $n\in\bN$ be fixed.  We now estimate the probability for
  \eqref{eq:x}--\eqref{eq:y2} from Lemma~\ref{L-semirich} to hold with
  $\ov{x}=\ov{x_{2n}}$ and $\ov{y}=\ov{x_{2n+1}}$: Then
  condition~\eqref{eq:x} expresses that some fixed numbers
  $m_1,\dots,m_j$ for some $j\le u+1$ belong to $R$. Thus, this
  condition holds with probability
  $\geq \frac{1}{2^j}\ge\frac{1}{2^{u+1}}$. Similarly,
  condition~\eqref{eq:y1} holds with probability $\frac{1}{2}$ and
  condition~\eqref{eq:y2} with probability $\ge\frac{1}{2^u}$. Note
  that
  $\ov{a_i}\transpose\,\ov{x_{2n}}-c_i< q_{2n} d <
  \frac{q_{2n+1}}{d}<\ov{a_j}\transpose\,\ov{x_{2n+1}}-c_j$ for all
  $i,j$ with $\ov{a_i},\ov{a_j}\neq\ov{0}$ and, furthermore, that
  $\ov{a_0}\neq\ov{0}$. Hence condition~\eqref{eq:x} is independent
  from both, condition~\eqref{eq:y1} and
  condition~\eqref{eq:y2}. Furthermore,
  $\ov{a_0}\transpose\,\ov{x_{2n+1}}-c_0\neq\ov{a_i}\transpose\,\ov{x_{2n+1}}-c_i$
  for all $i\in\set{1,\ldots,u}$ with $\ov{a_i}\neq\ov{0}$. Hence also
  condition~\eqref{eq:y1} is independent from
  condition~\eqref{eq:y2}. It follows that the probability for all
  three conditions to hold is $\ge\frac{1}{2^{2u+2}}$. 
  Let $p\deff \frac{1}{2^{2u+2}}$ and note that $p>0$ and $p$ is
  independent from $n$.

  For each fixed $N\in\NN$, the probability that for all $n\leq N$ 
  at least one of the conditions~\eqref{eq:x}--\eqref{eq:y2} from Lemma~\ref{L-semirich}
  with $\ov{x}=\ov{x_{2n}}$ and $\ov{y}=\ov{x_{2n+1}}$ is violated, is
  $\leq (1-p)^N$. Thus, 
  the probability that for all $n\in\bN$, 
  at least one of the conditions~\eqref{eq:x}--\eqref{eq:y2} from Lemma~\ref{L-semirich}
  with $\ov{x}=\ov{x_{2n}}$ and $\ov{y}=\ov{x_{2n+1}}$ is violated, is
  $\lim_{N\to\infty}(1-p)^N=0$. Hence, with probability $1$, there is some $n\in\NN$
  satisfying all conditions \eqref{eq:x}--\eqref{eq:y2} with $\ov{x}=\ov{x_{2n}}$ and
  $\ov{y}=\ov{x_{2n+1}}$. 

  Note that there are only countably many legitimate choices of
  $s,u,v,\ov{a_0},\dots,\ov{a_u},c_1,\dots,c_u$, and recall that the
  intersection of countably many events of probability 1 has
  probability 1, again. Hence, with probability 1, the condition from
  Lemma~\ref{L-semirich} holds. Consequently, $R$ is with probability
  1 rich.\qed
\end{proof}

\begin{lemma}\label{L-large-gaps}
  Let $1\le j\le s$ and $B>j$ be natural numbers and let
  $d=4s\sum_{1\le i\le j+1} B^i$. For all sufficiently large
  natural numbers $q$,
  there exists $\ov{x}\in\bN^s$ such that the following hold for
  all $\ov{a}\in \set{0,\ldots,B{-}1}^s$ and all $c\in\set{0,\ldots,B{-}1}$:
  \begin{enumerate}[(a)]
  \item If $\ov{a}\neq\ov{0}$, then
    $\frac{q}{d}<\ov{a}\transpose\,\ov{x}-c$,
  \item $\ov{a}\transpose\,\ov{x}-c<qd$, \quad and
  \item $\ov{a}\transpose\,\ov{x}-c=q$ \ if, and only if, \
    $\ov{a}\transpose=(\underbrace{1,1,\dots,1}_{j\text{
        entries}},0,\dots,0)$ and $c=0$.
  \end{enumerate}
\end{lemma}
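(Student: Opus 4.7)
The plan is to construct $\ov{x}$ as an explicit function of $q$. Writing $q = jQ+r$ with $r\in\set{0,\ldots,j-1}$, I set $x_i = Q+B^{i+1}$ for $i\in\set{1,\ldots,j-1}$, $x_j = Q+r-\sum_{i=1}^{j-1}B^{i+1}$, and $x_i = q+B$ for $i>j$. For $q$ large enough, all entries are positive integers and $\sum_{i=1}^j x_i = q$, so the tuple $\ov{a_0}=(1,\ldots,1,0,\ldots,0)$ with $c=0$ yields $\ov{a_0}\transpose\,\ov{x}=q$, establishing the ``if'' direction of (c). Setting $\delta_i = x_i-Q$ for $i\leq j$ (so $|\delta_i|\leq B^{j+1}$ and $\sum_{i=1}^j\delta_i = r$) organises the computations below.

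For the ``only if'' direction of (c), suppose $\ov{a}\transpose\,\ov{x}-c=q$ with $\ov{a}\in\set{0,\ldots,B{-}1}^s$ and $c\in\set{0,\ldots,B{-}1}$. I would proceed in three steps. First, $a_i=0$ for every $i>j$: otherwise $\sum_{i>j}a_i x_i\geq q+B$, forcing $\sum_{i\leq j}a_i x_i\leq c-B<0$, which is impossible. Second, the equation rewrites as $Q\bigl(\sum_{i\leq j}a_i-j\bigr) = r+c-\sum_{i\leq j}a_i\delta_i$, whose right-hand side is bounded in absolute value by some polynomial in $B$ and $j$ independent of $q$; since the left-hand side is an integer multiple of $Q$, for $Q$ (hence $q$) large enough we must have $\sum_{i\leq j}a_i=j$. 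Third, substituting $a_j=j-\sum_{i<j}a_i$ into the resulting equation simplifies it to $\sum_{i=1}^{j-1}(a_i-a_j)B^{i+1} = (1-a_j)r+c$.

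The crux of the proof is showing that this last equation forces $a_i=a_j$ for all $i<j$ (which, combined with $\sum a_i=j$, yields $a_i=1$ for all $i\leq j$ and $c=0$). The argument is a base-$B$ size estimate: the right-hand side has absolute value at most $(B-1)(j-1)+(B-1) = (B-1)j$, which is strictly less than $B^2$ thanks to the hypothesis $B>j$. Conversely, if some coefficient $a_i$ with $i<j$ differs from $a_j$, then, picking the largest such index $k$, the leading term $(a_k-a_j)B^{k+1}$ has absolute value at least $B^{k+1}$ while the contribution of the lower-order terms is bounded in magnitude by $(B-1)\sum_{i=1}^{k-1}B^{i+1} = B^{k+1}-B^2$; hence the left-hand side has absolute value at least $B^2$, contradicting the bound on the right-hand side.

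Conditions (a) and (b) are then routine verifications. For (a), the minimum of $\ov{a}\transpose\,\ov{x}-c$ over nonzero $\ov{a}$ and $c\leq B-1$ is attained by the unit vector at position $j$ together with $c=B-1$, giving $x_j-(B-1) \geq Q-\sum_{i=1}^{j-1}B^{i+1}-(B-1)$, which exceeds $q/d$ for large $q$ since $Q\geq q/j-1$ and $1/j>1/d$ (because $j\leq s<d$). For (b), $\sum_{i=1}^s x_i \leq q+(s-j)(q+B)$, so $(B-1)\sum x_i < qd$ reduces to $(B-1)(s-j+1)<d$, which holds because $d\geq 4sB$.
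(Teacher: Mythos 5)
Your proof is correct. The high-level plan is the same as the paper's: write $q=jQ+r$, set the first $j$ coordinates of $\ov{x}$ to $Q$ plus a perturbation whose entries are distinct powers of $B$ and which sums to $r$, and set the remaining coordinates to $q+B$; then bound $\ov{a}\transpose\,\ov{x}-c$ from above and below, and use base-$B$ uniqueness for (c). However, your explicit formulas differ from the paper's and your argument for the ``only if'' direction of (c) is organised more cleanly. The paper uses $x_i=q_i-\sum_{k=1}^j B^k + jB^i$ for $i\le j$ (with $q_i\in\{Q,Q+1\}$) and, after ruling out $\sum_{i\le j}a_i\ne j$ by a direct contradiction with $q>d\cdot j\cdot B^{j+2}$, deduces $a_1=\cdots=a_j=1$ through a chain of inequalities involving $\sum(a_i-1)B^{i-1}$. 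You instead use $x_i=Q+B^{i+1}$ for $i<j$ and $x_j=Q+r-\sum_{i<j}B^{i+1}$, which makes the perturbations $\delta_i=x_i-Q$ literal powers of $B$ for $i<j$; after forcing $\sum a_i=j$ by a ``bounded right-hand side versus multiple of $Q$'' argument, you reduce (c) to the single identity $\sum_{i<j}(a_i-a_j)B^{i+1}=(1-a_j)r+c$ and finish with a leading-coefficient estimate (largest index $k$ with $a_k\ne a_j$) in base $B$, where the hypothesis $B>j$ is used to bound the right-hand side below $B^2$. This is a somewhat more transparent packaging of the base-$B$ step; the constructions and bounds for (a) and (b) are routine in both and match in spirit. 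One minor cosmetic point: your bound $(B-1)j$ on the right-hand side of the key identity in step 3 uses $|1-a_j|\le B-1$, which is an (acceptable) overestimate since $a_j\le B-1$ gives $|1-a_j|\le B-2$; this does not affect correctness.
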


\begin{proof}
  Consider an arbitrary
  \[
    q \quad >\quad d\cdot j\cdot B^{j+2}\,.
  \]

  For $1\le i\le j$, set
  \[
    q_i' \ \ \deff \ \
    \begin{cases}
      1 & \text{if }1\le i\le q\bmod j\\
      0 & \text{otherwise.}
    \end{cases}
  \]
  and
  \[
     q_i \ \ = \ \ \left\lfloor\frac{q}{j}\right\rfloor+q_i'
  \]
  Then, $\sum_{i=1}^{j} q_i=q$ since
  $j\cdot\left\lfloor\frac{q}{j}\right\rfloor=q-(q\mod j)$.

  Let us define the values $x_i$ for $i\in\set{1,\ldots,s}$ as follows:
  \[
    x_i \ \ \deff \ \
    \begin{cases}
      \displaystyle
      q_i-\sum_{k=1}^{j} B^k + jB^i &\text{ for }1\le i\le j\\
      q+B & \text{ for }j<i\le s
    \end{cases}
  \]

  First note the following:
  \begin{align*}
    \sum_{i=1}^{j}x_i 
      &\quad = \quad q - j\cdot\sum_{k=1}^{j} B^k + \sum_{i=1}^{j} jB^i\\
      &\quad = \quad q
  \end{align*}
  which verifies the implication ``$\Leftarrow$'' in condition~(c).

  To verify condition~(a), consider arbitrary
  $\ov{a}=(a_1,\dots,a_s)\in \set{0,\ldots,B{-}1}^s\setminus\set{\ov{0}}$ and
  $c\in\set{0,\ldots,B{-}1}$. Then there exists $\ell\in\set{1,\ldots,s}$ with
  $a_\ell>0$. First consider the case where $1\le \ell\le j$.

  Note that
  \[
    d\cdot j\cdot\sum_{k=1}^{j}B^k \quad < \quad d\cdot j\cdot B^{j+2}
    \quad <\quad q \quad \le \quad q(d-j)
  \]
  and therefore
  \[
    qj \quad < \quad qd \ - \ d\cdot j\cdot\sum_{k=1}^{j}B^k\,,
  \]
  implying that 
  \begin{align*}
    \frac{q}{d} 
      &\quad <\quad  \frac{q}{j}-\sum_{k=1}^{j}B^k\\
      &\quad <\quad \left\lfloor\frac{q}{j}\right\rfloor + 1 -\sum_{k=1}^{j}B^k\\
      &\quad \le\quad  q_\ell + 1 -\sum_{k=1}^{j}B^k
          &&\text{since }\ell\le j\text{ and therefore }
              \left\lfloor\frac{q}{j}\right\rfloor \le q_\ell\\
      &\quad \le\quad  q_\ell -\sum_{k=1}^{j}B^k +jB^\ell-B +1
          &&\text{since }j,\ell\ge 1\\
      &\quad =\quad  x_\ell - (B-1)\\
      &\quad \le\quad  \ov{a}\transpose\,\ov{x} - c
          &&\text{since } c<B, a_\ell>0\text{, and }a_i\ge 0\text{ for all }i\in\set{1,\ldots,s}
  \end{align*}

  Next, we consider the case that $j<\ell\le s$. In this case,
  $\frac{q}{d}< q = x_\ell-B < \ov{a}\transpose\,\ov{x}-c$. This
  completes the verification of condition~(a).

  To verify condition~(b), consider arbitrary $\ov{a}\in\set{0,\ldots,B{-}1}^s$ and $c\in\set{0,\ldots,B{-}1}$.
  Then we get%
  \begin{align*}
    \ov{a}\transpose\,\ov{x}-c
      & \quad <\quad\sum_{i=1}^{s} Bx_i
         &&\text{since $c\ge0$ and $a_i<B$ for all $i\in\set{1,\ldots,s}$}\\
      &\quad =\quad  B\left(\sum_{i=1}^{j}x_i+\sum_{i=j+1}^{s}x_i\right)\\
      &\quad =\quad  B(q+(s{-}j)(q{+}B))
         &&\text{since $\sum_{i=1}^{j}x_i=q$ and $x_i=q{+}B$ for all $i>j$}\\
      &\quad <\quad  B\cdot (sq+sB) = sB(q+B)
         &&\text{since $j>0$}\\
      &\quad \le\quad  sqB^2
         &&\text{since }q,B\ge2\\
      &\quad <\quad qd 
         && \text{since }d>sB^2\,. 
  \end{align*}
  Therefore, condition~(b) is satisfied.

  It remains to verify the implication ``$\Rightarrow$'' of
  condition~(c). So let $\ov{a}=(a_1,\dots,a_s)\in\set{0,\ldots,B{-}1}^s$ and
  $c\in\set{0,\ldots,B{-}1}$ with $\ov{a}\transpose\,\ov{x}-c=q$. If there
  exists $i\in\set{j{+}1,\ldots,s}$ with $a_i>0$, then
  $q=\ov{a}\transpose\,\ov{x}-c\ge x_i-c=q+B-c>q$, a
  contradiction. Hence $a_{j+1}=\cdots=a_s=0$.

  We now distinguish the cases $\sum_{i=1}^{j}a_i=j$ and
  $\sum_{i=1}^{j}a_i\neq j$. In the former, we have
  \begin{align*}
      q &\quad =\quad  \sum_{i=1}^{j}a_ix_i-c\\
        &\quad =\quad  \sum_{i=1}^{j} a_i q_i
             - \sum_{i=1}^{j} a_i\Big(\sum_{k=1}^{j} B^k\Big) 
             + \sum_{i=1}^{j} (a_i j B^i) 
             - c\\
        &\quad \le\quad  \sum_{i=1}^{j} a_i \left(\frac{q}{j}+1\right)
             - \sum_{i=1}^{j} a_i\Big(\sum_{k=1}^{j} B^k \Big)
             + j \sum_{i=1}^{j} (a_i B^i) 
             - c\\
        &\quad =\quad  (q+j)-j\sum_{k=1}^{j} B^k + j\sum_{i=1}^{j} (a_iB^i) -c\,,
          &&\text{since }\sum_{i=1}^{j}a_i=j\,.
  \end{align*}
  Therefore,
  \[ 
       j\sum_{k=1}^{j} B^k \qquad \le \qquad j\sum_{i=1}^{j}(a_iB^i)+j-c\,.
  \]
  Since $|j-c|\le\max(j,c)<B$, this implies that
  \[
     -1 
     \quad<\quad
     \frac{c-j}{B}
     \quad \le \quad
     j\sum_{i=1}^{j}(a_i{-}1)B^{i-1} 
     \quad \in \quad \bZ
  \]
  Therefore,
  \[
     0
     \quad\le \quad
     \sum_{i=1}^{j}(a_i{-}1)B^{i-1}
     \quad\le \quad
     \left(\sum_{i=1}^{j}(a_i{-}1)\right)\cdot\left(\sum_{i=1}^{j}B^{i-1}\right)
     \quad = \quad
     0\,,
  \]
  since $\sum_{i=1}^{j}(a_i{-}1)=\sum_{i=1}^{j}a_i \ -  j =0$.
  Thus, we have $\sum_{i=1}^{j}(a_i{-}1)B^{i-1}=0$, and hence
  \[ 
       \sum_{i=1}^{j} B^{i-1} \qquad = \qquad \sum_{i=1}^{j}(a_iB^{i-1})\,.
  \]
  Since $0\leq a_i <B$ for all $i\in\set{1,\ldots,j}$, this implies that
  $a_1=a_2=\dots=a_j=1$. I.e., we obtain the implication ``$\Rightarrow$'' of condition (c).

  It remains to consider the case where $\sum_{i=1}^{j}a_i\neq j$.  As
  above, we obtain
  \begin{align*}
    q&\quad\le \quad    
       \sum_{i=1}^{j} a_i \left(\frac{q}{j}+1\right)
       - \sum_{i=1}^{j} a_i\Big(\sum_{k=1}^{j} B^k \Big)
       + j \sum_{i=1}^{j} (a_i B^i) -c\\
     &\quad= \quad    
       \sum_{i=1}^{j} a_i\cdot \frac{q}{j}
       \ + \ \sum_{i=1}^{j} a_i\left(1-\sum_{k=1}^{j} B^k\right)
       \ + \ j \sum_{i=1}^{j} (a_i B^i) \ - \ c\,.
  \end{align*}
  This implies that
  \[
    q\left(\frac{j-\sum_{i=1}^{j} a_i}{j}\right)
     \qquad \le\qquad \sum_{i=1}^{j}a_i\left(1-\sum_{k=1}^{j}B^k\right)
         \ + \ j\sum_{i=1}^{j}a_iB^i \ - \ c\,.
  \]
  Hence
  \begin{align*}
    q &\quad \leq\quad  \frac{j\left( \sum_{i=1}^{j} a_i\left( 1-\sum_{k=1}^{j}B^k
                                             \right)
                       +j\sum_{i=1}^{j} a_iB^i 
                       -c
                \right)}
              {j-\sum_{i=1}^{j}a_i} \\[1ex]
      &\quad \le\quad  \frac{\left|
                   j\left( \sum_{i=1}^{j} a_i\left( 1-\sum_{k=1}^{j}B^k
                                                  \right)
                           +j\sum_{i=1}^{j} a_iB^i 
                           -c
                    \right)
                \right|}
              {\left|j-\sum_{i=1}^{j}a_i\right|} \\[1ex]
      &\quad \le\quad  \left|
             j\left( \sum_{i=1}^{j} a_i\left( 1-\sum_{k=1}^{j}B^k
                                            \right)
                     +j\sum_{i=1}^{j} a_iB^i 
                     -c
              \right)
           \right|
        &&\text{ since }\textstyle|j-\sum_{i=1}^{j}a_j|>0 \\[1ex]
      &\quad \le\quad  j\left( \sum_{i=1}^{j} a_i\left( 1+\sum_{k=1}^{j}B^k
                                             \right)
                       +j\sum_{i=1}^{j}a_iB^i 
                       +c
            \right)\\[1ex]
      &\quad <\quad  j\left( jB + jB\cdot B^{j+1} + j B^{j+2} + B
          \right) &&\text{ since }a_i<B \text{ and }c<B\\[1ex]
      &\quad <\quad  4j^2 B^{j+2}\\
      & \quad < \quad d\cdot j\cdot B^{j+2}\,,
  \end{align*}
  contradicting our choice of $q$.
  Therefore, the case where $\sum_{i=1}^{j}a_i\neq j$ cannot occur.
  This completes the proof of Lemma~\ref{L-large-gaps}.
  \qed
\end{proof}

\subsection{Sets with large gaps are rich}\label{SS-large-gaps}

While Proposition~\ref{P-random-rich} shows that almost all sets of
natural numbers are rich, it does not provide us with a single such
set, let alone a natural one. Next we prove that sets with ``large gaps''
(defined as follows) are rich:

\begin{definition}\label{def:LargeGaps}
  A set $R\subseteq\bN$ \emph{has large gaps} if $R$ is infinite, $0\notin R$, and for
  every $k\in\bN$ with $k>0$ there exists $q\in R$ such that
  \begin{equation}\label{eq:def:LargeGaps}
    \left[\,\left\lfloor\frac{q}{k}\right\rfloor\,,\,k\cdot
      q\,\right]\ \cap\ R \ \ = \ \ \set{q}\,.
  \end{equation}
\end{definition}

Examples of sets with large gaps are $\setc{n^n}{n\in\bN}$,
$\setc{\lfloor 2^{n^c}\rfloor}{n\in\bN}$ for all reals $c>1$,
$\setc{n!}{n\in\bN}$ as well as all infinite subsets of these sets.
But note that neither the set $\setc{2^n}{n\in\bN}$ nor (by Bertrand's
postulate) the set of all primes has large gaps.

\begin{proposition}\label{P-large-gaps-rich}
  If $R\subseteq\bN$ has large gaps, then $R$ is rich.
\end{proposition}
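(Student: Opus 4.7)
The plan is to combine the large-gap property with Lemma~\ref{L-large-gaps} from the previous subsection, exploiting the isolated witness $q\in R$ that the lemma produces and then constructing a companion vector $\ov{y}$ via a small perturbation. After permuting coordinates I may assume $\ov{a_0}=(\underbrace{1,\dots,1}_{j\text{ entries}},0,\dots,0)$ with $j\ge 1$. I choose $B$ strictly larger than $j$, every entry of $\ov{a_0},\dots,\ov{a_u}$, and every $c_1,\dots,c_u$; I set $d_0 = 4s\sum_{i=1}^{j+1}B^i$ as required by Lemma~\ref{L-large-gaps}, and fix a sufficiently large ``safety factor'' $T\in\bN$.

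A preliminary observation I would verify first is that, because $R$ is infinite, for every $k$ and every threshold $N$ the large-gap property in fact yields some $q\in R$ with $q\ge N$ and $[\lfloor q/k\rfloor,kq]\cap R=\{q\}$; indeed, a witness $q$ for any parameter $k'\ge k$ automatically works for $k$, and if all such witnesses stayed bounded then, by pigeonhole, some fixed $q^*\in R$ would satisfy the gap condition for arbitrarily large $k'$, forcing $R=\{q^*\}$ and contradicting infiniteness. Applying this with $k=d_0 T$ and a threshold large enough for Lemma~\ref{L-large-gaps} to produce a vector in $(v+\bN)^s$, I obtain $q\in R$ and an $\ov{x}\in(v+\bN)^s$ with $\ov{a_0}\transpose\,\ov{x}=q$ and $\ov{a}\transpose\,\ov{x}-c\in(q/d_0,qd_0)\setminus\{q\}$ for every $(\ov{a},c)\in\{0,\dots,B{-}1\}^s\times\{0,\dots,B{-}1\}$ with $\ov{a}\neq\ov{0}$ and $(\ov{a},c)\neq(\ov{a_0},0)$. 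Specialising to each $(\ov{a_i},c_i)$ with $\ov{a_i}\neq\ov{0}$, and using $0\notin R$ when $\ov{a_i}=\ov{0}$, this shows $\ov{a_0}\transpose\,\ov{x}\in R$ while $\ov{a_i}\transpose\,\ov{x}-c_i\notin R$ for every $i\in\{1,\dots,u\}$.

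For the companion vector I take $\ov{y}=\ov{x}+\ov{\delta}$ with $\ov{\delta}\in\{0,\dots,M\}^s$ to be chosen. The requirements are (i) $\ov{a_0}\transpose\,\ov{\delta}\ge 1$, so that $\ov{a_0}\transpose\,\ov{y}$ lies strictly above $q$ but still inside the enlarged gap interval $[\lfloor q/(d_0T)\rfloor,d_0Tq]$, forcing $\ov{a_0}\transpose\,\ov{y}\notin R$; and (ii) $\ov{a_i}\transpose\,\ov{\delta}\neq q-(\ov{a_i}\transpose\,\ov{x}-c_i)$ for every $i\ge 1$ with $\ov{a_i}\neq\ov{0}$, so that $\ov{a_i}\transpose\,\ov{y}-c_i$ stays off the isolated point $q$ and, being still inside the enlarged gap, remains outside $R$. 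A counting argument---there are $M(M+1)^{s-1}$ vectors $\ov{\delta}$ satisfying (i), while each of the $u$ affine constraints in (ii) excludes at most $(M+1)^{s-1}$ vectors from $\{0,\dots,M\}^s$---yields an admissible $\ov{\delta}$ as soon as $M>u$.

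The main obstacle is precisely this perturbation step: a naive shift such as $\ov{\delta}=(1,0,\dots,0)$ can fail, because for some pair $(\ov{a_i},c_i)$ the translate $\ov{a_i}\transpose\,\ov{x}-c_i+\ov{a_i}\transpose\,\ov{\delta}$ may land exactly on $q$, which would put $\ov{a_i}\transpose\,\ov{y}-c_i$ into $R$ while $\ov{a_i}\transpose\,\ov{x}-c_i$ is outside $R$, breaking the parallel behaviour required by the definition of richness. Granting $\ov{\delta}$ enough flexibility by ranging over the grid $\{0,\dots,M\}^s$, and choosing $T$ large enough that all the perturbed values stay trapped in the enlarged gap interval, are the two ingredients that together make the argument go through.
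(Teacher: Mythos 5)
Your proof is correct, and it constructs the companion vector $\ov{y}$ by a genuinely different route from the paper's. The paper invokes Lemma~\ref{L-large-gaps} a second time, at the smaller scale $q'=\lfloor(q+d+1)/(d+1)\rfloor$, and observes that the resulting interval $[\lfloor q'/d\rfloor, q'd]$ containing all the inner products $\ov{a_i}\transpose\,\ov{y}-c_i$ with $\ov{a_i}\neq\ov{0}$ lies strictly between $\lfloor q/(d(d{+}1))\rfloor$ and $q$, hence misses $R$ entirely; thus every inner product at scale $q'$ is automatically outside $R$ and no counting is needed. You instead reuse the same $\ov{x}$ and set $\ov{y}=\ov{x}+\ov{\delta}$: the requirement $\ov{a_0}\transpose\,\ov{\delta}\ge1$ pushes $\ov{a_0}\transpose\,\ov{y}$ strictly above $q$ and out of $R$, and the affine-avoidance count over $\ov{\delta}\in\set{0,\dots,M}^s$ with $M>u$ ensures that no other inner product is shifted onto the isolated point $q$. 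Both arguments exploit the same enlarged-gap property around a single $q\in R$, and your preliminary observation that infiniteness plus large gaps yields arbitrarily large gap witnesses is exactly Claim~\ref{claim:largegaps-newclaim} in the paper. The paper's version trades your perturbation for a second, smaller-scale application of Lemma~\ref{L-large-gaps}, which keeps the verification uniform across all $i$; your version is conceptually more elementary (perturb to dodge one bad point) at the cost of a finite counting step and slightly heavier bookkeeping, since $T$ must be chosen large enough to keep the shifted values inside the gap interval and $q$ large enough relative to $M$, $B$, $s$.
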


\begin{proof}
  Let $s,u,v\in\bN$, $\ov{a_0}\in\set{0,1}^s\setminus\set{\ov{0}}$,
  $\ov{a_1},\dots,\ov{a_u}\in\bN^s$, and $c_1,\dots,c_u\in\bN$ with
  $(\ov{a_0},0)\neq(\ov{a_i},c_i)$ for all $i\in\set{1,\ldots,u}$. For
  notational simplicity, we set $c_0\deff 0$. We can, without loss of
  generality, assume that
  \[
     \ov{a_0} \quad = \quad \big(\,\underbrace{1,\dots,1}_{j\text{ entries}},0,\dots,0\,\big)\,.
  \]

  Let $B>j$ be larger than any of the entries in $\ov{a_i}$ and
  larger than any of the numbers $c_i$, for all $i\in\set{0,\ldots,u}$. Let
  furthermore $d\deff 4s\sum_{1\le i\le j+1}B^i$.  
 \begin{claim}\label{claim:largegaps-newclaim}
  There are infinitely many values $q\in\bN$ such that
  \begin{equation}
    \label{eq:gross}
      \left[\;\left\lfloor\frac{q}{d(d{+}1)}\right\rfloor\;,\;d(d{+}1)\cdot
        q\;\right]
      \ \cap\ R
            \quad = \quad\set{q}\,.
  \end{equation}
 \end{claim}
 \begin{proof}
   For each $q\in R$ let $k_q\in\NN$ be the maximal value $k$ such that
   \eqref{eq:def:LargeGaps} holds (note that Definition~\ref{def:LargeGaps} implies that $k_q$ does exist 
   for every $q\in R$).
   Since $R$ has
   large gaps, the set $\setc{k_q}{q\in R}$ contains arbitrarily large
   elements. Thus, the set $\setc{k_q}{q\in R, \ k_q\geq d(d{+}1)}$ is infinite.
   Since every $q\in R$ with $k_q\geq d(d{+}1)$ 
   satisfies \eqref{eq:gross}, the claim follows.
 \end{proof}

 \bigskip

  By Lemma~\ref{L-large-gaps} and
  Claim~\ref{claim:largegaps-newclaim}, there exists a $q>vd(d+1)$
  satisfying~\eqref{eq:gross} and there exist vectors $\ov{x},\ov{y}\in(v+\bN)^s$ such that
  for all $i\in\set{0,\ldots,u}$, the following holds with
  $q'=\left\lfloor\frac{q+d+1}{d+1}\right\rfloor$:
  \begin{enumerate}[(a)]
  \item If $\ov{a_i}\neq\ov{0}$, then
    $\frac{q}{d}<\ov{a_i}\transpose\,\ov{x}-c_i$ and
    $\frac{q'}{d}<\ov{a_i}\transpose\,\ov{y}-c_i$.
  \item $\ov{a_i}\transpose\,\ov{x}-c_i<qd$ and $\ov{a_i}\transpose\,\ov{y}-c_i<q'd$.
  \item $\ov{a_i}\transpose\,\ov{x}-c_i=q$ \ if, and only if, \ $i=0$.
  \end{enumerate}

  In particular, \eqref{eq:gross} implies that
  \[
     \left[\,\left\lfloor\frac{q}{d}\right\rfloor\,,\,dq\,\right]\
     \cap \ R \quad =\quad\set{q}\,.
  \]

  Consequently, (a)--(c) imply for all $i\in\set{0,\ldots,u}$ with
  $\ov{a_i}\neq\ov{0}$ that
  \begin{equation}
    \label{eq:gaps-x}
    \ov{a_i}\transpose\,\ov{x}-c_i\in R \quad \iff \quad i=0\,.
  \end{equation}
  The same holds if $\ov{a_i}=\ov{0}$, since then
  \ $\ov{a_i}\transpose\,\ov{x}-c_i\ = \ -c_i\ \le\ 0$ implies that
  \ $\ov{a_i}\transpose\,\ov{x}-c_i \notin R$ \ and \ $i\neq 0$.

  We now consider $q'=\lfloor\frac{q+d+1}{d+1}\rfloor$ and $\ov{y}$.
  Clearly,
  \begin{align*}
    \frac{q'}{d}
      &\quad =\quad \frac{1}{d}\cdot\left\lfloor\frac{q+d+1}{d+1}\right\rfloor\\
      &\quad \ge \quad\frac{1}{d}\cdot\left(\frac{q+d+1}{d+1}-1\right)\\
      &\quad =\quad \frac{q}{d(d+1)}
  \end{align*}
  and
  \begin{align*}
    q'd &\quad =\quad d\cdot\left\lfloor\frac{q+d+1}{d+1}\right\rfloor\\
     &\quad \le\quad d\cdot\frac{q+d+1}{d+1}\quad = \quad \frac{d}{d+1}q+d\\
     &\quad <\quad \frac{d}{d+1}q+ \frac{1}{d+1}q
       &&\text{since }vd(d+1)<q\\
     &\quad =\quad q\,.
  \end{align*}
  Consequently, we have
  \[
     \left[\,\left\lfloor\frac{q'}{d}\right\rfloor\,,\,q'd\,\right]\cap R
     \quad \subseteq \quad
     \left[\,\left\lfloor\frac{q}{d(d{+}1)}\right\rfloor\,,\,q{-}1\,\right]\cap R
     \quad \stackrel{\eqref{eq:gross}}{=} \quad \emptyset\,.
  \]

  Hence, (a) and (b) imply for all $i\in\set{0,\ldots,u}$ with
  $\ov{a_i}\neq\ov{0}$ that
  \begin{equation}
    \label{eq:y}
    \ov{a_i}\transpose\,\ov{y}-c_i \ \notin\ R
  \end{equation}
  Since $\ov{a_0}\neq\ov{0}$, \eqref{eq:gaps-x} and \eqref{eq:y} imply that
  \[
     \ov{a_0}\transpose\,\ov{x}\ \in \ R \qquad \iff\qquad
     \ov{a_0}\transpose\,\ov{y}\ \notin\ R\,.
  \]
  Similarly, for all $i\in\set{1,\ldots,u}$ with $\ov{a_i}\neq\ov{0}$,
  \eqref{eq:gaps-x} and \eqref{eq:y} implies that
  \[
     \ov{a_i}\transpose\,\ov{x}-c_i \ \in \ R \qquad \iff \qquad
     \ov{a_i}\transpose\,\ov{y}-c_i\ \in\ R\,.
  \]
  Finally, for all $i\in\set{1,\ldots,u}$ with $\ov{a_i}=\ov{0}$, we get
  \[
     \ov{a_i}\transpose\,\ov{x}-c_i \ \ = \ \ -c_i \ \ = \ \ 
     \ov{a_i}\transpose\,\ov{y}-c_i
  \]
  and therefore
  \[
     \ov{a_i}\transpose\,\ov{x}-c_i\ \in\ R\qquad\iff\qquad
     \ov{a_i}\transpose\,\ov{y}-c_i\ \in\ R\,.
  \]
  Hence $R$ is rich.
 This completes the proof of Proposition~\ref{P-large-gaps-rich}.
 \qed
\end{proof}

\end{document}